\renewcommand {\a}{ \alpha }
\renewcommand{\b}{\beta}
\newcommand{\g}{\gamma}
\newcommand{\s}{\sigma}
\renewcommand{\l}{\lambda}
\renewcommand{\L}{\Lambda}
\newcommand{\z}{\zeta}
\renewcommand{\t}{\theta}
\newcommand{\p}{\partial}
\newcommand{\om}{\omega}
\newcommand{\Om}{\Omega}
\newcommand{\oq}{\ {\raise 7pt\hbox{${\scriptstyle\circ}$}}
\kern -7pt{
\hbox{$Q$}}}
\newcommand{\R}{ \mathbb R}
\newcommand{\N}{ \mathbb N}
\newcommand{\Rd}{ \mathbb R^d}
\newcommand {\GA}{\mathfrak A}
\newcommand {\GF}{\mathfrak F}
\newcommand {\GH}{\mathfrak H}
\newcommand {\GL}{\mathfrak L}
\newcommand {\GV}{\mathfrak V}
\newcommand {\GW}{\mathfrak W}
\newcommand {\GU}{\mathfrak U}
\newcommand {\GX}{\mathfrak X}
\newcommand {\ba}{\mathbf a}
\newcommand {\BD}{\mathbf D}
\newcommand {\BM}{\mathbf M}
\newcommand {\BS}{\mathbf S}
\newcommand {\BT}{\mathbf T}
\newcommand {\bx}{\mathbf x}
\newcommand {\be}{\mathbf e}
\newcommand {\bh}{\mathbf h}
\newcommand {\bk}{\mathbf k}
\newcommand {\bz}{\mathbf z}
\newcommand {\by}{\mathbf y}
\newcommand {\bt}{\mathbf t}
\newcommand {\bs}{\mathbf s}
\newcommand {\bu}{\mathbf u}
\newcommand {\bv}{\mathbf v}
\newcommand {\bn}{\mathbf n}
\newcommand {\bnu}{\boldsymbol\nu}
\newcommand {\bmu}{\boldsymbol\mu}
\newcommand {\bth}{\boldsymbol\theta}
\newcommand {\Bth}{\boldsymbol\Theta}
\newcommand {\boldeta}{\boldsymbol\eta}
\newcommand {\bphi}{\boldsymbol\phi}
\newcommand {\bxi}{\boldsymbol\xi}
\newcommand {\BXi}{\boldsymbol\Xi}
\newcommand {\Bxi}{\boldsymbol\Xi}
\newcommand{\bchi}{\boldsymbol\chi}
\newcommand{\lu}{\langle}
\newcommand{\ru}{\rangle}
\newcommand{\CV}{\mathcal V}
\newcommand{\CR}{\mathcal R}
\newcommand{\CX}{\mathcal X}
\newcommand{\CO}{\mathcal O}
\newcommand{\CP}{\mathcal P}
\newcommand{\CA}{\mathcal A}
\newcommand{\CM}{\mathcal M}
\newcommand{\CC}{\mathcal C}
\newcommand{\plainC}[1]{\textup{{\textsf{C}}}^{#1}}
\newcommand{\plainS}{\textup{{\textsf{S}}}}
\DeclareMathOperator{\tr}{{tr}}
\DeclareMathOperator{\card}{{card}}
\newcommand{\1}
{{\,\vrule depth3pt height9pt}{\vrule depth3pt height9pt}
{\vrule depth3pt height9pt}{\vrule depth3pt height9pt}\,}
\DeclareMathOperator \vol{{vol}}
\DeclareMathOperator{\op}{{Op}}
\DeclareMathOperator {\ad}{{ad}}
\newtheorem{thm}{Theorem}[section]
\newtheorem{cor}[thm]{Corollary}
\newtheorem{cla}[thm]{Claim}
\newtheorem{lem}[thm]{Lemma}
\newtheorem{prop}[thm]{Proposition}
\theoremstyle{definition}
\newtheorem{defn}[thm]{Definition}
\newtheorem{rem}[thm]{Remark}
\numberwithin{equation}{section}
\newcommand{\bee}{\begin{equation}}
\newcommand{\ene}{\end{equation}}
\newcommand{\bees}{\begin{equation*}}
\newcommand{\enes}{\end{equation*}}
\newcommand{\bes}{\begin{split}}
\newcommand{\ens}{\end{split}}
\newcommand{\bet}{\begin{thm}}
\newcommand{\ent}{\end{thm}}
\newcommand{\bel}{\begin{lem}}
\newcommand{\enl}{\end{lem}}
\newcommand{\bec}{\begin{cor}}
\newcommand{\enc}{\end{cor}}
\newcommand{\becl}{\begin{cla}}
\newcommand{\encl}{\end{cla}}
\newcommand{\bep}{\begin{proof}}
\newcommand{\enp}{\end{proof}}
\newcommand{\ber}{\begin{rem}}
\newcommand{\enr}{\end{rem}}
\newcommand{\ep}{\varepsilon}
\newcommand{\la}{\lambda}
\newcommand{\La}{\Lambda}
\newcommand{\de}{\delta}
\newcommand{\al}{\alpha}
\newcommand{\Z}{\mathbb Z}
\newcommand{\Ga}{\Gamma}
\newcommand {\BUps}{\boldsymbol\Upsilon}
\newcommand{\CF}{\mathcal F}
\def\square{\RIfM@\bgroup\else$\bgroup\aftergroup$\fi
  \vcenter{\hrule\hbox{\vrule\@height.6em\kern.6em\vrule}\hrule}\egroup}
\begin{document}

\hoffset -4pc

\title[Density of states]
{Complete asymptotic expansion of the integrated density of states of multidimensional almost-periodic Schr\"odinger operators}
\author[L. Parnovski \& R. Shterenberg]
{Leonid Parnovski \& Roman Shterenberg}
\address{Department of Mathematics\\ University College London\\
Gower Street\\ London\\ WC1E 6BT\\ UK}
\email{Leonid@math.ucl.ac.uk}
\address{Department of Mathematics\\ University of Alabama at Birmingham\\ 1300 University Blvd.\\
Birmingham AL 35294\\ USA}
\email{shterenb@math.uab.edu}

\keywords{Periodic operators, almost-periodic pseudodifferential operators, integrated density of states}
\subjclass[2000]{Primary 35P20, 47G30, 47A55; Secondary 81Q10}

\date{\today}


\begin{abstract}
We prove the complete asymptotic expansion of the integrated density of states of a
Schr\"odinger operator $H=-\Delta+b$ acting in $\R^d$ when the potential $b$ is either
smooth periodic, or generic quasi-periodic (finite linear combination of exponentials),
or belongs to a wide class of almost-periodic functions.
\end{abstract}

\



\maketitle
\vskip 0.5cm

\renewcommand{\uparrow}{{\mathcal {L F}}}
\newcommand{\ssharp}{{\mathcal {L E}}}
\renewcommand{\natural}{{\mathcal {NR}}}
\renewcommand{\flat}{{\mathcal R}}
\renewcommand{\downarrow}{{\mathcal {S E}}}

\section{Introduction}

We consider the Schr\"odinger operator
\bee\label{eq:Sch}
H=-\Delta+b
\ene
acting in $L_2(\R^d)$. The potential $b=b(\bx)$ is assumed to be real, smooth, and either periodic, or almost-periodic;
in the almost-periodic case we assume that all the derivatives of $b$ are almost-periodic as well.
We are interested in the asymptotic behaviour of the (integrated) density of states $N(\la)$ as
the spectral parameter $\la$ tends to infinity. The density of states of $H$
can be defined by the formula
\bee\label{eq:def1}
N(\lambda)=N(\la;H): = \lim_{L\to\infty}
\frac{N(\lambda;H^{(L)}_D)}
{(2L)^d}.
\ene
Here, $H^{(L)}_D$ is the restriction of $H$ to the cube $[-L,L]^d$ with the Dirichlet boundary
conditions, and $N(\lambda;A)$ is the counting function of the discrete spectrum of $A$.
Later, we will give equivalent definitions of $N(\la)$ which are more convenient to work with.
If we denote by $N_0(\lambda)$ the density of states of the unperturbed operator
$H_0=-\Delta$, one can easily see that for positive $\la$ one has
\bee
N_0(\lambda)=C_d\la^{d/2},
\ene
where
\bee
C_d=\frac{w_d}{(2\pi)^d} \ \text{and} \ w_d=\frac{\pi^{d/2}}{\Gamma(1+d/2)}
\ene
is the volume of the unit ball in $\R^d$. There is a long-standing conjecture that, at least in the case of periodic $b$,
the density of states of $H$
enjoys the following asymptotic behaviour as $\la\to\infty$:
\bee\label{eq:intr0}
N(\lambda)\sim \la^{d/2}\Bigl(C_d+\sum_{j=1}^\infty e_j\lambda^{-j}\Bigr),
\ene
meaning that for each $K\in\N$ one has
\bee\label{eq:intr1}
N(\lambda)=\la^{d/2}\Bigl(C_d+\sum_{j=1}^K e_j\lambda^{-j}\Bigr)+R_K(\lambda)
\ene
with $R_K(\lambda)=o(\la^{\frac{d}{2}-K})$. In those formulas, $e_j$ are real numbers which depend on the potential $b$.
They can be calculated relatively easily using the heat kernel invariants (computed in \cite{HitPol}); they are equal to certain
integrals of the potential $b$ and its derivatives.
Indeed, in the paper \cite{KorPush}, all these coefficients were computed; in particular, it turns out
that, if $d$ is even, then $e_j$ vanish whenever $j>d/2$.

Until recently, formula \eqref{eq:intr0} was proved only in the case $d=1$ in \cite{ShuSche} for periodic $b$ and
in \cite{Sav} for almost-periodic $b$. In the recent paper \cite{ParSht}, this formula was proved in the case
$d=2$ and periodic potential.
In the periodic case and $d\ge 3$, only partial results are known, see \cite{HelMoh}, \cite{Kar},
\cite{Kar1}, \cite{ParSob}, \cite{Skr}. In particular, in \cite{Kar} it was shown that
formula \eqref{eq:intr1} is valid with $K=1$ and $R(\la)=O(\lambda^{-\de})$ with some small positive $\de$ when
$d= 3$
and $R(\la)=O(\lambda^{\frac{d-3}{2}}\ln\la)$ when $d>3$. Finally, in the multidimensional almost-periodic case,
formula \eqref{eq:intr1} is known only with $K=0$ and $R(\la)=O(\la^{\frac{d-2}{2}})$, see \cite{Shu}.

The aim of our paper is to prove formula \eqref{eq:intr1} with arbitrary $K$ for all dimensions $d$ and
for periodic or almost-periodic potentials. In the case of periodic potential, we do not impose any additional
assumptions (besides infinite smoothness) on it. However, if the potential $b$ is almost-periodic, we need it
to satisfy certain extra conditions; since the formulation of them requires several definitions,
we will list these conditions and formulate our main result in the next section.

Now we discuss the difference in the approaches of \cite{ParSht} and this paper. To begin with, let us assume
that the potential $b$ is periodic. Then we can perform the Floquet-Bloch decomposition (see, e.g., \cite{ReeSim})
and express the operator $H$ as a direct integral
\bee\label{eq:decper}
H=\int_{\oplus}H(\bk)d\bk,
\ene
quasi-momentum $\bk$ running over $\CO^{\dagger}$ --
the cell of the lattice $\Ga^{\dagger}$, dual to the lattice of periods $\Ga$. The very first thing
we need to do is to replace definition  \eqref{eq:def1} with
a different one. There are two problems with definition  \eqref{eq:def1}. The first problem is that this definition is rather
difficult to work with. The second problem is that this definition makes sense only for differential operators. And, although
we are working with a differential operator \eqref{eq:Sch} in the beginning, our methods require us to replace this operator with a pseudodifferential
one, and so we need a definition which works for pseudodifferential operators as well.
In the periodic case, the alternative definition is given by the formula
\bee\label{eq:densityperiodic}
N(\lambda):=\frac{1}{(2\pi)^d}\int_{\CO^{\dagger}}N(\la,H(\bk))d\bk,
\ene
where $N(\la,H(\bk))$ is the eigenvalue counting function of $H(\bk)$.
The first
step of obtaining information on the density of states is to compute the precise asymptotics of the eigenvalues of
$H(\bk)$. There are two different approaches to doing this. The first method, called the method of spectral projections,
was developed in \cite{Par}. When using this method,  we
study, instead of $H$, the operator $\tilde H=\sum_j P_jHP_j$, where $\{P_j\}$ are spectral projections
of the unperturbed operator $H_0:=-\Delta$. It was shown in \cite{Par} that, if we carefully choose these projections, then
the spectra of $H(\bk)$ and $\tilde H(\bk)$ are close to each other. Next, we can decompose the operator $\tilde H$ into
invariant subspaces. There are two types of such subspaces. The first type (called stable, or non-resonant subspaces) corresponds to
eigenvalues of $H$ which are far away from other eigenvalues; in studying them we can use straightforward perturbation theory
to compute their precise asymptotic behaviour. This was done in \cite{Par}, but in certain cases
such computations were performed earlier (see, for example, \cite{Kar1} and \cite{Vel}). The second type of
subspaces (called unstable, or resonant) corresponds to clusters of eigenvalues of $H$ lying close to each other.
In order to study these eigenvalues, we have to use perturbation theory of multiple eigenvalues, and this theory is much more difficult
and less precise than in the stable case. The methods of \cite{Par} allow us to reduce the study of resonant eigenvalues to the study of a family
of operators $A+\ep B$ when $\ep\to 0$. Here, $A$ and $B$ are finite-dimensional self-adjoint operators and $\ep\sim \la^{-1/2}$
is a small parameter. We are interested in the eigenvalues of $A+\ep B$ which are perturbations of zero eigenvalues of $A$. Of course,
we can write the formula $\la(A+\ep B)\sim\sum \la_j\ep^j$, see \cite{Kat}, but the coefficients $\la_j$ will, in general, be unbounded
functions of the quasi-momentum and, therefore, we cannot integrate these asymptotic expansions against $d\bk$. Paper
\cite{ParSht} deals with this problem in the case $d=2$. We study the operator $PBP$, where $P$ is the orthogonal projection
onto the kernel of $A$, and show that the cluster of eigenvalues of this operator has multiplicity at most two. This allows us,
using the Weierstrass Preparation Theorem, to prove that the eigenvalues of $A+\ep B$ enjoy the asymptotic formula
$\la(A+\ep B)\sim\sum \la_j\ep^j\pm\sqrt{\sum \tilde\la_j\ep^j}$, where the coefficients $\la_j$ and $\tilde\la_j$ are bounded
functions of the quasi-momentum $\bk$ and so can be integrated against $d\bk$. Unfortunately, this approach does not work if $d\ge 3$, since
then the cluster multiplicity of $PBP$ becomes unbounded.

The second method of obtaining asymptotic formulas for the eigenvalues of $H(\bk)$ was developed in \cite{Sob} and \cite{Sob1} and
was also used in \cite{ParSob}. This method, which we call the gauge transform method, consists of constructing two pseudodifferential operators,
$H_1$ and $H_2$. Here, $H_1=e^{i\Psi}H e^{-i\Psi}$, where $\Psi$ is a bounded periodic self-adjoint pseudo-differential operator of order $0$.
Thus, the eigenvalues of $H_1(\bk)$ coincide with the eigenvalues of $H(\bk)$. The operator $H_2$ is close to $H_1$ in norm; also,
operators $H_2(\bk)$ have a lot of invariant subspaces. As in the previous method, these invariant subspaces can be generated by
stable and unstable eigenvalues, and the case of the stable eigenvalues can be treated completely (i.e. the complete asymptotic formula for such
eigenvalues can be obtained). The difference with the previous approach lies in the form of the restriction of the operator $H_2$ to a
subspace generated by a resonant eigenvalue. Let $\bxi$ be a point in the phase space lying in a resonant region generated by a lattice subspace $\GV$
(see Section 5 for the definitions and more details).
Then $H_2$, restricted to the invariant subspace generated by $\bxi$, has a form $r^2I+S(r)$,
where $r$ is, essentially, the distance from $\bxi$
to $\GV$ and $S(r)$ is a finite-dimensional self-adjoint operator which can grow in $r$, but slower than $r^2$. Using the method of spectral projections, we could achieve that $S(r)$ has a simple form, namely, $S(r)=rA+B=r(A+\ep B)$, where $\ep=r^{-1}$, which was the advantage of that method.
The advantage of the method of gauge transform is that all eigenvalues of the reduced operator  $r^2I+S(r)$ contribute to the integrated density
of states, whereas in the method of spectral projections only the eigenvalues coming from the zero eigenvalues of $A$ (and not even all such eigenvalues)
were of interest to us. This observation makes the method of gauge transform much more convenient to use, despite the operator $S(r)$ being more complicated
than in the method of spectral projections. Indeed, the fact that all the eigenvalues contribute to the density of states allows us
to use the residue theorem in order to compute the sum of contributions
from all eigenvalues without computing the contributions from individual eigenvalues, see \eqref{eq:n4} and \eqref{eq:residues}. In fact, formula \eqref{eq:residues} is the most crucial
observation which has enabled us to compute the contribution to the density of states from the resonance regions.

When we were working on the details of this approach, we have realized that, as a matter of fact, 
the decomposition \eqref{eq:decper} is not required, and all
the steps can be written for the `global' operators $H$ without any references to the `fibre' operators $H(\bk)$. This led us to believe that this
method is likely to be applicable in a range of other settings. In particular, it turned out that this method works for quasi-periodic and almost-periodic potentials.
So, let us assume that the potential $b$ is almost-periodic. What is the analogue of definition \eqref{eq:densityperiodic} in this case? The answer to this
question can be found in \cite{Shu}. One possible way of defining the density of states is via the von Neumann algebras; we discuss this approach later in our paper.
However, the ultimate definition is as follows: let $e(\la;\bx,\by)$ be the kernel of the spectral projection of an elliptic pseudo-differential operator
of positive order with almost-periodic coefficients. Then, it was proved in Theorem 4.1 of \cite{Shu}, that the density of states of this operator satisfies
(at least at its continuity points):
\bee\label{eq:densityalmost}
N(\la)=\BM_{\bx}(e(\la;\bx,\bx)),
\ene
where $\BM$ is the mean of an almost-periodic function. Our main tool in the proof will be formula \eqref{eq:densityalmost}, but we will use the operator-algebraic definition
sometimes (for example, to show that density of states decreases when the operator increases, something not immediately obvious from \eqref{eq:densityalmost}).
Another useful observation which helped us in extending our results to the almost-periodic case is this. Let $A$ be an  elliptic pseudo-differential operator
with almost-periodic coefficients. We are usually assuming that $A$ acts in $L_2(\R^d)$. However, we can consider actions of $A$ (via the same
Fourier integral operator formula) in different vector spaces, for example in the Besicovitch space $B_2(\R^d)$. The space $B_2(\R^d)$ is the space of all formal
sums
\bees
\sum_{j=1}^\infty a_{j}\be_{\bth_j}(\bx),
\enes
where
\bee
\be_{\bth}(\bx):=e^{i\bth\bx}
\ene
and $\sum_{j=1}^\infty |a_{j}|^2<+\infty$. It is known (see \cite{Shu0}) that the spectra of $A$ acting in $L_2(\R^d)$ and $B_2(\R^d)$ are the same, although the types of
those spectra can be entirely different. It is very convenient, when working with the gauge transform constructions, to assume that all the operators involved
act in $B_2(\R^d)$, although in the end we will return to operators acting in $L_2(\R^d)$. This trick (working with operators acting in $B_2(\R^d)$)
is similar to working with fibre operators $A(\bk)$ in the periodic case in a sense that we can freely consider the action of an operator on one, or finitely many,
exponentials, without caring that these exponentials do not belong to our function space.

It seems likely that the approach of this paper can be applied to a wider class of operators than \eqref{eq:Sch}. The operators should be of the type $H=H_0+b$, where
$H_0$ has constant coefficients, and $b$ has order smaller than $H_0$. We plan to consider such operators in a subsequent publication.

Now we describe the structure of this paper.
The proof of our main theorem consists of several parts, which are not always immediately related to each other; in particular, there
is no natural order in which these parts should be presented. As a result, it is possible to read different sections of our paper in almost
arbitrary order. The main principles we were following in determining the actual order of the sections were: trying to postpone the most difficult
and technical parts of the proof for as long as possible, and trying to minimize the amount of references to definitions/results stated after the reference.
In particular, Section 6 of this paper can be considered as a further general discussion of our approach
which we have decided to postpone until the definitions and results of Sections 2--5 have been introduced.
In Section 2, we give some basic
definitions, formulate the conditions we impose on the potential and state the main result. In Section 3, we explain why, instead of proving \eqref{eq:intr1}, it would be sufficient to prove a more general asymptotic formula \eqref{eq:main_thm1} (which includes more
powers of $\la$ as well as logarithms). We also explain why
it is enough to prove this asymptotic formula not for all large $\la$, but only for $\la$ inside
a fixed interval. The proof of these statements (as well as reasons why we need them) is similar to the corresponding section in \cite{ParSht}. In Section 4, we describe
the definition of the density of states based on the operator algebraic constructions and prove several useful properties of $N(\la)$ which immediately follow from
these constructions. In Section 5, we define resonance regions and prove their properties. The reader who has read several of the papers  \cite{Sob1}, \cite{Par}, \cite{BarPar}, \cite{ParSht}, \cite{ParSob} may have noticed that in each of these papers the construction of the resonance regions is slightly different. The reason is that
each time we define these regions, we need to fine tune the definition taking care of the problem we are trying to solve. Our present paper is not an exception,
and the construction of the resonance regions in Section 5 is different from the constructions in all papers mentioned above. This new construction will be extremely
convenient when we are going to integrate the contribution from individual eigenvalues to the density of states. In Section 6, we describe this procedure of integrating
the contribution from individual eigenvalues over the resonance zones in more detail. In Section 7, we introduce the coordinates in each resonance region (or rather we
cut each resonance region into pieces and introduce coordinates in each piece). These coordinates are introduced so that the integration, described in Section 6, will be
as painless as it possibly can. Each resonance region will have two types of coordinates. The first type is Cartesian coordinates in $\GV$, where $\GV$ is the quasi-lattice
subspace generating the resonance region. The second set of coordinates is the shifted polar coordinates in $\GV^{\perp}$. These coordinates are ideologically similar
to the shifted polar coordinates we have introduced in \cite{ParSht}, but the details are much more complicated now. Starting from Section 7, until the end of Section 10, we will assume that all the regions where the integration takes place are of the simplest possible type (the simplex case).
In Sections 8 and 9, we discuss the main tool
of this paper, the gauge transform method. A large proportion of the material contained in these two sections is similar to the relevant parts of \cite{ParSob}, the only difference being definition  \eqref{1b1:eq} (we need to change the norm to accommodate it to the case of almost-periodic coefficients) and Lemma \ref{lem:symbol} (this lemma was not required
in \cite{ParSob}). In Section 10, we compute the contribution to the density of states from each resonance region and, first, reduce this contribution to
the explicit integral  \eqref{eq:in2} and then, in Lemma \ref{lem:integral1}, prove that this integral admits a decomposition in the powers of $\la$ and logarithms. Finally, in Section 11, we discuss how to reduce integration over the region of arbitrary shape to the
simplex case.

{\bf Acknowledgement.} We are very grateful to Michael Levitin for running computer experiments which eventually led us
to the proper understanding of formula \eqref{eq:residues} and to Eugene Shargorodsky for useful discussions. Thanks also go to
Gerassimos Barbatis and Sergey Morozov who have read the preliminary version of this paper and made several useful comments and suggestions.
The work of the first author
was partially supported by the EPSRC grant EP/F029721/1. The second author was partially supported by the NSF grant DMS-0901015.

\section{Preliminaries}

Since our potential $b$ is almost-periodic, it has the Fourier series
\bee\label{eq:potential}
b(\bx)\sim\sum_{\bth\in\Bth}a_{\bth}\be_{\bth}(\bx),
\ene
where
\bee
\be_{\bth}(\bx):=e^{i\bth\bx},
\ene
and $\Bth$ is a (countable) set of frequencies.
Without loss of generality we assume that $\Bth$ spans $\R^d$, and contains $0$ and is symmetric about $0$; we also put
\bee\label{eq:algebraicsum}
\Bth_k:=\Bth+\Bth+\dots+\Bth
\ene
(algebraic sum taken $k$ times) and $\Bth_{\infty}:=\cup_k\Bth_k=Z(\Bth)$, where for a set $S\subset \R^d$ by $Z(S)$ we denote the set of all finite linear combinations of elements in
$S$ with integer coefficients. The set $\Bth_\infty$ is countable and non-discrete (unless the potential $b$ is periodic).
The first condition we impose on the potential is:

{\bf Condition A}. Suppose that $\bth_1,\dots,\bth_d\in \Bth_\infty$. Then $Z(\bth_1,\dots,\bth_d)$ is discrete.

It is easy to see that this condition can be reformulated like this:
suppose, $\bth_1,\dots,\bth_d\in \Bth_\infty$.
Then
either $\{\bth_j\}$ are linearly independent, or $\sum_{j=1}^d n_j\bth_j=0$, where $n_j\in\Z$ and not all
$n_j$ are zeros. This reformulation shows that Condition A is generic: indeed, if we are choosing frequencies
of $b$ one after the other, then on each step we have to avoid choosing a new frequency from a countable set of
hyperplanes, and this is obviously a generic restriction. Condition A is obviously satisfied for periodic
potentials, but it becomes meaningful for quasi-periodic potentials 
(we call a function quasi-periodic, if it is a linear combination of finitely many exponentials).

The rest of the conditions we have to impose describe how well we can approximate the potential $b$
by means of quasi-periodic functions. In the proof we are going to work
with quasi-periodic approximations of $b$, and we need these conditions to make sure that
all estimates in the proof are uniform with respect to these approximations.

{\bf Condition B}.
Let $k$ be an arbitrary fixed natural number. Then for each sufficiently large real number $\rho$
there is a finite set $\Bth(k;\rho)\subset(\Bth\cap B(\rho^{1/k}))$ (where $B(r)$ is a ball of radius $r$ centered at
$0$) and a `cut-off' potential
\bee\label{eq:condB1}
b_{(k;\rho)}(\bx):=\sum_{\bth\in\Bth(k;\rho)}\tilde a_{\bth}\be_{\bth}(\bx)
\ene
which satisfies
\bee\label{eq:condB2}
||b-b_{(k;\rho)}||_{\infty}<\rho^{-k}.
\ene

\ber
First of all, notice that we can reformulate this condition like this: for each (small) $\al>0$ and (small) $\epsilon>0$
there is a `cut-off' potential $b_{(\al,\epsilon)}$ so that $||b-b_{(\al;\epsilon)}||_{\infty}<\epsilon$ and
the frequencies of $b_{(\al;\epsilon)}$ lie inside the ball of radius $\epsilon^{-\al}$. However, it will be rather
more convenient in what follows to have Condition B formulated in terms of $k$ and $\rho$.
This condition is obviously satisfied for quasi-periodic potentials; for periodic potentials it is equivalent to the
infinite smoothness. For almost-periodic potentials Condition B does not seem to follow from the infinite smoothness of $b$.
Note that we do not require the coefficients $\tilde a_{\bth}$ to be equal to the `old'
coefficients $a_{\bth}$; indeed, sometimes one can find a better approximation by using procedures different than the trivial
`chopping off' of $b$, like, for example, the Bochner-Fej\'er summation.
\enr

The next condition we need to impose is a version of the Diophantine condition on the frequencies of $b$. First, we need
some definitions. We fix a natural number $\tilde k$ (the choice of $\tilde k$ will be determined later by how many terms in \eqref{eq:intr1}
we want to obtain) and denote $\tilde\Bth:=[\Bth(k;\rho)]_{\tilde k}$
(see \eqref{eq:algebraicsum} for the notation) and
$\tilde\Bth':=\tilde\Bth\setminus\{0\}$.
We say that $\GV$ is a quasi-lattice subspace of dimension $m$, if $\GV$ is a linear
span of $m$ linear independent vectors $\bth_1,\dots,\bth_m$ with $\bth_j\in\tilde\Bth\ \forall j$. Obviously, the zero
space (which we will denote by $\GX$)
is a quasi-lattice subspace of dimension $0$ and $\R^d$ is a quasi-lattice subspace of dimension $d$.
We denote by $\CV_m$ the collection of all quasi-lattice subspaces of dimension $m$ and put
$\CV:=\cup_m\CV_m$.
If $\bxi\in\R^d$
and $\GV$ is a linear subspace of $\R^d$, we denote by $\bxi_{\GV}$ the orthogonal projection of $\bxi$
onto $\GV$, and put $\GV^\perp$ to be an orthogonal complement of $\GV$, so that $\bxi_{\GV^\perp}=\bxi-\bxi_{\GV}$.
Let $\GV,\GU\in\CV$. We say that these subspaces are {\it strongly distinct}, if neither of them is a
subspace of the other one. This condition is equivalent to stating that if we put $\GW:=\GV\cap\GU$, then
$\dim \GW$ is strictly less than dimensions of $\GV$ and $\GU$. We put $\phi=\phi(\GV,\GU)\in [0,\pi/2]$
to be the angle between them, i.e. the angle between $\GV\ominus\GW$ and $\GU\ominus\GW$, where $\GV\ominus\GW$
is the orthogonal complement of $\GW$ in $\GV$. This angle is positive iff $\GV$ and $\GW$ are strongly distinct.
We put $s=s(\rho)=s(\tilde\Bth):=\inf\sin(\phi(\GV,\GU))$, where infimum is over all strongly distinct pairs of subspaces from $\CV$,
$R=R(\rho):=\sup_{\bth\in\tilde\Bth}|\bth|$, and $r=r(\rho):=\inf_{\bth\in\tilde{\Bth}'}|\bth|$. Obviously, $R(\rho)\ll \rho^{1/k}$ (where the
implied constant can depend on $k$ and $\tilde k$; we say that $f\ll g$ if $f=O(g)$).

{\bf Condition C}. For each fixed $k$ and $\tilde k$ the sets $\Bth(k;\rho)$ satisfying \eqref{eq:condB1} and \eqref{eq:condB2}
can be chosen in such a way that for sufficiently large $\rho$ we have
\bee\label{eq:condC1}
s(\rho)\ge\rho^{-1/k}
\ene
and
\bee\label{eq:condC2}
r(\rho)\ge\rho^{-1/k},
\ene
where the implied constant (i.e. how large should $\rho$ be) can depend on $k$ and $\tilde k$.

\ber\label{rem:condB}
First of all, we remark that condition \eqref{eq:condC2} for $\tilde k$ can be derived from condition
\eqref{eq:condC1} for $\tilde k+1$, but we prefer to postulate both conditions. We also note that Condition C
is automatically satisfied for quasi-periodic potentials; for smooth periodic potentials
Condition C is also automatically satisfied
(see, for example, \cite{Par}). Finally, notice that condition \eqref{eq:condC1} is equivalent to $s(\rho)\ge\rho^{-\alpha/k}$
for any fixed positive $\alpha$ (indeed, this equivalence can be proved by considering sets $\Bth([\alpha^{-1} k];\rho)$ instead of $\Bth(k;\rho)$
in Condition B, since 
Condition B holds for all $k$). Thus, if we consider potentials of the form
$b=b_{per}+b_{qua-per}$, where $b_{per}$ is smooth periodic and
$b_{qua-per}$ is quasi-periodic, Condition C amounts to the Diophantine condition on the frequencies of $b_{qua-per}$
and is generic.
\enr

Condition A implies the following
statement.
Suppose, $\bth_1,\dots,\bth_{l}\in \tilde\Bth$, $l\leq d-1$.
Let $\GV$ be the span of $\bth_1,\dots,\bth_{l}$. Then each element of the set
$\tilde\Bth\cap\GV$ is a linear combination of $\bth_1,\dots,\bth_{l}$ with rational coefficients. Since the
set $\tilde\Bth\cap\GV$ is finite, this implies that the set
$Z(\tilde\Bth\cap\GV)$
is discrete and is, therefore, a lattice in $\GV$. We denote this lattice by $\Gamma(\rho;\GV)$.
Our final condition states that this lattice cannot be too dense.

{\bf Condition D}. We can choose $\Bth(k;\rho)$
satisfying conditions B and C
in such a way that for sufficiently large $\rho$ and for each $\GV\in\CV$, $\GV\ne\R^d$, we have
\bee\label{eq:condD}
\vol(\GV/\Gamma(\rho;\GV))\ge\rho^{-1/k}.
\ene

\ber\label{rem:condC} As with Condition C, Condition D is satisfied for quasi-periodic and smooth periodic potentials.
Also, similarly to Remark \ref{rem:condB}, condition \eqref{eq:condD} is equivalent to $\vol(\GV/\Gamma(\rho;\GV))\ge\rho^{-\alpha/k}$.
Condition D is not essential
for our methods and it is likely that this condition can be relaxed. Indeed, the only place we are using this condition is to get an
upper bound on the number of elements in $\BUps(\bxi)$, and this estimate, in turn, is used only to prove \eqref{eq:residues}. However, it seems likely that there may be another way of proving that LHS and RHS of  \eqref{eq:residues}
are the same. This alternative proof is more direct and much more difficult technically. Given that our paper is quite technically
involved the way it is now, we have decided to present a proof which is considerably simpler, paying the price of assuming a slightly
stronger condition on the potential.
\enr
\ber
One final remark that concerns all conditions B--D. Given any symmetric set $\Bth$ of frequencies, we can construct a real smooth almost-periodic potential $b$
such that \eqref{eq:potential} holds, all Fourier coefficients $a_{\bth}$ are non-zero, and  conditions B--D are satisfied (of course, the Fourier coefficients
will have to converge to zero really fast).
For example, if $b$ is a limit-periodic function with Fourier coefficients going to zero exponentially, than all our conditions
A--D are satisfied.
\enr

Now we can formulate our main theorem.

\bet\label{thm:thm}
Let $H$ be an operator \eqref{eq:Sch} with real smooth almost-periodic potential $b$ satisfying Conditions {\rm A,B,C,} and {\rm D}.
Then for each $K\in\N$ we have:
\bee\label{eq:main_thm0}
N(\la)=\la^{d/2}\left(C_d+\sum\limits_{j=1}^{K}e_j\la^{-j}+o(\la^{-K})\right)
\ene
as $\la\to\infty$.
\ent
\ber
Following \cite{HitPol}, \cite{HitPol1}, and \cite{KorPush}, it is straightforward to compute the coefficients $e_j$. For
example, we have
\bees
e_1=-\frac{d w_d}{2(2\pi)^d}\BM(b)
\enes
and
\bees
e_2=
\frac{d(d-2) w_d}{8(2\pi)^d}\BM(b^2),
\enes
where $\BM$ is the mean of an almost-periodic function.
\enr
From now on, we always assume that our potential satisfies all the conditions from this section; we also will denote $\rho:=\sqrt{\la}$.
Given Conditions B--D, we want to introduce the following definition. We say that a positive function
$f=f(\rho)=f(\rho;k,\tilde k)$ satisfies the estimate
$f(\rho)\le\rho^{0+}$ (resp. $f(\rho)\ge\rho^{0-}$), if for each positive $\ep$ and for each $\tilde k$ we can achieve $f(\rho)\le\rho^{\ep}$ (resp. $f(\rho)\ge\rho^{-\ep}$)
for sufficiently large $\rho$ by choosing parameter $k$ from Conditions B--D
sufficiently large. For example, we have $R(\rho)\le\rho^{0+}$, $s(\rho)\ge\rho^{0-}$, $r(\rho)\ge\rho^{0-}$, and $\vol(\GV/\Gamma(\rho;\GV))\ge\rho^{0-}$.
One can also use a standard covering argument to show that the number of elements in $\Bth(k;\rho)$ satisfies $|\Bth(k;\rho)|\le\rho^{0+}$.
Throughout the paper, we always assume that the value of $k$ is chosen
sufficiently large so that all inequalities of the form $\rho^{0+}\le\rho^{\ep}$ or $\rho^{0-}\ge\rho^{-\ep}$
we encounter in the proof are satisfied.
\ber
As we will mention several times in this paper, the very big problem for both the authors and the readers
is the amount of notation one has to keep
in mind. The above definition is the first step in our desire to make as much as possible of the notation obsolete and eventually stop using it.
\enr
The next statement shows a bit more how this new notation is used.
\bel\label{lem:coefficients}
Suppose, $\bth, \bmu_1,\dots,\bmu_d\in\tilde\Bth'$, the set $\{\bmu_j\}$ is linearly independent, and $\bth=\sum_{j=1}^db_j\bmu_j$. Then each non-zero coefficient
$b_j$ satisfies
\bee\label{eq:coefficients}
\rho^{0-}\le |b_j| \le \rho^{0+}.
\ene
\enl
\bep
Let $\GV\in\CV_{d-1}$ be a subspace spanned by $\bmu_j$, $j=2,\dots, d$, and let $\be$
be a unit vector orthogonal to $\GV$.
Then the sine of the angle between $\bth$ and $\GV$ is $\lu\bth,\be\ru|\bth|^{-1}$. Thus, if this angle is non-zero, we have
$|\lu\bth,\be\ru|\ge s(\rho) r(\rho)$ and, hence, if $b_1=\lu\bth,\be\ru\lu\bmu_1,\be\ru^{-1}$ is non-zero, it satisfies
$|b_1|\ge r(\rho) s(\rho) R(\rho)^{-1}\ge\rho^{0-}$. Similarly, since $\lu\bmu_1,\be\ru\ne 0$, we have $|\lu\bmu_1,\be\ru|\ge s(\rho) r(\rho)$ and thus
$|b_1|\le R(\rho) (r(\rho) s(\rho))^{-1}\le \rho^{0+}$. The proof for $j\ne 1$ is similar.
\enp

In this paper, by
$C$ or $c$ we denote positive constants,
the exact value of which can be different each time they
occur in the text,
possibly even each time they occur in the same formula. On the other hand, the constants which are labeled (like $C_1$, $c_3$, etc)
have their values being fixed throughout the text.
Given two positive functions $f$ and $g$,
we say that $f\gg g$, or $g\ll
f$, or $g=O(f)$ if the ratio $\frac{g}{f}$ is bounded. We say
$f\asymp g$ if $f\gg g$ and $f\ll g$.

\section{Reduction to a finite interval of spectral parameter}

The main result of our paper, Theorem \ref{thm:thm}, will follow from the following theorem (
recall that we put
$\rho:=\sqrt{\la}$):

\bet\label{main_thm}
For each $K\in\N$ we have:
\bee\label{eq:main_thm1}
N(\rho^2)=C_d\rho^d+\sum_{p=0}^{d-1}\sum_{j=-d+1}^{K}e_{j,p}\rho^{-j}(\ln\rho)^p
+o(\rho^{-K})
\ene
as $\rho\to\infty$. 
\ent
Once the theorem is proved, it immediately implies
\bec
For each $K\in\N$ we have:
\bee\label{eq:main_cor1}
N(\la)=\la^{d/2}\left(C_d+\sum\limits_{j=1}^{K}e_j\la^{-j}+o(\la^{-K})\right)
\ene
as $\la\to\infty$.
\enc
\bep
First of all, we notice that \cite{HitPol}, \cite{HitPol1} and formula (2.9) from \cite{Shu} imply that
\bee\label{Laplace}
\int_{-\infty}^\infty e^{-t\la}N(\la)d\la\sim t^{-(d+2)/2}\sum_{j=0}^\infty q_jt^j
\ene
as $t\to 0+$, where $q_j$ are constants depending on the potential.
Now the corollary follows from Theorem \ref{main_thm} and calculations similar to that of
\cite{KorPush}. Indeed, consider the following integrals:
\bee
I_1(t;k,p):=\int_1^\infty e^{-t\la}\la^{-k}(\ln\la)^p d\la,\ \ \ p\in\Z_+,\ k\in\Z;
\ene
\bee
I_2(t;k,p):=\int_1^\infty e^{-t\la}\la^{-k-\frac12}(\ln\la)^p d\la,\ \ \ p\in\Z_+,\ k\in\Z.
\ene
Elementary calculations show that
\bee\label{Laplace1}
I_1(t;k,p)=t^{k-1}\left(\Gamma(-k+1)\left(\ln\frac1t\right)^p+\sum\limits_{j=0}^{p-1}a_j\left(\ln\frac1t\right)^j\right)+f_1(t),\ \ \ \hbox{for}\ k\leq0,\ t>0;
\ene
\bee\label{Laplace2}
I_1(t;k,p)=t^{k-1}\left(\frac{1}{p+1}\frac{(-1)^{k-1}}{(k-1)!}\left(\ln\frac1t\right)^{p+1}+
\sum\limits_{j=0}^{p}a'_j\left(\ln\frac1t\right)^j\right)+f_2(t),\ \ \ \hbox{for}\ k\geq1,\ t>0;
\ene
\bee\label{Laplace3}
I_2(t;k,p)=t^{k-\frac12}\left(\Gamma(-k+\frac12)\left(\ln\frac1t\right)^p+
\sum\limits_{j=0}^{p-1}a''_j\left(\ln\frac1t\right)^j\right)+f_3(t),\ \ \ \hbox{for any}\ k\in\Z,\ t>0.
\ene
Here, $a_j=a_j(k,p),\ a'_j=a'_j(k,p),\ a''_j=a''_j(k,p)$ are some constants and $f_j(t)=f_j(t;k,p)$ are {\it entire} functions in $t$. Obviously, $\int_{-\infty}^1 e^{-t\la}N(\la)d\la$ is an entire function in $t$. Comparing \eqref{Laplace} and \eqref{Laplace1}, \eqref{Laplace2}, \eqref{Laplace3}, it is not difficult to see that

1. if $d$ is even then $e_{j,p}$ can be non-zero only if $p=0$ and $j$ is non-positive and even;

2. if $d$ is odd then $e_{j,p}$ can be non-zero only if $p=0$ and $j$ is odd.

\enp

Thus, we can concentrate on proving Theorem \ref{main_thm}.

To begin with, we choose sufficiently large $\rho_0>1$ (to be fixed later on) and put $\rho_n=2\rho_{n-1}=2^n\rho_0$,
$\la_n:=\rho_n^2$;
we also define the interval
$I_n=[\rho_n,4\rho_n]$.
The proof of Theorem \ref{main_thm} will be based on the following lemma:

\bel\label{main_lem}
For each $M\in\N$ and $\rho\in I_n$ we have:
\bee\label{eq:main_lem1}
N(\rho^2)=C_d\rho^d+\sum_{p=0}^{d-1}\sum_{j=-d+1}^{6M}e_{j,p}(n)\rho^{-j}(\ln\rho)^p
+O(\rho_n^{-M}).
\ene
Here, $e_{j,p}(n)$ are some real numbers depending on $j,\ p$ and $n$ (and $M$) satisfying
\bee\label{eq:main_lem2}
e_{j,p}(n)=O(\rho_n^{(2j/3)+a}).
\ene
The constants in the $O$-terms do not depend on $n$ (but they may depend on $M$). The value of $a$
does not depend on either $n$ or $M$.
\enl
\ber\label{rem:new1}
Note that \eqref{eq:main_lem1} is not a `proper' asymptotic formula, since the coefficients
$e_{j,p}(n)$ are allowed to grow with $n$ (and, therefore, with $\rho$).
\enr

Let us prove Theorem \ref{main_thm} assuming that we have proved Lemma \ref{main_lem}. Let $M$ be fixed.
Denote
\bee
N_n(\rho^2):=C_d\rho^d+\sum_{p=0}^{d-1}\sum_{j=-d+1}^{6M}e_{j,p}(n)\rho^{-j}(\ln\rho)^p.
\ene
Then, whenever $\rho\in J_n:=I_{n-1}\cap I_n=[\rho_n,2\rho_n]
$, we have:
\bee
N_n(\rho^2)-N_{n-1}(\rho^2)=\sum_{p=0}^{d-1}\sum_{j=-d+1}^{6M}t_{j,p}(n)\rho^{-j}(\ln\rho)^p,
\ene
where
\bee
t_{j,p}(n):=e_{j,p}(n)-e_{j,p}(n-1).
\ene
On the
other hand, since for $\rho\in J_n$ we have both
$N(\rho^2)=N_n(\rho^2)+O(\rho_n^{-M})$ and
$N(\rho^2)=N_{n-1}(\rho^2)+O(\rho_n^{-M})$, this implies that
$\sum_{p=0}^{d-1}\sum_{j=-d+1}^{6M}t_{j,p}(n)\rho^{-j}(\ln\rho)^p=O(\rho_n^{-M})$.
\becl For each $j=-d+1,\dots,6M$ we have:
$t_{j,p}(n)=O(\rho_n^{j-M}(\ln\rho_n)^{d-1-p})$.
\encl
\bep
Put
$x:=\rho^{-1}$. Then
$\sum_{p=0}^{d-1}\sum_{j=-d+1}^{6M}t_{j,p}(n)x^{j}(-1)^p(\ln
x)^p=O(\rho_n^{-M})$ whenever $x\in
[\frac{\rho_n^{-1}}{2},\rho_n^{-1}]$. Put $y:=x\rho_n$
and
$$\tau_{j,p}(n):=\rho_n^{M-j}\sum\limits_{s=p}^{d-1}\begin{pmatrix}s\\p\end{pmatrix}(-1)^p t_{j,s}(n)(\ln\rho_n)^{s-p}.$$
Then
\bee\label{Cramer}
P(y):=\sum_{p=0}^{d-1}\sum_{j=-d+1}^{6M}\tau_{j,p}(n)y^{j}(\ln y)^p=O(1)
\ene
whenever $y\in [\frac{1}{2},1]$. Consider the following $d(6M+d)$ functions: $y^j(\ln y)^p$ ($j=-d+1,...,6M$, $p=0,\dots,d-1$) and label them $h_1(y),...h_{d(6M+d)}(y)$. These functions are linearly independent on the interval
$[\frac{1}{2},1]$. Therefore, there exist points $y_1,...,y_{d(6M+d)}\in [\frac{1}{2},1]$ such that
the determinant of the matrix $(h_j(y_l))_{j,l=1}^{d(6M+d)}$ is non-zero. Now \eqref{Cramer} and the Cramer's Rule imply that for each $j$ the values $\tau_{j,p}(n)$ are fractions with a bounded expression in the numerator and a fixed non-zero number in the denominator. Therefore, $\tau_{j,p}(n)=O(1)$. This shows first that $t_{j,d-1}(n)=O(\rho_n^{j-M})$ and then subsequently reducing index $p$ from $p=d-1$ to $p=0$ we obtain $t_{j,p}(n)=O(\rho_n^{j-M}(\ln\rho_n)^{d-1-p})$ as claimed.
\enp

Thus, for $j<M$, the series $\sum_{m=0}^\infty t_{j,p}(m)$ is absolutely convergent; moreover, for such
$j$ we have:
\bee
\begin{split}
& e_{j,p}(n)=e_{j,p}(0)+\sum_{m=1}^n
t_{j,p}(m)=e_{j,p}(0)+\sum_{m=1}^\infty
t_{j,p}(m)+O(\rho_n^{j-M}(\ln\rho_n)^{d-1-p})\cr &
=:e_{j,p}+O(\rho_n^{j-M}(\ln\rho_n)^{d-1-p}),
\end{split}
\ene where we have denoted $e_{j,p}:=e_{j,p}(0)+\sum_{m=1}^\infty
t_{j,p}(m)$.

Since $e_{j,p}(n)=O(\rho_n^{(2j/3)+a})$ (it was one of the assumptions
of lemma), we have: \bee
\sum_{j=M}^{6M}|e_{j,p}(n)|\rho_n^{-j}=O(\rho_n^{a-\frac{M}{3}})=O(\rho_n^{-\frac{M}{4}}),
\ene assuming as we can without loss of generality that $M$ is
sufficiently large.
Thus, when $\rho\in I_n$, we have: \bee\label{eq:main_lem11}
N(\rho^2)=C_d\rho^d+\sum_{p=0}^{d-1}\sum_{j=-d+1}^{M-1}e_{j,p}\rho^{-j}(\ln\rho)^p+
O(\rho^{-M}(\ln\rho)^{d-1})+O(\rho^{-\frac{M}{4}}(\ln\rho)^{d-1}).
\ene
Since constants in $O$ terms do not depend on $n$, 
for all $\rho\ge \rho_0$ we have:
\bee\label{eq:main_lem17}
\bes
N(\rho^2)&=C_d\rho^d+\sum_{p=0}^{d-1}\sum_{j=-d+1}^{M-1}e_{j,p}\rho^{-j}(\ln\rho)^p+O(\rho^{-\frac{M}{6}})\\
&=C_d\rho^d+\sum_{p=0}^{d-1}\sum_{j=-d+1}^{[M/6]}e_{j,p}\rho^{-j}(\ln\rho)^p+O(\rho^{-\frac{M}{6}}).
\end{split}
\ene
Taking $M=6K+1$, we obtain \eqref{eq:main_thm1}.

The rest of the paper is devoted to proving Lemma \ref{main_lem}. We will mostly concentrate on obtaining
formula \eqref{eq:main_lem1}, since estimate \eqref{eq:main_lem2} will usually follow by trivial but tedious
arguments (like estimating coefficients in the product of several geometric series). However, in the cases when
estimating the coefficients in our infinite series would present difficulties, we will carry out these
estimates as well. The first step of the proof is fixing $n$ and fixing large $\tilde k$ and $k$. The precise value of
$\tilde k$
will be chosen later 
in order to satisfy estimate \eqref{eq:kM} (this estimate says that
the more asymptotic terms we want to have in \eqref{eq:main_lem1}, the bigger $\tilde k$ we need to choose; note that
the choice of $\tilde k$ does not depend on $k$).
We will have several
requirements on how large $k$ should be (most of them will be of the form $\rho_n^{0+}<\rho_n^{\epsilon}$
or $\rho_n^{0-}>\rho_n^{-\epsilon}$); each time we have such an inequality, we assume that $k$ is chosen sufficiently
large to satisfy it. The first requirement on $k$ we have is that $k>M$.
After fixing $n$ and $k$, we choose the finite set
$\Bth(k;\rho_n)$ and the approximating potential $b_{(k;\rho_n)}$ which satisfy all the conditions B--D.
Then condition \eqref{eq:condB2} and definition \eqref{eq:def1} imply that difference between the densities of states of operators with potentials $b$
and $b_{(k;\rho_n)}$ is smaller than $\rho_n^{-M}$. Thus, from now on we will consider the operator with a potential
$b_{(k;\rho_n)}$ and try to establish \eqref{eq:main_lem1} for this new operator. Following our policy of
getting rid of all indexes as soon as possible (i.e. immediately after we have fixed them), we will denote $\Bth:=\Bth(k;\rho_n)$ and $b:=b_{(k;\rho_n)}$. This means that from now on we will assume that $b$ is a quasi-periodic
potential with $\Bth$ its spectrum of frequencies so that $\Bth$ satisfies conditions A--D with $\rho=\rho_n$.

\section{Abstract results}

In this section,we establish several abstract results concerning density of states for
operators with almost-periodic coefficients. In the periodic setting, these results become either trivial
or already known, so the reader who is mostly interested in the periodic case, can skip this section.

In this and further sections, we will work with pseudo-differential operators with almost-periodic
coefficients (or symbols). These operators were studied in \cite{Shu0} and \cite{Shu}.
In Section 8, we will introduce the classes of such operators. We will also see that one can naturally
consider the action of such operators in both $L_2(\R^d)$ and $B_2(\R^d)$. These actions have many
similarities between them; in particular, the norms and (for elliptic or bounded operators) the spectra
of operators acting in $L_2(\R^d)$ and $B_2(\R^d)$ are the same, see \cite{Shu0}. As a result, often when
we discuss a pseudo-differential operator with almost-periodic
coefficients, we do not specify in which space it acts.
Sometimes, however, it becomes important to emphasize the space where the operator acts,
in which case we will do this.

Following \cite{Shu}, we denote by $\GA_B$ a $II_{\infty}$ factor
acting in $\tilde\GH:=B_2(\R^d)\otimes L_2(\R^d)$. We denote by
$\be_{\bxi}$ both the function $e^{i\bxi\bx}$ and the operator of
multiplication by this function. $T_{\bxi}$ is the operator of
translation by $\bxi$ in $L_2(\R^d)$, i.e. $T_{\bxi}
u(\bx)=u(\bx-\bxi)$. The factor $\GA_B$ is defined as the von
Neumann algebra acting in $\tilde\GH$ generated by two families of
operators: \bee \{\be_{\bxi}\otimes\be_{\bxi},\,\bxi\in\R^d\} \ene
and \bee \{I\otimes T_{\bxi},\,\bxi\in\R^d\}. \ene Let $A=a(\bx,D)$
be a self-adjoint pseudo-differential operator with almost-periodic
coefficients such that $a(\bx,\bxi)\gg|\bxi|^m$ for some
$m>0$. 
We introduce operator
$A^{\sharp}:=a(\bx+\by,D_{\by})$ acting in $\tilde\GH$; here, $\bx$
is a variable of functions in $B_2(\R^d)$ and $\by$ is a variable of functions in $L_2(\R^d)$.
We denote by $E_{\la}(A)$ the spectral projection of $A$; by $\tilde
E_{\la}(A)$ we denote the spectral projection of $A^{\sharp}$. By
$\BD$ and $\BT$ we denote the relative dimension and the relative
trace in $\GA_B$ (see \cite{Nai}).

If $A$ is actually a {\it differential} operator then (see
\cite{Shu}) one can define the density of states of $A$ (denoted by
$N(\la;A)$) by formula \eqref{eq:def1}.
It was
also proved in \cite{Shu} that
\bee\label{eq:density1}
N(\la;A)=\BT(\tilde E_{\la}(A^{\sharp}))=\BD(\tilde
E_{\la}(A^{\sharp})\tilde \GH).
\ene
Note that the relative dimension in a $II_{\infty}$ factor
can take any non-negative value.
Now it is natural 
to define the density of states for
a general elliptic self-adjoint {\it pseudo-differential} operator with
almost-periodic coefficients
by \eqref{eq:density1}.

By $\GL$ we denote a closed linear subspace of $\tilde\GH$ adjoint
to $\GA_B$ (see \cite{Nai} for the explanation of the terminology).
The following lemma gives a variational description of the density
of states.
\bel\label{lem:densty1} 
\bee\label{eq:density2}
N(\la;A)=\sup\{\BD(\GL), \,\, (A^{\sharp}\phi,\phi)\le \la(\phi,\phi),
\forall\phi\in\GL\}.
\ene
\enl
\bep
By taking $\GL:=\tilde
E_{\la}(A^{\sharp})\tilde \GH$ and using \eqref{eq:density1}, we see that the
LHS of \eqref{eq:density2} is at most the RHS. Suppose now that we
have found a subspace $\GL$ such that $\BD(\GL)>\BD(\tilde
E_{\la}(A^{\sharp})\tilde \GH)$. Then Lemma from Section VII.37 of
\cite{Nai} implies that $\GL$ contains a non-zero vector $\phi$
orthogonal to $\tilde E_{\la}(A^{\sharp})\tilde \GH$. But then
$(A^{\sharp}\phi,\phi)> \lambda (\phi,\phi)$, which contradicts our
assumption on $\GL$. This proves \eqref{eq:density2}.
\enp
\bec\label{cor:norms}
If $A\ge B$, then $N(\la;A)\le N(\la;B)$.
\enc
\bec\label{cor:H1H2}
Suppose, $H_1$ and $H_2$ are two  elliptic self-adjoint pseudo-differential operators with
almost-periodic coefficients such that $||H_1-H_2||\ll\rho_n^{-M+(2-d)}$. Suppose,
$N(H_2;\rho^2)$ satisfies asymptotic expansion \eqref{eq:main_lem1}. Then
$N(H_1;\rho^2)$ also satisfies \eqref{eq:main_lem1}.
\enc
\bep
Our assumptions imply that $H_2-\de\le H_1\le H_2+\de$, where $\de\ll \rho_n^{-M+(2-d)}$.
Previous corollary now implies that
\bee\label{eq:H1H2}
N(H_2+\de;\la)\le N(H_1;\la)\le N(H_2-\de;\la).
\ene
It remains to notice that if $N(H_2;\rho^2)$ satisfies \eqref{eq:main_lem1}, then the difference
between RHS and LHS of \eqref{eq:H1H2} is $O(\rho_n^{-M})$.
\enp
\bel\label{lem:density2} Suppose, $A=a(\bx,D)$ and $U=u(\bx,D)$ are
two pseudo-differential operators with almost-periodic coefficients.
Let operator $A$ be elliptic self-adjoint
and operator $U$ be
unitary. Then $N(\la;A)=N(\la;U^{-1}AU)$.
\enl
\bep
Obviously,
operator $U^{\sharp}$ is unitary and
$(U^{-1}AU)^{\sharp}=(U^{\sharp})^{-1}A^{\sharp}U^{\sharp}$. Thus,
\bee
N(\la;U^{-1}AU)=\BT(\tilde
E_{\la}((U^{\sharp})^{-1}A^{\sharp}U^{\sharp}))=
\BT((U^{\sharp})^{-1}\tilde
E_{\la}(A^{\sharp})U^{\sharp})=\BT(\tilde
E_{\la}(A^{\sharp}))=N(\la;A).
\ene
Here, the third equality follows, for example, from Sections 36-37, Chapter 7 of \cite{Nai}.
\enp

\section{Resonance zones}

In this section, we define resonance regions and establish some of their properties. Recall the definition
of the set $\Bth=\Bth(k;\rho_n)$ as well as of the quasi-lattice subspaces from Section 2. As before, by
$\Bth_{\tilde k}$
we denote the algebraic sum of $\tilde k$ copies of $\Bth$; remember that we consider the index $\tilde k$ fixed. We also put
$\Bth'_{\tilde k}:=\Bth_{\tilde k}\setminus\{0\}$. For each $\GV\in\CV$ we put $S_{\GV}:=\{\bxi\in\GV,\ |\bxi|=1\}$.
For each non-zero $\bth\in\R^d$ we put $\bn(\bth):=\bth|\bth|^{-1}$.

Let $\GV\in\CV_m$. We say that $\GF$ is a {\it flag} generated by $\GV$, if $\GF$ is a sequence $\GV_j\in\CV_j$ ($j=0,1,\dots,m$)
such that $\GV_{j-1}\subset\GV_j$ and $\GV_m=\GV$. We say that $\{\bnu_j\}_{j=1}^m$ is a sequence generated by $\GF$
if $\bnu_j\in\GV_j\ominus\GV_{j-1}$ and $||\bnu_j||=1$ (obviously, this condition determines each $\bnu_j$ up to the
multiplication by $-1$). We denote by $\CF(\GV)$ the collection of all flags generated by $\GV$. We also
fix an increasing sequence of positive numbers $\al_j$ ($j=1,\dots,d$) with $\al_d<\frac{1}{2 d}$
(these numbers depend only on $d$) and put $L_j:=\rho_n^{\al_j}$.

Let $\bth\in\Bth'_{\tilde k}$. We call by {\it resonance zone} generated by $\bth$
\bee\label{eq:1}
\Lambda(\bth):=\{\bxi\in\R^d,\ |\lu\bxi,\bn(\bth)\ru|\le L_1\}.
\ene
Suppose, $\GF\in\CF(\GV)$ is a flag and $\{\bnu_j\}_{j=1}^m$ is a sequence generated by $\GF$. We define
\bee\label{eq:2}
\Lambda(\GF):=\{\bxi\in\R^d,\ |\lu\bxi,\bnu_j\ru|\le L_j\}.
\ene
If $\dim\GV=1$, definition \eqref{eq:2} is reduced to \eqref{eq:1}.
Obviously, if $\GF_1\subset\GF_2$, then $\Lambda(\GF_2)\subset\Lambda(\GF_1)$.

Suppose, $\GV\in\CV_j$. We denote
\bee\label{eq:Bxi1}
\Bxi_1(\GV):=\cup_{\GF\in\CF(\GV)}\Lambda(\GF).
\ene
Note that
$\Bxi_1(\GX)=\R^d$ and $\Bxi_1(\GV)=\Lambda(\bth)$ if $\GV\in\CV_1$ is spanned by $\bth$.
Finally, we put
\bee\label{eq:Bxi}
\Bxi(\GV):=\Bxi_1(\GV)\setminus(\cup_{\GU\supsetneq\GV}\Bxi_1(\GU))=\Bxi_1(\GV)
\setminus(\cup_{\GU\supsetneq\GV}\cup_{\GF\in\CF(\GU)}\Lambda(\GF)).
\ene
We call $\Bxi(\GV)$ the resonance region generated by $\GV$. 
Very often, the region $\Bxi(\GX)$ is called the non-resonance region. We, however, will omit using
this terminology since we will treat all regions $\Bxi(\GV)$ in the same way.

Let us establish some basic properties of resonance regions. The first set of properties follows immediately
from the definitions.

\bel\label{lem:propBUps}

(i) We have
\bee
\cup_{\GV\in\CV}\Bxi(\GV)=\R^d.
\ene

(ii) $\bxi\in\Bxi_1(\GV)$ iff $\bxi_{\GV}\in\Omega(\GV)$, where $\Omega(\GV)\subset\GV$
is a certain bounded set (more precisely, $\Omega(\GV)=\Bxi_1(\GV)\cap\GV\subset B(m L_m)$ if
$\dim\GV=m$).

(iii) $\Bxi_1(\R^d)=\Bxi(\R^d)$ is a bounded set, $\Bxi(\R^d)\subset B(d L_d)$; all other sets
$\Bxi_1(\GV)$ are unbounded.

\enl

Now we move to slightly less obvious properties. From now on we always assume that $\rho_0$
(and thus $\rho_n$) is sufficiently large. We also assume, as we always do, that the value of $k$
is sufficiently large so that, for example, $L_j\rho_n^{0+}<L_{j+1}$.

\bel\label{lem:intersect}
Let $\GV,\GU\in\CV$. Then $(\Bxi_1(\GV)\cap\Bxi_1(\GU))\subset \Bxi_1(\GW)$,
where $\GW:=\GV+\GU$ (algebraic sum).
\enl
\bep
Assume, without loss of generality, that $m_1:=\dim\GV\ge\dim\GU=:m_2$. If $\GU\subset\GV$, then
the statement of the lemma is obvious. Consider the case when $\GV$ and $\GU$ are strongly distinct.
Suppose, $\bxi\in(\Bxi_1(\GV)\cap\Bxi_1(\GU))$. Then there is a flag $\GF\in\CF(\GV)$
such that $\bxi\in\Lambda(\GF)$. Let $\GF_1\in\CF(\GW)$ be any flag such that the first
$m_1$ elements of $\GF_1$ coincide with $\GF$. Let us prove that $\bxi\in\Lambda(\GF_1)$.
Let $\{\bnu_j\}_{j=1}^m$ be a sequence generated by $\GF_1$ ($m=\dim\GW$). Then the
inclusion $\bxi\in\Lambda(\GF)$ implies that $|\lu\bxi,\bnu_j\ru|\le L_j$ for $j=1,\dots, m_1$.
Moreover, a simple geometry implies $|\bxi_{\GW}|\le (|\bxi_{\GV}|+|\bxi_{\GU}|)[\sin(\phi(\GV,\GU))]^{-1}\le
2m_1 L_{m_1} s(\rho_n)^{-1}< L_{m_1}\rho_n^{0+}<L_{m_1+1}$. Therefore, for $j\ge m_1+1$ we have
$|\lu\bxi,\bnu_j\ru|=|\lu\bxi_{\GW},\bnu_j\ru|\le|\bxi_{\GW}|\le L_{m_1+1}\le L_j$. This shows
that indeed $\bxi\in\Lambda(\GF_1)$ and, therefore, $\bxi\in\Bxi_1(\GW)$, which proves our lemma.
\enp
The next statement follows immediately from Lemma \ref{lem:intersect}.
\bec
(i) We can re-write definition \eqref{eq:Bxi} like this:
\bee\label{eq:Bxibis}
\Bxi(\GV):=\Bxi_1(\GV)\setminus(\cup_{\GU\not\subset\GV}\Bxi_1(\GU)).
\ene

(ii) If $\GV\ne\GU$, then $\Bxi(\GV)\cap\Bxi(\GU)=\emptyset$.

(iii) We have $\R^d=\sqcup_{\GV\in\CV}\Bxi(\GV)$ (the disjoint union).
\enc

\bel\label{lem:verynew}
Let $\GV\in\CV_m$ and $\GV\subset\GW\in\CV_{m+1}$. Let $\bmu$ be (any) unit vector from $\GW\ominus\GV$.
Then, for $\bxi\in\Bxi_1(\GV)$, we have  $\bxi\in\Bxi_1(\GW)$ if and only if the estimate
$|\lu\bxi,\bmu\ru|=|\lu\bxi_{\GV^{\perp}},\bmu\ru|\le L_{m+1}$ holds.
\enl
\bep
In one direction the statement is obvious. Now, we assume that $\bxi\in\Bxi_1(\GV)\cap\Bxi_1(\GW)$. Let $\GF=\{\GW_0,\dots\GW_{m},\GW\}$ be a flag for which $\bxi\in\Lambda(\GF)$. If $\GW_m=\GV$ then the statement of the lemma is straightforward. Otherwise, we can apply the construction from the proof of Lemma~\ref{lem:intersect} with $\GU=\GW_m$. This completes the proof.
\enp

\bel\label{lem:newXi}
We have
\bee\label{eq:newBxi}
\Bxi_1(\GV)\cap\cup_{\GU\supsetneq\GV}\Bxi_1(\GU)=\Bxi_1(\GV)\cap\cup_{\GW\supsetneq\GV, \dim\GW=1+\dim\GV}\Bxi_1(\GW).
\ene
\enl
\bep
Indeed, obviously, the RHS of \eqref{eq:newBxi} is a subset of the LHS. On the other hand, suppose,
$\GU\supsetneq\GV$ and $\bxi\in\Bxi_1(\GV)\cap\Bxi_1(\GU)$. Then $\bxi\in\Lambda(\GF)$ for some $\GF\in\CF(\GU)$.
Suppose that $\GV_1\in\GF$ is a subspace such that $\dim\GV_1=\dim\GV$.
If $\GV_1=\GV$, it immediately follows that $\bxi$ is contained in the
RHS of \eqref{eq:newBxi}. Assume that $\GV_1\ne\GV$; in particular, $\dim\GV\ge 1$. Then there exists $\GV_2\in\GF$ such that
$\dim (\GV+\GV_2)=\dim\GV +1$. Put $\GW:=\GV+\GV_2$. Since $\bxi\in\Lambda(\GF)\subset\Bxi_1(\GV_2)$, by Lemma \ref{lem:intersect} we have $\bxi\in\Bxi_1(\GW)$, and so $\bxi$ is contained in the RHS of \eqref{eq:newBxi}.
\enp

\bec
We can re-write \eqref{eq:Bxi} as
\bee\label{eq:Bxibbis}
\Bxi(\GV):=\Bxi_1(\GV)\setminus(\cup_{\GW\supsetneq\GV, \dim\GW=1+\dim\GV}\Bxi_1(\GW)).
\ene
\enc

\bel\label{lem:Upsilon}
Let $\GV\in\CV$ and $\bth\in\Bth_{\tilde k}$. Suppose that $\bxi\in\Bxi(\GV)$ and
both points $\bxi$ and $\bxi+\bth$ are inside $\Lambda(\bth)$. Then $\bth\in\GV$ and $\bxi+\bth\in\Bxi(\GV)$.
\enl
\bep
If $\bth\not\in\GV$, then Lemma \ref{lem:intersect} implies that $\bxi\in\Bxi_1(\GW)$, where
$\GW=\mathrm {span}(\GV,\bth)$, which contradicts our assumption  $\bxi\in\Bxi(\GV)$.

Let us prove that 
$\bxi+\bth\in\Bxi_1(\GV)$. Since $\bxi\in\Bxi(\GV)\subset\Bxi_1(\GV)$, this implies
that $\bxi\in\Lambda(\GF)$ with $\GF\in\CF(\GV)$, $\GF=\{\GV_0=\GX,\GV_1,\dots,\GV_m=\GV\}$.

Let $J$ be the biggest number such that $\bth\not\in\GV_{J-1}$ (obviously, $J\le m:=\dim\GV$). We construct a new flag
$\GF_1=\{\GU_0=\GX,\GU_1,\dots,\GU_m=\GV\}$ such that
\bees
\GU_j=\begin{cases}
\GX,& j=0\\
\mathrm {span}(\GV_{j-1},\bth),& 0<j\le J\\
\GV_j,&j>J
\end{cases}
\enes

We are going to prove that
$\bxi+\bth\in\Lambda(\GF_1)$. Let $\{\bnu_j\}_{j=1}^m$ be a sequence generated by $\GF_1$.
Obviously, for $j>J$ we have $\lu\bxi+\bth,\bnu_j\ru=\lu\bxi,\bnu_j\ru$,
so that $|\lu\bxi+\bth,\bnu_j\ru|\le L_j$ if $|\lu\bxi,\bnu_j\ru|\le L_j$. So, assume that
$j\le J$. If $j=1$, we have $\bnu_1=\bn(\bth_1)$, so the assumption $\bxi+\bth\in\Lambda(\bth)$ implies
$|\lu\bxi+\bth,\bnu_1\ru|\le L_1$. Assume now that $1<j\le J$. Then
$|\bxi_{\GU_j}|\le(|\bxi_{\GV_{j-1}}|+|\bxi_{\GU_1}|) s(\rho_n)^{-1}\le 2(j-1)L_{j-1} s(\rho_n)^{-1}$. Therefore,
$|(\bxi+\bth)_{\GU_j}|\le 2(j-1)L_{j-1} s(\rho_n)^{-1}+|\bth|\le 2(j-1)L_{j-1} s(\rho_n)^{-1}+R(\rho_n)\le L_j$.
Thus, $|\lu\bxi+\bth,\bnu_j\ru|\le |(\bxi+\bth)_{\GU_j}|\le L_j$. This shows that,
indeed, we have $\bxi+\bth\in\Lambda(\GF_1)$ and, therefore, $\bxi+\bth\in\Bxi_1(\GV)$.

Suppose now that $\bxi+\bth\not\in\Bxi(\GV)$. This could only happen if $\bxi+\bth\in\Bxi_1(\GW)$ for some
$\GW\supsetneq\GV$. But then the previous part of the proof would imply that $\bxi\in\Bxi_1(\GW)$, which
contradicts our assumption $\bxi\in\Bxi(\GV)$. Thus, $\bxi+\bth\in\Bxi(\GV)$, which finishes the proof.
\enp

The next definition is almost identical to the corresponding definition from \cite{ParSob}.
\begin{defn}
\label{reachability:defn}
Let $\bth, \bth_1, \bth_2, \dots, \bth_l$ be some vectors from $\Bth'_{\tilde k}$,
which are not necessarily distinct.
\begin{enumerate}
\item \label{1}
We say that two vectors
$\bxi, \boldeta\in\R^d$ are \textsl{$\bth$-resonant congruent}
if both $\bxi$ and $\boldeta$ are inside $\L(\bth)$ and $(\bxi - \boldeta) =l\bth$ with $l\in\Z$.
In this case we write $\bxi \leftrightarrow \boldeta \mod \bth$.
\item\label{2}
For each $\bxi\in\R^d$ we denote by
$\BUps_{\bth}(\bxi)$ the set of all points which are
$\bth$-resonant congruent to $\bxi$.
For $\bth\not = \mathbf 0$ we say that
$\BUps_{\bth}(\bxi) = \varnothing$
if $\bxi\notin\L(\bth)$.
\item\label{3}
We say that $\bxi$ and $\boldeta$
are \textsl{$\bth_1, \bth_2, \dots, \bth_l$-resonant congruent},
if there exists a sequence $\bxi_j\in\R^d, j=0, 1, \dots, l$ such that
$\bxi_0 = \bxi$, $\bxi_l = \boldeta$,
and $\bxi_j\in\BUps_{\bth_j}(\bxi_{j-1})$ for $j = 1, 2, \dots, l$.

\item
We say that $\boldeta\in\R^d$ and
$\bxi\in\R^d$ are \textsl{resonant congruent}, if either $\bxi=\boldeta$ or $\bxi$ and $\boldeta$ are $\bth_1, \bth_2, \dots, \bth_l$-resonant congruent with some
 $\bth_1, \bth_2, \dots, \bth_l\in\Bth_{\tilde k}'$. The set of \textbf{all} points, resonant congruent
to $\bxi$, is denoted by $\BUps(\bxi)$.
For points $\boldeta\in\BUps(\bxi)$ (note that this condition is equivalent to
$\bxi\in\BUps(\boldeta)$) we write $\boldeta\leftrightarrow\bxi$.
\end{enumerate}
\end{defn}

Note that $\BUps(\bxi)=\{\bxi\}$ for any $\bxi\in\Bxi(\GX)$. Now Lemma \ref{lem:Upsilon} immediately implies
\bec\label{cor:Upsilon}
For each $\bxi\in\Bxi(\GV)$ we have $\BUps(\bxi)\subset\Bxi(\GV)$ and thus
\bees
\Bxi(\GV)=\sqcup_{\bxi\in\Bxi(\GV)}\BUps(\bxi).
\enes
\enc
\bel\label{lem:diameter}
The diameter of $\BUps(\bxi)$ is bounded above by $mL_m$, if $\bxi\in\Bxi(\GV)$, $\GV\in\CV_m$.
\enl
\bep
This follows from Lemmas \ref{lem:Upsilon} and \ref{lem:propBUps}.
\enp
\bel\label{lem:finiteBUps}
For each $\bxi\in\Bxi(\GV),\ \GV\ne\R^d$, the set $\BUps(\bxi)$ is finite, and $\card{\BUps(\bxi)}\ll\rho_n^{(d-1)\alpha_{d-1}+0+}$ uniformly in $\bxi\in\R^d\setminus\Bxi(\R^d)$.
\enl
\bep
This immediately follows from Lemmas \ref{lem:propBUps}, \ref{lem:Upsilon}, \ref{lem:diameter}, Conditions A and D
and a standard covering argument.
\enp

\section{Description of the approach}



For any set $\CC\subset\R^d$ by $\CP(\CC)$ we denote  the orthogonal projection onto
$\mathrm{span}\{\be_{\bxi}\}_{\bxi\in\CC}$ in $B_2(\R^d)$ and by $\CP^{L}(\CC)$ the same
projection considered in $L_2(\R^d)$, i.e.
\bee
\CP^{L}(\CC)=\CF^*\chi_{\CC}\CF,
\ene
where $\CF$ is the Fourier transform and $\chi_{\CC}$ is the operator of multiplication
by the characteristic function of $\CC$.
Obviously, $\CP^{L}(\CC)$ is a well-defined (resp. non-zero) projection iff
$\CC$ is measurable (resp. has non-zero measure).
We also denote $\GH:=B_2(\R^d)$. 
Let us fix sufficiently large $n$, and denote (recall that $\la_n=\rho_n^2$)
\bee
\CX_n
:=\{\bxi\in\R^d,\,|\bxi|^2\in [0.7\la_n,
17.5\la_n
]\}.
\ene
We also put
\bee
\CA=\CA_n:=\cup_{\bxi\in\CX_n}\BUps(\bxi).
\ene
Lemma \ref{lem:diameter} implies that for each $\bxi\in\CA$ we have $|\bxi|^2\in[0.5\la_n,18\la_n]$. In particular, we have
\bee\label{eq:CARd}
\CA\cap\BXi(\R^d)=\emptyset.
\ene
For each $\GV\in\CV_m$, $m<d$, we put
\bee
\CA(\GV):=\CA_n\cap\Bxi(\GV).
\ene
We also denote
\bee
\hat\CA:=\{\bxi\not\in\CA,\ |\bxi|^2<\la_n\}
\ene
and
\bee
\check\CA:=\{\bxi\not\in\CA,\ |\bxi|^2>\la_n\}.
\ene


We plan to apply the gauge transform similar to the one used in \cite{ParSob} to the
operator $H$. The details of this procedure will be explained in Sections 8 and 9; here, we just
mention that we are going to introduce two operators: $H_1$ and $H_2$. The operator $H_1$ is unitary equivalent to
$H$: $H_1=U^{-1}HU$, where $U=e^{i\Psi}$ with a bounded pseudo-differential operator $\Psi$
with almost-periodic coefficients
(then Lemma \ref{lem:density2} implies that the densities of states of
$H$ and $H_1$ are the same). Moreover, $H_1=H_2+\CR$, where
$||\CR||\ll\rho_n^{-M+(2-d)}$ and $H_2=-\Delta+W$ is a self-adjoint pseudo-differential
operator with symbol $|\bxi|^2+w$ which satisfies the following property (see section 8 for more discussion about pseudo-differential
operators and their symbols):
\bee\label{eq:b3}
\hat w(\bth,\bxi)=0, \ \mathrm{if} \
(\bxi\not\in\La(\bth)\ \&\ \bxi\in\CA), \  \mathrm{or} \ (\bxi+\bth\not\in\La(\bth)\ \&\ \bxi\in\CA),  \  \mathrm{or} \
(\bth\not\in\Bth_{\tilde k}).
\ene
Now Corollary \ref{cor:H1H2} implies that if we prove that $N(\rho^2;H_2)$ satisfies \eqref{eq:main_lem1}, then $N(\rho^2;H_1)$
(and therefore $N(\rho^2;H)$)
satisfies the same asymptotic formula. This means that it is enough to establish the asymptotic expansion \eqref{eq:main_lem1} for
the operator $H_2$ instead of $H$.
Condition \eqref{eq:b3} implies that
for each $\bxi\in\CA$ the subspace $\CP(\BUps(\bxi))\GH$ is an invariant
subspace of $H_2$ (acting, remember, in $B_2(\R^d)$); we denote its
dimension by $m$ (which is finite by Lemma \ref{lem:finiteBUps}). We put
\bee\label{eq:h3bxi}
H_2(\BUps(\bxi)):=H_2\bigm|_{\BUps(\bxi)\GH}.
\ene
Note that the subspaces
$\CP(\hat\CA)\GH$ and $\CP(\check\CA)\GH$ are invariant as well;
by $H_2(\hat\CA)$ and $H_2(\check\CA)$ we denote the restrictions of
$H_2$ to these subspaces; we also denote by $H_2(\CA)$ the restriction of
$H_2$ to $\CP(\CA)\GH$. Also notice that if we consider the operator $H_2$
acting in $L_2(\R^d)$, then $\CP^L(\hat\CA)L_2(\R^d)$, $\CP^L(\check\CA)L_2(\R^d)$, and $\CP^L(\CA)L_2(\R^d)$
would still be invariant subspaces.
For each $\bxi\in\CA$ the operator
$H_2(\BUps(\bxi))$ is a finite-dimensional self-adjoint operator, so
its spectrum is purely discrete; we denote its eigenvalues (counting
multiplicities) by
$\la_1(\BUps(\bxi))\le\la_2(\BUps(\bxi))\le\dots\le\la_m(\BUps(\bxi))$
and the corresponding orthonormalized eigenfunctions by
$\{h_{j,\BUps(\bxi)}(\bx)\}$. Next, we list all points
$\boldeta\in\BUps(\bxi)$ in increasing order of their absolute
values; thus, we have put into correspondence to each point
$\boldeta\in\BUps(\bxi)$ a natural number $t=t(\boldeta)$ so that
$t(\boldeta)< t(\boldeta')$ if $|\boldeta|< |\boldeta'|$. If two
points $\boldeta=(\eta_1,\dots,\eta_d)$ and
$\boldeta'=(\eta'_1,\dots,\eta'_d)$ have the same absolute values,
we put them in the lexicographic order of their coordinates, i.e. we
say that $t(\boldeta)< t(\boldeta')$ if $\eta_1<\eta'_1$, or
$\eta_1=\eta'_1$ and $\eta_2<\eta'_2$, etc. Now we define the
mapping $g:\CA\to\R$ which puts into correspondence to each point
$\boldeta\in\CA$ the number $\la_{t(\boldeta)}(\BUps(\boldeta))$. This
mapping is an injection from $\CA$ onto the set of eigenvalues of
$H_2$, counting multiplicities (recall that we consider the operator
$H_2$ acting in $B_2(\R^d)$, so there is nothing miraculous about its
spectrum consisting of eigenvalues and their limit points). Moreover,
all eigenvalues of $H_2$ inside the interval $[0.75\la_n,17\la_n]$ have a pre-image
under $g$.
Arguments,
similar to the ones used in \cite{ParSob}, show that $g$ is a
measurable function. Similarly, we define the mapping $h:\CA\to
B_2(\R^d)$ by the formula $h_{\bxi}:=h_{t(\bxi),\BUps(\bxi)}$. Then for each $\bxi\in\CA$ the expression
$(2\pi)^{-d}\sum_{\boldeta\in\BUps(\bxi)}h_{\boldeta}(\bx)\overline{h_{\boldeta}(\by)}$
is the integral kernel of the projection $\CP(\BUps(\bxi))$. Therefore, we have
\bee\label{eq:kernels}
\sum_{\boldeta\in\BUps(\bxi)}h_{\boldeta}(\bx)\overline{h_{\boldeta}(\by)}=
\sum_{\boldeta\in\BUps(\bxi)}\be_{\boldeta}(\bx)\overline{\be_{\boldeta}(\by)}.
\ene
Another, perhaps slightly simpler way of establishing \eqref{eq:kernels} is just to notice that
$\vec\bh=F\vec\be$, where $\vec\bh$ is a column-vector with entries $\{h_{\boldeta}\}_{\boldeta\in\BUps(\bxi)}$,
$\vec\be$ is a column-vector with entries $\{\be_{\boldeta}\}_{\boldeta\in\BUps(\bxi)}$, and $F$ is a unitary matrix.
Then
\bee\label{eq:kernels1}
\sum_{\boldeta\in\BUps(\bxi)}h_{\boldeta}(\bx)\overline{h_{\boldeta}(\by)}=\vec\bh(\bx)^T \overline{\vec\bh(\by)}=
\vec\be(\bx)^TF^T\overline{F} \overline{\vec\be(\by)}=\vec\be(\bx)^T \overline{\vec\be(\by)}=
\sum_{\boldeta\in\BUps(\bxi)}\be_{\boldeta}(\bx)\overline{\be_{\boldeta}(\by)}.
\ene
When $\bxi\not\in\CA$, we put
$g(\bxi):=|\bxi|^2$ and $h_{\bxi}:=\be_{\bxi}$, so that
now the functions $g$ and $h$ are defined on all $\R^d$.
It
follows from the construction that $\{h_{\bxi}\}_{\bxi\in\R^d}$ is
an orthonormal basis in $B_2(\R^d)$.

All this implies that for each $\lambda\in
[0.75\la_n,17\la_n]$ the function
\bee\label{eq:projection}
e(\la;\bx,\by):=(2\pi)^{-d}\int_{G_{\la}}h_{\bxi}(\bx)
\overline{h_{\bxi}(\by)}d\bxi,\,\,\bx,\by\in\R^d,
\ene
is the
integral kernel of the spectral projection $E_{\la}(H_2;B_2(\R^d))$
of the operator $H_2$ in $B_2(\R^d)$; here, we have denoted
\bee\label{eq:mla} G_{\la}:=\{\bxi\in\R^d,\,g(\bxi)\le\la\}. \ene

Notice that $e(\la;\bx,\by)$ also gives the kernel of the
spectral projection of the operator $H_2$ considered in $L_2(\R^d)$.
Since this is the statement we will use in our proof, let us give a little bit
more detailed proof of it. We define a mapping
$$
U:\ f\mapsto
(2\pi)^{-d/2}\int_{\R^d}\overline{h_{\bxi}(\bx)}f(\bx)d\bx
$$
and
\bee\label{eq:CM}
\CM:=
\CA/\leftrightarrow,
\ene
where $\leftrightarrow$ is the equivalence relation introduced in Definition \ref{reachability:defn};
Lemma \ref{lem:finiteBUps} and property \eqref{eq:CARd} imply that $\CM$ is measurable.
It is not hard to see that $U$ is a unitary operator in $L_2(\R^d)$ and
$$
U^*:\ z\mapsto
(2\pi)^{-d/2}\int_{\R^d}{h_{\bxi}(\bx)}z(\bxi)d\bxi.
$$
Indeed, we have (recall the notation \eqref{eq:CM} and identity \eqref{eq:kernels}):
\bee
\bes
&U^*Uf(\bx)=(2\pi)^{-d}\int_{\R^d}\int_{\R^d}h_{\bxi}(\bx)\overline{h_{\bxi}(\by)}f(\by)d\by d\bxi\\
=
&(2\pi)^{-d}\bigl(\int_{\hat\CA}+\int_{\check\CA}+\int_{\CA}\bigr)\int_{\R^d}h_{\bxi}(\bx)\overline{h_{\bxi}(\by)}f(\by)d\by d\bxi\\
=
&(2\pi)^{-d}\bigl(\int_{\hat\CA}+\int_{\check\CA}\bigr)\int_{\R^d}h_{\bxi}(\bx)\overline{h_{\bxi}(\by)}f(\by)d\by d\bxi\\
+
&(2\pi)^{-d}\int_{\CM}\int_{\R^d}\sum_{\boldeta\in\BUps(\bxi)}h_{\boldeta}(\bx)\overline{h_{\boldeta}(\by)}f(\by)d\by d\bxi\\
=
&(2\pi)^{-d}\bigl(\int_{\hat\CA}+\int_{\check\CA}\bigr)\int_{\R^d}\be_{\bxi}(\bx)\overline{\be_{\bxi}(\by)}f(\by)d\by d\bxi\\
+
&(2\pi)^{-d}\int_{\CM}\int_{\R^d}\sum_{\boldeta\in\BUps(\bxi)}\be_{\boldeta}(\bx)\overline{\be_{\boldeta}(\by)}f(\by)d\by d\bxi\\
=
&(2\pi)^{-d}\int_{\R^d}\int_{\R^d}\be_{\bxi}(\bx)\overline{\be_{\bxi}(\by)}f(\by)d\by d\bxi=f(\bx).
\end{split}
\ene

Now we notice that for $\la\in[0.75\la_n,17\la_n]$ the function $e(\la;\bx,\by)$ is the kernel of
$U^*\chi_{G_{\la}}U$. Moreover, $UH_2(\CA)U^*$ is the operator of
multiplication by $g$ acting in $L_2(\CA)$. It remains to notice that the other two restrictions of
$H_2$ satisfy $UH_2(\hat\CA)U^*<0.75\la_n I$
and  $UH_2(\check\CA)U^*>17\la_n I$.
Now it immediately follows that for $\la\in[\la_n,16\la_n]$ $e(\la;\bx,\by)$ is the kernel of the
spectral projection of the operator $H_2$ considered in $L_2(\R^d)$.

Now \eqref{eq:density1} and Theorem 4.1 from \cite{Shu} (see \eqref{eq:densityalmost}) imply the
following result (note that since $g$ is a
measurable function, $G_{\la}$ is a measurable set):
\bel For $\la\in[\la_n,16\la_n]$ being a continuity point of $N(\la;H_2)$
we have:
\bee\label{eq:densityh3}
N(\la;H_2)=(2\pi)^{-d}\vol(G_{\la}).
\ene
\enl
\bep
For the proof it is enough to notice that $|h_{\bxi}(\bx)|=|\be_{\bxi}(\bx)|=1$ for $\bxi\not\in\CA$ and
$$
|h_{\bxi}(\bx)|^2\leq\card{\BUps(\bxi)}\ll\rho_n^{(d-1)\alpha_{d-1}+0+}
$$
for $\bxi\in\CA$ by \eqref{eq:kernels} and Lemma~\ref{lem:finiteBUps} and apply Lebesgue's Limit Theorem.
\enp
Since points of continuity of $N(\la)$ are dense, {\it the asymptotic expansion proven for such $\lambda$ can be extended to all $\la\in[\la_n,16\la_n]$ by taking the limit}. Thus, our next task is to compute $\vol(G_{\la})$.
Let us put
\bee\label{eq:47}
\hat A^+:=\{\bxi\in\R^d,\,g(\bxi)<\rho^2<|\bxi|^2\}
\ene
and
\bee\label{eq:48}
\hat A^-:=\{\bxi\in\R^d,\,|\bxi|^2<\rho^2<g(\bxi)\}.
\ene
\bel\label{lem:new1}
\bee\label{eq:46}
\vol(G_{\la})=w_d\rho^d+\vol\hat A^+-\vol\hat A^-.
\ene
\enl
\bep
We obviously have $G_{\la}=B(\rho)\cup \hat A^+\setminus \hat A^-$. Since $\hat A^-\subset B(\rho)$
and $\hat A^+\cap B(\rho)=\emptyset$, this implies \eqref{eq:46}.
\enp
\ber\label{rem:nnew1}
Properties of the mapping $g$ imply that we have $\hat A^+,\ \hat A^-\subset\CA$. Thus, in order to compute $N(\lambda)$, we need to analyze the behaviour of $g$ only inside $\CA$.
\enr
We will compute volumes of $\hat A^{\pm}$ by means of integrating their characteristic functions in a specially chosen set of
coordinates. The next section is devoted to introducing these coordinates.

\section{Coordinates}
In this section, we do some preparatory work before computing $\vol\hat A^{\pm}$. Namely,
we are going to introduce a convenient set of coordinates in $\Bxi(\GV)$.
Let $\GV\in\CV_m$ be fixed; since $\hat A^{\pm}\cap\BXi(\R^d)=\emptyset$, we will assume that $m<d$. Then, as we have seen, $\bxi\in\Bxi_1(\GV)$ if and only if
$\bxi_{\GV}\in\Om(\GV)$. Let $\{\GU_j\}$ be a collection of all subspaces $\GU_j\in\CV_{m+1}$
such that each $\GU_j$ contains $\GV$. Let $\bmu_j=\bmu_j(\GV)$ be (any) unit vector from $\GU_j\ominus\GV$.
Then it follows from Lemma~\ref{lem:verynew} that for $\bxi\in\Bxi_1(\GV)$, we have  $\bxi\in\Bxi_1(\GU_j)$ if and only if the estimate
$|\lu\bxi,\bmu_j\ru|=|\lu\bxi_{\GV^{\perp}},\bmu_j\ru|\le L_{m+1}$ holds. Thus, formula \eqref{eq:Bxibbis} implies that
\bee\label{eq:Bxibbbis}
\Bxi(\GV)=\{\bxi\in\R^d,\ \bxi_{\GV}\in\Omega(\GV)\ \&\ \forall j \ |\lu\bxi_{\GV^{\perp}},\bmu_j(\GV)\ru| > L_{m+1}\}.
\ene
The collection $\{\bmu_j(\GV)\}$ obviously coincides with
\bee
\{\bn(\bth_{\GV^{\perp}}),\ \bth\in\Bth_{\tilde k}\setminus\GV\}.
\ene

The set $\Bxi(\GV)$ is, in general, disconnected; it consists of several
connected components which we will denote by $\{\Bxi(\GV)_p\}_{p=1}^P$. Let us fix a connected component $\Bxi(\GV)_p$.
Then for some vectors $\{\tilde\bmu_j(p)\}_{j=1}^{J_p}\subset \{\pm\bmu_j\}$ we have
\bee\label{eq:Bxip}
\Bxi(\GV)_p=\{\bxi\in\R^d,\ \bxi_{\GV}\in\Omega(\GV)\ \&\ \forall j \ \lu\bxi_{\GV^{\perp}},\tilde\bmu_j(p)\ru > L_{m+1}\};
\ene
we assume that $\{\tilde\bmu_j(p)\}_{j=1}^{J_p}$ is the minimal set with this property, so that each hyperplane
$$
\{\bxi\in\R^d,\ \bxi_{\GV}\in\Omega(\GV)\ \ \&\ \ \lu\bxi_{\GV^{\perp}},\tilde\bmu_j(p)\ru = L_{m+1}\},\ j=1,\dots,J_p
$$
has a non-empty intersection with the boundary of $\Bxi(\GV)_p$. It is not hard to see that $J_p\ge d-m$. Indeed, otherwise $\Bxi(\GV)_p$ would have non-empty intersection with $\Bxi_1(\GV')$ for some $\GV'$, $\GV\subsetneq\GV'$. We also introduce
\bee\label{eq:Bxiptilde}
\tilde\Bxi(\GV)_p:=\{\bxi\in\GV^{\perp},\ 
\forall j \
\lu\bxi,\tilde\bmu_j(p)\ru > 0\}.
\ene
Note that our assumption that $\Bxi(\GV)_p$ is a connected component of $\Bxi(\GV)$ implies that for any
$\bxi\in\tilde\Bxi(\GV)_p$ and any $\bth\in\Bth_{\tilde k}\setminus\GV$ we have
\bee\label{eq:ne0}
\lu\bxi,\bth\ru=\lu\bxi,\bth_{\GV^{\perp}}\ru\ne 0.
\ene
We also put $K:=d-m-1$.

Let us first assume that the number $J_p$ of `defining planes' is the minimal possible, i.e. $J_p= K+1$. We will carry on this
assumption throughout most of the paper, and only in Section 11 we will discuss how to deal with the more general case of arbitrary
$\Xi(\GV)_p$.
If $J_p= K+1$, then the set
$\{\tilde\bmu_j(p)\}_{j=1}^{K+1}$ is linearly independent.
Let $\ba=\ba(p)$ be a unique
point from $\GV^\perp$ satisfying the following conditions: $\lu\ba,\tilde\bmu_j(p)\ru = L_{m+1}$, $j=1,\dots,K+1$.
Then, since the determinant of the Gram matrix of vectors $\tilde\bmu_j(p)$ is $\gg\rho_n^{0-}$, we have $|\ba|\ll L_{m+1}\rho_n^{0+}$.
We introduce the shifted cylindrical coordinates in $\Bxi(\GV)_p$. These coordinates will be denoted by $\bxi=(r;\tilde\Phi;X)$. Here,
$X=(X_1,\dots,X_m)$ is an arbitrary set of cartesian coordinates in $\Omega(\GV)$. These coordinates do not depend on the choice of the
connected component $\Bxi(\GV)_p$. The rest of the coordinates $(r,\tilde\Phi)$ are shifted spherical coordinates in $\GV^{\perp}$,
centered at $\ba$. This means that
\bee\label{eq:r}
r(\bxi)=|\bxi_{\GV^{\perp}}-\ba|
\ene
and
\bee\label{eq:tildePhi}
\tilde\Phi=\bn(\bxi_{\GV^{\perp}}-\ba)\in S_{\GV^{\perp}}.
\ene
More precisely, $\tilde\Phi\in M$, where $M=M_p:=\{\bn(\bxi_{\GV^{\perp}}-\ba),\ \bxi\in\Bxi(\GV)_p\}\subset S_{\GV^{\perp}}$
is a $K$-dimensional spherical simplex with $K+1$ sides. Note that
\bee\label{eq:Mp}
\bes
M_p&=\{\bn(\bxi_{\GV^{\perp}}-\ba),\ \bxi\in\Bxi(\GV)_p\}
=\{\bn(\bxi_{\GV^{\perp}}-\ba),\  \forall j \ \lu\bxi_{\GV^{\perp}},\tilde\bmu_j(p)\ru > L_{m+1}\}\\
&=\{\bn(\boldeta),\ \boldeta:=\bxi_{\GV^{\perp}}-\ba\in\GV^\perp,\  \forall j \ \lu\boldeta,\tilde\bmu_j(p)\ru > 0\}
=S_{\GV^{\perp}}\cap\tilde\Bxi(\GV)_p.
\end{split}
\ene
We will 
denote by $d\tilde\Phi$ the spherical Lebesgue measure on $M_p$.
For each non-zero vector $\bmu\in\GV^{\perp}$, we denote
\bee
W(\bmu):=\{\boldeta\in\GV^\perp,\ \lu\boldeta,\bmu\ru=0\}.
\ene
Thus, the sides of the simplex $M_p$ are intersections of $W(\tilde\bmu_j(p))$ with the sphere $S_{\GV^{\perp}}$.
Each vertex $\bv=\bv_t$, $t=1,\dots,K+1$ of $M_p$ is an intersection of $S_{\GV^{\perp}}$ with $K$ hyperplanes
$W(\tilde\bmu_j(p))$,
$j=1,\dots,K+1$, $j\ne t$. This means that $\bv_t$ is a unit vector from $\GV^{\perp}$ which is orthogonal to $\{\tilde\bmu_j(p)\}$,
$j=1,\dots,K+1$, $j\ne t$; this defines $\bv$ up to a multiplication by $-1$.
\bel\label{lem:newangles}
Let $\GU_1$ and $\GU_2$ be two strongly distinct subspaces each of which is a linear combination of some of the
vectors from $\{\tilde\bmu_j(p)\}$. Then the angle between them is not smaller than $s(\rho_n)$.
In particular, all non-zero angles between two sides of any dimensions of $M_p$ as well as all the distances between two vertexes $\bv_t$ and $\bv_{\tau}$, $t\ne\tau$,
are bounded below by $s(\rho_n)$.
\enl
\bep
First of all, we remark that $\GU_j$ are not, in general, quasi-lattice subspaces. However, each algebraic sum
$\GW_j:=\GV+\GU_j$ is a quasi-lattice subspace. Moreover, the angle between $\GW_1$ and $\GW_2$ is equal to the
angle between $\GU_1$ and $\GU_2$, so the first statement follows from Condition C. To prove the second statement, it
is enough to notice that any  non-zero angle between two sides (of arbitrary dimension) of $M_p$ is equal to
the angle between two subspaces $\GU_1$ and $\GU_2$ of the type considered in the first statement; the same can be said about the
distance between $\bv_t$ and $\bv_{\tau}$.
\enp
\bel\label{lem:sign}
Let $p$ be fixed.
Suppose, $\bth\in\Bth_{\tilde k}\setminus\GV$ and $\bth_{\GV^{\perp}}=\sum_{j=1}^{K+1} b_j\tilde\bmu_j(p)$.
Then either all coefficients $b_j$ are non-positive, or all of them are non-negative.
\enl
\bep
Suppose not. Then, without loss of generality, we can assume that $b_1>0$ and $b_2<0$.
Let $L$ be a spherical interval joining $\bv_1$ and $\bv_2$, i.e.
\bee
L=\{ \bu\in \overline{M_p}, \lu \bu,\tilde\bmu_j(p)\ru=0,\ j=3,\dots,K+1\}.
\ene
Note that $\lu \bv_1,\bth_{\GV^{\perp}}\ru=b_1\lu \bv_1,\tilde\bmu_1(p)\ru>0$ and
$\lu \bv_2,\bth_{\GV^{\perp}}\ru=b_2\lu \bv_2,\tilde\bmu_2(p)\ru<0$. Therefore, there is a point
$\bu\in L$ such that $\lu \bu,\bth_{\GV^{\perp}}\ru=0$. This means that $W(\bth_{\GV^{\perp}})$ has a
non-empty intersection with $M_p$, which contradicts \eqref{eq:ne0}.
\enp

Assume that the diameter of $M_p$ is $\le (100d^2)^{-1}$, which we can always achieve by taking sufficiently large $\tilde k$. We put $\Phi_q:=\frac{\pi}{2}-\phi(\bxi_{\GV^\perp}-\ba,\tilde\bmu_q(p))$, $q=1,\dots,K+1$.
The geometrical meaning of these coordinates is simple: $\Phi_q$ is the spherical distance between
$\tilde\Phi=\bn(\bxi_{\GV^{\perp}}-\ba)$ and $W(\tilde\bmu_q(p))$. The reason why we have introduced $\Phi_q$ is that in these coordinates some important objects will be especially simple (see e.g. Lemma~\ref{lem:products} below) which is very convenient for integration in Section 10. At the same time, the set of coordinates $(r,\{\Phi_q\})$ contains $K+2$ variables, whereas we only need $K+1$ coordinates in
$\GV^{\perp}$. Thus, we have one constraint for variables $\Phi_j$. Namely,
let $\{\be_j\}$, $j=1,\dots,K+1$ be a fixed orthonormal basis in $\GV^{\perp}$ chosen in such a way that the $K+1$-st axis
passes through $M_p$.
Then we have $\be_j=\sum_{l=1}^{K+1}a_{jl}\tilde\bmu_l$ with some matrix $\{a_{jl}\}$, $j,l=1,\dots,K+1$, and $\tilde\bmu_l=\tilde\bmu_l(p)$.
Therefore (recall that we denote $\boldeta:=\bxi_{\GV^{\perp}}-\ba$),
\bee\label{eq:etaj}
\eta_j=\lu\boldeta,\be_j\ru=r\sum_{q=1}^{K+1}a_{jq}\sin\Phi_q
\ene
and, since $r^2(\bxi)=|\boldeta|^2=\sum_{j=1}^{K+1}\eta_j^2$, this
implies that
\bee\label{eq:odin}
\sum_j(\sum_q a_{jq}\sin\Phi_q)^2=1,
\ene
which is our constraint.

Let us also put
\bee\label{eq:etajdash}
\eta_j':=\frac{\eta_j}{|\boldeta|}=\sum_{q=1}^{K+1}a_{jq}\sin\Phi_q.
\ene
Then we can write the surface element $d\tilde\Phi$ in the coordinates $\{\eta_j'\}$ as
\bee\label{eq:surfaceelement}
d\tilde\Phi=\frac{d\eta_1'\dots d\eta_K'}{\eta_{K+1}}=\frac{d\eta_1'\dots d\eta_K'}{(1-\sum_{j=1}^K(\eta_{j}')^2)^{1/2}},
\ene
where the denominator is bounded below by $1/2$ by our choice of the basis $\{\be_j\}$.

\bel\label{lem:Al}
For each $p,l$ we have $|a_{pl}|\le s(\rho_n)^{-1}$.
\enl
\bep
This follows from the fact that for each $p$
$|a_{pl}|$ is the length of the projection of $\be_p$ onto
$\tilde\bmu_l$ parallel to the linear space spanned by all $\tilde\bmu_j$, $j\ne l$. Since the absolute value of the sine of the angle between
$\tilde\bmu_l$ and the linear space spanned by all $\tilde\bmu_j$, $j\ne l$, is at least $s(\rho_n)$, this implies that
for each $l,p$ we have $|a_{pl}|\le s(\rho_n)^{-1}$, which finishes the proof.
\enp
\bel\label{lem:anglebelow}
We have $\max_j\sin\Phi_j(\boldeta)\ge s(\rho_n) d^{-3/2}$.
\enl
\bep
Suppose not. Then for each $l$ we have $\sin\Phi_l(\boldeta)< s(\rho_n)d^{-3/2}$. Since all
$\sin\Phi_l$ are positive, Lemma \ref{lem:Al} implies
\bees
\sum_j(\sum_l a_{jl}\sin\Phi_l)^2<d(d s(\rho_n)^{-1} s(\rho_n) d^{-3/2})^2=1,
\enes
which contradicts \eqref{eq:odin}.
\enp

The next lemma describes the dependence on $r$ of all possible inner products $\lu\bxi,\bth\ru$, $\bth\in\Bth_{\tilde k}$, $\bxi\in\Bxi(\GV)_p$.
\bel\label{lem:products}
Let
$\bxi\in\Bxi(\GV)_p$, $\GV\in\CV_m$, and $\bth\in\Bth_{\tilde k}$.

(i) If $\bth\in\GV$, then $\lu\bxi,\bth\ru$ does not depend on $r$.

(ii) If $\bth\not\in\GV$ and $\bth_{\GV^{\perp}}=\sum_{q}b_q\tilde\bmu_q(p)$, 
then
\bee\label{eq:innerproduct1}
\lu\bxi,\bth\ru=\lu X,\bth_{\GV}\ru+L_{m+1}\sum_{q}b_q+r(\bxi)\sum_{q}b_q\sin\Phi_q.
\ene

In the case (ii) all the coefficients $b_q$ are either non-positive or non-negative and
each non-zero coefficient $b_q$ satisfies
\bee\label{eq:n10}
\rho_n^{0-}\le |b_q| \le \rho_n^{0+}.
\ene
\enl
\bep
We begin by noticing that
\bee
\lu\bxi,\bth\ru=\lu X,\bth_{\GV}\ru+\lu \bxi_{\GV^{\perp}},\bth_{\GV^{\perp}}\ru,
\ene
from which
part (i) immediately follows. Recalling that $\bxi_{\GV^{\perp}}=\ba+\boldeta$,
we obtain
\bee\label{eq:innerproduct3}
\bes
&\lu\bxi,\bth\ru=\lu X,\bth_{\GV}\ru+\lu\ba,\bth_{\GV^{\perp}}\ru+\lu\boldeta,\bth_{\GV^{\perp}}\ru\\
&=\lu X,\bth_{\GV}\ru+\sum_q b_q\lu\ba,\tilde\bmu_q\ru+\sum_q b_q\lu\boldeta,\tilde\bmu_q\ru\\
&=\lu X,\bth_{\GV}\ru+\sum_{q} b_qL_{m+1}+r\sum_q b_q\sin\Phi_q.
\end{split}
\ene
The last statement follows from Lemmas \ref{lem:sign} and \ref{lem:coefficients}. The application of Lemma \ref{lem:sign}
is straightforward, so let us discuss the application of Lemma \ref{lem:coefficients}. Suppose, $\bth_{\GV^{\perp}}=\sum_{q=1}^{K+1}b_q\tilde\bmu_q(p)$,
where $\bth$ belongs to $\Bth_{\tilde k}$ (but $\tilde\bmu_q(p)$, in general, is not in  $\Bth_{\tilde k}$).
We know that the linear span of each $\tilde\bmu_q(p)$ and $\GV$ is an element of $\CV_{m+1}$. Therefore, for each $q=1,\dots,K+1$, there
is a vector $\bnu_q\in\Bth_{\tilde k}$ such that $\tilde\bmu_q(p)$ is proportional to $(\bnu_q)_{\GV^\perp}$,
$\tilde\bmu_q(p)=C(q)(\bnu_q)_{\GV^\perp}$, where $\rho_n^{0-}\le|C(q)|\le\rho_n^{0+}$. Now we choose arbitrary linearly independent
vectors $\bnu_{K+2},\dots,\bnu_d\in(\GV\cap\Bth_{\tilde k})$. Then we can write
$\bth=\bth_{\GV^\perp}+\bth_{\GV}=\sum_{q=1}^{K+1}C(q)b_q(\bnu_q)_{\GV^\perp}+\sum_{q=K+2}^{d}b_q\bnu_q$. Now we can apply Lemma \ref{lem:coefficients} directly.
\enp



\section{Pseudo-differential operators}

In this and the next sections, we construct operators $H_1$ and $H_2$ described in Section 6. Most of the material in these two sections
is very similar to the corresponding sections of \cite{ParSob}, as are the proofs of most of the statements.
Therefore, we will often omit the proofs, instead referring the reader to \cite{Sob}, \cite{Sob1}, and \cite{ParSob}.

\subsection{Classes of PDO's}\label{classes:subsect}
Before we define the pseudo-differential operators (PDO's), we
introduce the relevant classes of symbols.

For any $f\in L_2(\R^d)$ we define the Fourier transform:
\begin{equation*}
(\mathcal F f)(\bxi)
= \frac{1}{(2\pi)^{\frac{d}{2}}} \int_{\R^d} e^{-i\bxi
\bx}f(\bx) d\bx,\ \bxi\in\R^d.
\end{equation*}
Let us now define the symbols we will consider and operators associated with
them. Let $b = b(\bx, \bxi)$, $\bx, \bxi\in\R^d$, be an
almost-periodic (in $\bx$) complex-valued function, i.e. for some countable set $\hat{\Bth}$ of frequencies (we always assume $\hat\Bth$ to be symmetric and to contain $0$; starting from the middle of this section, the set
$\hat\Bth$ will be assumed to be finite)
\begin{equation}\label{eq:sumf}
b(\bx, \bxi) = \sum\limits_{\bth\in\hat{\Bth}}\hat{b}(\bth, \bxi)\be_{\bth}(\bx)
\end{equation}
where
\bees
\hat{b}(\bth, \bxi):=\BM_\bx(b(\bx,\bxi)\be_{-\bth}(\bx))
\enes
are Fourier coefficients of $b$ (recall that $\BM$ is the mean of an almost-periodic function). We always assume that \eqref{eq:sumf}
converges absolutely.
Put $\lu \bt \ru := \sqrt{1+|\bt|^2},\
\forall \bt\in\R^d$. We notice that
\begin{equation}\label{weight:eq}
\lu \bxi + \boldeta\ru\le 2\lu\bxi\ru \lu\boldeta\ru, \ \forall
\bxi, \boldeta\in \R^d.
\end{equation}
We say that the symbol $b$ belongs to the
class $\BS_{\a}=\BS_{\a}(\beta) = \BS_{\a}(\beta,\,\hat{\Bth})$,\ $\a\in\R$, $0<\beta\leq1$, if
for any $l\ge 0$ and any non-negative $s\in\Z$ the condition
\begin{equation}\label{1b1:eq}
\1 b \1^{(\a)}_{l, s} :=
\max_{|\bs| \le s}
 \sum\limits_{\bth\in\hat{\Bth}}\lu \bth\ru^{l}\sup_{\bxi}\,\lu\bxi\ru^{(-\a + |\bs|)\beta}
|\BD_{\bxi}^{\bs} \hat b(\bth, \bxi)|<\infty, \ \ |\bs| = s_1+ s_2 + \dots + s_d,
\end{equation}
is fulfilled.
The quantities \eqref{1b1:eq} define norms on the class $\BS_\a$. Note that $\BS_\a$ is an increasing function of $\a$,
i.e. $\BS_{\a}\subset\BS_{\g}$ for $\a < \g$. For later reference we
write here the following convenient bound that follows from definition
\eqref{1b1:eq} and property \eqref{weight:eq}:
\bee
\sum\limits_{\bth\in\hat{\Bth}}\lu\bth\ru^{l}\sup_{\bxi}\,\lu\bxi\ru^{(-\a+s+1)\b}(|\BD^{\bs}_{\bxi}\hat b(\bth, \bxi+ \boldeta) -
\BD^{\bs}_{\bxi}\hat b(\bth, \bxi)|)\le C\1 b\1^{(\a)}_{l, s+1}  \lu\boldeta\ru^{|\a-s-1|\b}
|\boldeta|, \ s = |\bs|, \label{differ:eq}
\ene
with a constant $C$ depending only on $\a, s$, and $\beta$. For a vector $\boldeta\in\R^d$ introduce
the symbol
\begin{equation}\label{bboldeta:eq}
b_{\boldeta}(\bx, \bxi) = b(\bx, \bxi+\boldeta), \boldeta\in\R^d,
\end{equation}
so that $\hat b_{\boldeta}(\bth, \bxi) = \hat b(\bth, \bxi+\boldeta)$ .
The bound \eqref{differ:eq} implies that for all $|\boldeta|\le C$ we have
\begin{equation}\label{differ1:eq}
\1 b - b_{\boldeta}\1^{(\a-1)}_{l, s}\le C \1 b\1^{(\a)}_{l, s+1}|\boldeta|,\
\end{equation}
uniformly in $\boldeta$: $|\boldeta|\le C$.

Now we define the PDO $\op(b)$ in the usual way:
\begin{equation}\label{eq:deff}
\op(b)u(\bx) = \frac{1}{(2\pi)^{\frac{d}{2}}} \int  b(\bx, \bxi)
e^{i\bxi \bx} (\mathcal Fu)(\bxi) d\bxi,
\end{equation}
the integral being over $\Rd$. Under the condition $b\in\BS_\a$
 the integral in the r.h.s. is clearly finite for any
$u$ from the Schwarz
class $\plainS(\Rd)$. Moreover, the condition $b\in \BS_0$
guarantees the boundedness of $\op(b)$ in $L_2(\Rd)$, see
Proposition \ref{bound:prop}. Unless otherwise stated, from now on
$\plainS(\Rd)$ is taken as a natural domain for all PDO's at hand, when they act in $L_2$.
Applying the standard regularization procedures to definition \eqref{eq:deff} (see, e.g., \cite{Shu0}), we can also consider the action of
$\op(b)$ when we apply it to an exponential $\be_{\bnu}$. Then we have
\bee\label{eq:actionbe}
\op(b)\be_{\bnu}=\sum_{\bth\in\hat\Bth}\hat{b}(\bth, \bnu)\be_{\bnu+\bth}.
\ene
This action can be extended by linearity to all quasi-periodic functions (i.e. finite linear combinations of $\be_{\bnu}$).
Moreover, if the order $\al=0$, by continuity this action can be extended to all of $B_2(\R^d)$; this extension
 has the same norm as $\op(b)$ acting in $L_2$ (see \cite{Shu0}). Thus, in what follows, when we speak about a pseudo-differential operator
 with almost-periodic symbol
 acting in $B_2$, we mean that its domain is whole $B_2$ (when the order is non-positive), or the the space of all quasi-periodic functions
 (for operators with positive order). And, when we make a statement about the norm of a pseudo-differential operator with almost-periodic symbol,
 we will not specify whether the operator acts in $L_2(\R^d)$ or  $B_2(\R^d)$, since these norms are the same.
Notice that the operator $\op(b)$ is
symmetric if its symbol satisfies the condition
\begin{equation}\label{selfadj:eq}
\hat b(\bth, \bxi) = \overline{\hat b(-\bth, \bxi+\bth)}.
\end{equation}
We shall call such symbols \textsl{symmetric}.


We note that in the very beginning when we consider \eqref{eq:Sch}, our operator $\op(b)$ is a multiplication by a function $b$ (in particular, $b\in\BS_0$). However, during modifications and transformations below our perturbation will eventually become a pseudo-differential operator.
Thus, it is convenient in abstract statements to consider $b$ a pseudo-differential symbol from some $\BS_\a$ class.

\subsection{Some basic results on the calculus of almost-periodic PDO's}
We begin by listing some elementary results for almost-periodic PDO's. The proof is very similar (with obvious changes) to the proof of analogous statements in \cite{Sob}. In what follows, {\it if we need to calculate a product of two (or more) operators with some symbols $b_j\in\BS_{\a_j}(\hat{\Bth}_j)$ we will always consider that $b_j\in\BS_{\a_j}(\sum_j\hat{\Bth}_j)$ where, of course, all added terms are assumed to have zero coefficients in front of them}.


\begin{prop}\label{bound:prop}
Suppose that $\1 b\1^{(0)}_{0, 0}<\infty$. Then $\op(b)$ is bounded in both $L_2(\R^d)$ and $B_2(\R^d)$ and $\|\op(b)\|\le \1 b \1^{(0)}_{0, 0}$.
\end{prop}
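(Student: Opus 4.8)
The plan is to split $b$ into the series $\sum_{\bth\in\hat\Bth}b_\bth$, where $b_\bth(\bx,\bxi):=\hat b(\bth,\bxi)\be_\bth(\bx)$, and to show that each $\op(b_\bth)$ is bounded with norm $\le\sup_\bxi|\hat b(\bth,\bxi)|$, so that summing gives $\|\op(b)\|\le\sum_{\bth}\sup_\bxi|\hat b(\bth,\bxi)|=\1 b\1^{(0)}_{0,0}$.

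First I would treat the $L_2(\Rd)$ case. For $u\in\plainS(\Rd)$, inserting $\be_\bth(\bx)e^{i\bxi\bx}=e^{i(\bxi+\bth)\bx}$ into \eqref{eq:deff} shows
\[
\op(b_\bth)u(\bx)=\be_\bth(\bx)\Bigl[(2\pi)^{-d/2}\int_{\Rd}\hat b(\bth,\bxi)\,e^{i\bxi\bx}(\CF u)(\bxi)\,d\bxi\Bigr]=\be_\bth(\bx)\,\bigl(m_\bth(D)u\bigr)(\bx),
\]
i.e. $\op(b_\bth)$ is the composition of the Fourier multiplier $m_\bth(D)$ with symbol $m_\bth(\bxi)=\hat b(\bth,\bxi)$ and the operator of multiplication by $\be_\bth$. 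Since $|\be_\bth(\bx)|\equiv 1$, the latter is unitary on $L_2(\Rd)$, while Plancherel's theorem gives $\|m_\bth(D)\|=\|m_\bth\|_{L_\infty}=\sup_\bxi|\hat b(\bth,\bxi)|$ (finite because $\1 b\1^{(0)}_{0,0}<\infty$). Hence $\|\op(b_\bth)\|_{L_2}\le\sup_\bxi|\hat b(\bth,\bxi)|$. Because $\CF u$ is Schwartz, hence integrable, and $\sum_{\bth}\sup_\bxi|\hat b(\bth,\bxi)|<\infty$, the $\bxi$-integral and the $\bth$-sum in \eqref{eq:deff} may be interchanged, so $\op(b)u=\sum_{\bth}\op(b_\bth)u$ in $L_2(\Rd)$ and, in fact, $\sum_{\bth}\op(b_\bth)$ converges in operator norm; the triangle inequality then yields $\|\op(b)\|_{L_2}\le\sum_{\bth}\sup_\bxi|\hat b(\bth,\bxi)|=\1 b\1^{(0)}_{0,0}$.

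For the action in $B_2(\Rd)$ I would argue on the dense set of finite linear combinations of exponentials. By \eqref{eq:actionbe}, $\op(b_\bth)$ sends $\be_\bnu\mapsto\hat b(\bth,\bnu)\be_{\bnu+\bth}$, so it factors as the ``diagonal'' operator $\be_\bnu\mapsto\hat b(\bth,\bnu)\be_\bnu$ followed by the map $\be_\bnu\mapsto\be_{\bnu+\bth}$. Since $\{\be_\bnu\}_{\bnu\in\R^d}$ is orthonormal in $B_2(\Rd)$, the first factor has norm $\le\sup_\bnu|\hat b(\bth,\bnu)|\le\sup_\bxi|\hat b(\bth,\bxi)|$ on finite sums, and the second merely permutes the orthonormal system, hence is an isometry. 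So again $\|\op(b_\bth)\|_{B_2}\le\sup_\bxi|\hat b(\bth,\bxi)|$, and summing as before gives $\|\op(b)\|_{B_2}\le\1 b\1^{(0)}_{0,0}$. Alternatively, once the $L_2$ bound is in hand, the $B_2$ bound follows from the equality of the $L_2$- and $B_2$-operator norms of almost-periodic PDO's established in \cite{Shu0}.

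This argument has no substantial obstacle; the only points needing (routine) care will be the justification of the interchange of the $\bth$-sum with the $\bxi$-integral --- immediate from the assumed absolute convergence of \eqref{eq:sumf} together with $\1 b\1^{(0)}_{0,0}<\infty$ and $\CF u\in L_1(\Rd)$ --- and the verification that the operator-norm limit $\sum_{\bth}\op(b_\bth)$ really coincides with $\op(b)$ as defined by \eqref{eq:deff}.
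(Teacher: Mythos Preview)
Your argument is correct and is precisely the standard one: each $\op(b_\bth)$ factors as a Fourier multiplier of norm $\sup_\bxi|\hat b(\bth,\bxi)|$ followed by multiplication by the unimodular $\be_\bth$, and summing gives the bound $\1 b\1^{(0)}_{0,0}=\sum_\bth\sup_\bxi|\hat b(\bth,\bxi)|$. The paper does not actually write out a proof of this proposition; it simply lists it among ``elementary results for almost-periodic PDO's'' whose proofs are ``very similar (with obvious changes) to the proof of analogous statements in \cite{Sob}'', so your proof is exactly the kind of routine verification the authors had in mind.
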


Since $\op(b) u\in\plainS(\Rd)$ for any $b\in\BS_{\a}$ and $u\in
\plainS(\Rd)$, the product $\op(b) \op(g)$, $b\in \BS_{\a}(\hat{\Bth}_1), g\in
\BS_{\g}(\hat{\Bth}_2)$, is well defined on $\plainS(\Rd)$. A straightforward
calculation leads to the following formula for the symbol $b\circ g
$ of the product $\op(b)\op(g)$:
\begin{equation*}
(b\circ g)(\bx, \bxi) = \sum_{\bth\in\hat{\Bth}_1,\, \bphi\in\hat{\Bth}_2}
\hat b(\bth, \bxi +\bphi) \hat g(\bphi, \bxi) e^{i(\bth+\bphi)\bx},
\end{equation*}
and hence
\begin{equation}\label{prodsymb:eq}
\widehat{(b\circ g)}(\bchi, \bxi) =
\sum_{\bth +\bphi = \bchi} \hat b (\bth, \bxi +\bphi) \hat g(\bphi,
\bxi),\ \bchi\in\hat{\Bth}_1+\hat{\Bth}_2,\ \bxi\in \Rd.
\end{equation}

We have

\begin{prop}\label{product:prop}
Let $b\in\BS_{\a}(\hat{\Bth}_1)$,\ $g\in\BS_{\g}(\hat{\Bth}_2)$. Then $b\circ g\in\BS_{\a+\g}(\hat{\Bth}_1+\hat{\Bth}_2)$
and
\begin{equation*}
\1 b\circ g\1^{(\a+\g)}_{l,s} \le C \1 b\1^{(\a)}_{l,s} \1 g\1^{(\g)}_{l+(|\a|+s)\beta,s},
\end{equation*}
with a constant $C$ depending only on $l,\ \alpha,\ s$.
\end{prop}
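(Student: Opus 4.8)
The starting point is the composition formula \eqref{prodsymb:eq}, which expresses $\widehat{(b\circ g)}(\bchi,\bxi)$ as a sum over decompositions $\bchi = \bth + \bphi$ with $\bth\in\hat\Bth_1$, $\bphi\in\hat\Bth_2$, of the product $\hat b(\bth,\bxi+\bphi)\hat g(\bphi,\bxi)$. The plan is to differentiate this identity $|\bs|$ times in $\bxi$ (with $|\bs|\le s$), apply the Leibniz rule, and then estimate each resulting term $\BD_\bxi^{\bs_1}\hat b(\bth,\bxi+\bphi)\,\BD_\bxi^{\bs_2}\hat g(\bphi,\bxi)$, $\bs_1+\bs_2=\bs$, using the defining seminorms \eqref{1b1:eq} of $b$ and $g$. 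The only subtlety is that the $\bxi$-argument of $\hat b$ is shifted by $\bphi$, so the weight $\lu\bxi+\bphi\ru$ appears rather than $\lu\bxi\ru$; this is handled by \eqref{weight:eq}, which gives $\lu\bxi+\bphi\ru \le 2\lu\bxi\ru\lu\bphi\ru$ and hence, raising to the relevant (possibly negative) power, a bound of the form $\lu\bxi+\bphi\ru^{(-\a+|\bs_1|)\beta}\le 2^{|(-\a+|\bs_1|)\beta|}\lu\bxi\ru^{(-\a+|\bs_1|)\beta}\lu\bphi\ru^{|(-\a+|\bs_1|)\beta|}$ (with care about the sign of the exponent, where one absorbs the constant $2^{\cdot}$ into $C$).

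Concretely, I would fix $\bs$ with $|\bs|=s$, write
\[
\lu\bchi\ru^l \le C\lu\bth\ru^l\lu\bphi\ru^l
\]
by \eqref{weight:eq}, and estimate, for each splitting $\bs=\bs_1+\bs_2$,
\[
\lu\bth\ru^l\lu\bphi\ru^l\,\sup_\bxi \lu\bxi\ru^{(-\a-\g+s)\beta}\bigl|\BD_\bxi^{\bs_1}\hat b(\bth,\bxi+\bphi)\,\BD_\bxi^{\bs_2}\hat g(\bphi,\bxi)\bigr|.
\]
Splitting the power $\lu\bxi\ru^{(-\a-\g+s)\beta}=\lu\bxi\ru^{(-\a+|\bs_1|)\beta}\lu\bxi\ru^{(-\g+|\bs_2|)\beta}$ (using $|\bs_1|+|\bs_2|=s$), moving the $\bphi$-shift off the $b$-factor via \eqref{weight:eq} as above, one bounds the $b$-factor by $\1 b\1^{(\a)}_{l,s}$ after summing in $\bth$, and the $g$-factor by $\1 g\1^{(\g)}_{l',s}$ where the extra weight $\lu\bphi\ru^{l}$ (from $\lu\bchi\ru^l$) together with $\lu\bphi\ru^{|(-\a+|\bs_1|)\beta|}\le \lu\bphi\ru^{(|\a|+s)\beta}$ (from the shift) forces the index $l'=l+(|\a|+s)\beta$, exactly as in the statement. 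Summing over the finitely many splittings $\bs=\bs_1+\bs_2$ (a number depending only on $s$ and $d$) and over $|\bs|\le s$ produces the constant $C=C(l,\a,s)$. Along the way one checks the absolute convergence of the double series defining $b\circ g$, which is immediate from the same bounds since $\1 b\1$ and $\1 g\1$ control the $\lu\bth\ru^l$-weighted sums; this also justifies that $b\circ g\in\BS_{\a+\g}(\hat\Bth_1+\hat\Bth_2)$, with the convention stated just before the proposition that symbols are regarded as living on the common frequency set.

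The only place requiring genuine care — the ``main obstacle'', such as it is — is the bookkeeping of the shifted weight $\lu\bxi+\bphi\ru$ versus $\lu\bxi\ru$ when the exponent $(-\a+|\bs_1|)\beta$ can have either sign: one must use \eqref{weight:eq} in the correct direction so that the extraneous $\lu\bphi\ru$-power lands on the $g$-factor (where it is affordable, being compensated by raising the seminorm index of $g$) and never on the $b$-factor. Everything else is a routine application of Leibniz's rule and the definitions; since the argument is essentially identical to the corresponding one in \cite{Sob}, I would carry out the estimate in this level of detail and then remark that the remaining verifications are as in \cite{Sob}.
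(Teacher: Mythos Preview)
Your proposal is correct and follows exactly the standard route: Leibniz on the composition formula \eqref{prodsymb:eq}, then \eqref{weight:eq} to convert $\lu\bxi+\bphi\ru$-weights into $\lu\bxi\ru$-weights at the cost of an extra $\lu\bphi\ru^{(|\a|+s)\beta}$ absorbed into the $g$-seminorm index. The paper itself does not spell out a proof but simply refers to \cite{Sob}, whose argument is precisely the one you outline; the only adaptation needed (and the one you make) is that the norm \eqref{1b1:eq} here involves the weighted $\ell^1$-sum over frequencies rather than a sup, so one uses $\sup_\bxi\sum\le\sum\sup_\bxi$ and $\lu\bth+\bphi\ru^l\le 2^l\lu\bth\ru^l\lu\bphi\ru^l$ to separate the $\bth$- and $\bphi$-sums.
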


We are also interested in the estimates for symbols of commutators.
For PDO's $A, \Psi_l, \ l = 1, 2, \dots ,N$, denote
\begin{gather*}
\ad(A; \Psi_1, \Psi_2, \dots, \Psi_N)
= i\bigl[\ad(A; \Psi_1, \Psi_2, \dots, \Psi_{N-1}), \Psi_N\bigr],\\
\ad(A; \Psi) = i[A, \Psi],\ \ \ad^N(A; \Psi) = \ad(A; \Psi, \Psi,
\dots, \Psi),\ \ad^0(A; \Psi) = A.
\end{gather*}
For the sake of convenience we use the notation $\ad(a;  \psi_1,
\psi_2, \dots, \psi_N)$ and $\ad^N(a, \psi)$ for the symbols of
multiple commutators. It follows from \eqref{prodsymb:eq} that the
Fourier coefficients of the symbol $\ad(b,g)$ are given by
\begin{multline}\label{comm:eq}
\widehat{\ad(b, g)}(\bchi, \bxi) = i
\sum_{\bth +\bphi = \bchi} \bigl[\hat b(\bth, \bxi +\bphi) \hat
g(\bphi, \bxi) - \hat b(\bth, \bxi)
\hat g(\bphi, \bxi + \bth)\bigr],\ \ \bxi\in \Rd.
\end{multline}

\begin{prop}\label{commut0:prop}
Let $b\in \BS_{\a}(\hat{\Bth})$ and $g_j\in\BS_{\g_j}(\hat{\Bth}_j)$,\ $j = 1, 2, \dots, N$.
Then $\ad(b; g_1, \dots, g_N) \in\BS_{\g}(\hat{\Bth}+\sum_j\hat{\Bth}_j)$ with
$$
\g = \a+\sum_{j=1}^N(\g_j-1),
$$
and
\begin{equation}\label{commutator:eq}
\1 \ad(b; g_1, \dots, g_N)\1^{(\g)}_{l,s} \le C \1 b\1^{(\a)}_{p,s+N}
\prod_{j=1}^N \1 g_j\1^{(\g_j)}_{p,s+N-j+1},
\end{equation}
where $C$ and $p$ depend on $l,s,N,\a$ and $\g_j$.
\end{prop}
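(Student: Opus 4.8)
The plan is to argue by induction on $N$, the case $N=1$ carrying all the analytic content and the inductive step being a purely algebraic reduction.

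\emph{Base case $N=1$.} Start from the commutator formula \eqref{comm:eq}. Estimating the two products in it separately via Proposition~\ref{product:prop} would only produce a symbol of order $\a+\g$; the point is to extract the one order gained from the antisymmetry, which we do by the telescoping identity
\begin{equation*}
\hat b(\bth,\bxi+\bphi)\hat g(\bphi,\bxi)-\hat b(\bth,\bxi)\hat g(\bphi,\bxi+\bth)
=\bigl(\hat b(\bth,\bxi+\bphi)-\hat b(\bth,\bxi)\bigr)\hat g(\bphi,\bxi)
+\hat b(\bth,\bxi)\bigl(\hat g(\bphi,\bxi)-\hat g(\bphi,\bxi+\bth)\bigr).
\end{equation*}
Writing $\hat b(\bth,\bxi+\bphi)-\hat b(\bth,\bxi)=\sum_{k=1}^d\phi_k\int_0^1(\p_{\xi_k}\hat b)(\bth,\bxi+t\bphi)\,dt$ and noting that $\phi_k\hat g(\bphi,\bxi)$ is the $\bphi$-th Fourier coefficient of $-i\p_{x_k}g\in\BS_\g$, the first sum is, after integrating in $t$ and summing over the splitting $\bth+\bphi=\bchi$, controlled by Proposition~\ref{product:prop} applied to the product of a symbol of order $\a-1$ (a $\bxi$-derivative of $b$) with a symbol of order $\g$ (an $\bx$-derivative of $g$); equivalently one may invoke the difference bound \eqref{differ:eq}, whose explicit factor $\lu\boldeta\ru^{|\a-1|\b}|\boldeta|$ with $\boldeta=\bphi$ is absorbed into the frequency weight $\lu\bphi\ru^{l}$ carried by $g$ at the cost of enlarging $l$. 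The second sum is handled symmetrically, with the roles of $b$ and $g$ interchanged and $\boldeta=\bth$. This yields $\ad(b,g)\in\BS_{\a+\g-1}$ together with $\1\ad(b,g)\1^{(\a+\g-1)}_{l,s}\le C\1 b\1^{(\a)}_{p,s+1}\1 g\1^{(\g)}_{p,s+1}$ for a suitable $p=p(l,s,\a,\g)$ when $s=0$; for general $s$ one differentiates \eqref{comm:eq} in $\bxi$ and applies the Leibniz rule, which reproduces finitely many terms of exactly the same shape with the $\bxi$-derivatives distributed among the two factors, so the previous estimate applies after raising the smoothness index on the right by $s$.

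\emph{Inductive step.} Since $\ad(b;g_1,\dots,g_N)=i\bigl[\ad(b;g_1,\dots,g_{N-1}),g_N\bigr]$, the case $N=1$ applied with $b$ replaced by $A:=\ad(b;g_1,\dots,g_{N-1})$ — which by the induction hypothesis belongs to $\BS_{\a+\sum_{j=1}^{N-1}(\g_j-1)}$ — and $g$ replaced by $g_N$ shows that $\ad(b;g_1,\dots,g_N)\in\BS_\g$ with $\g=\a+\sum_{j=1}^N(\g_j-1)$ and
\begin{equation*}
\1\ad(b;g_1,\dots,g_N)\1^{(\g)}_{l,s}\le C\,\1 A\1^{(\cdot)}_{p',s+1}\,\1 g_N\1^{(\g_N)}_{p',s+1}.
\end{equation*}
Substituting the inductive bound for $\1 A\1^{(\cdot)}_{p',s+1}$ (applied with smoothness index $s+1$ and first index $p'$) and absorbing the various auxiliary indices into a single $p$, one checks that $b$ is required to have $s+N$ $\bxi$-derivatives and $g_j$ to have $s+N-j+1$ of them, which is precisely \eqref{commutator:eq} with $C$ and $p$ depending only on $l,s,N,\a$ and the $\g_j$.

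\emph{Main obstacle.} No step is conceptually hard; the only real work is the bookkeeping of the weights — the powers of $\lu\bxi\ru^\b$ and of $\lu\bth\ru,\lu\bphi\ru$ — through the telescoping and the repeated applications of Proposition~\ref{product:prop}, together with checking that the frequency weight lost to the factors $|\bth|$, $|\bphi|$ produced by \eqref{differ:eq} is always recoverable by taking the auxiliary index $p$ large enough. This is routine but tedious, which is why it is natural, as the text indicates, to follow the corresponding computations in \cite{Sob}.
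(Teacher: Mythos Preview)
Your argument is correct and is exactly the standard route: telescope the commutator kernel, use \eqref{differ:eq} (or the integral mean-value form) to trade a $\bxi$-difference for one extra $\bxi$-derivative and a factor absorbed into the frequency weight, and then induct on $N$. The paper itself does not write out a proof here, deferring to \cite{Sob} where essentially the same computation is carried out, so your write-up matches what the authors have in mind.
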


\subsection{Partition of the perturbation}
From now on we fix $\b:\ 0<\beta<\alpha_1$, and put $\hat{\Bth}:=\Bth$ which is finite. The symbols we are going to construct will depend on $\rho_n$; this dependence will usually be omitted from the notation.

Let $\iota\in \plainC\infty(\R)$ be a non-negative function such
that
\begin{equation}\label{eta:eq}
0\le\iota\le 1,\ \ \iota(z) =
\begin{cases}
& 1,\  z \le \frac{1}{4};\\
& 0,\  z \ge \frac{1.1}{4}.
\end{cases}
\end{equation}
For $\bth\in \Bth, \bth\not = \mathbf 0$,
define the following $\plainC\infty$-cut-off functions:
\begin{equation}\label{el:eq}
\begin{cases}
e_{\bth}(\bxi) =&\ \iota\biggl({\biggl|\dfrac{|\bxi
+\bth/2|-3\rho_n}{10\rho_n}\biggr|}
 \biggr),\\[0.5cm]
\ell^{>}_{\bth}(\bxi) = &\ 1 - \iota\biggl(\dfrac{|\bxi +
\bth/2|-3\rho_n}{10\rho_n}
\biggr),\\[0.5cm]
\ell^{<}_{\bth}(\bxi) = &\ 1 - \iota\biggl(\dfrac{3\rho_n-|\bxi +
\bth/2|}{10\rho_n}\biggr),
\end{cases}
\end{equation}
and
\begin{equation}\label{phizeta:eq}
\begin{cases}
\z_{\t}(\bxi) =&\ \iota\biggl(\dfrac{|\lu\bth,\bxi + \bth/2\ru|}
{\rho_n^\beta |\bth|}\biggr),\\[0.5cm]
\varphi_{\t}(\bxi) = &\  1 - \z_{\bth}(\bxi).
\end{cases}
\end{equation}
Note that $e_{\bth}+\ell^{>}_{\bth} + \ell^{<}_{\bth} = 1$. The
function $\ell^{>}_{\bth}$ is supported on the set
$|\bxi+\bth/2|\geq11\rho_n/2$, and $\ell^{<}_{\bth}$ is supported on the
set $|\bxi+\bth/2|\leq \rho_n/2$. The function $e_{\bth}$ is supported
in the shell $\rho_n/4\le |\bxi+\bth/2|\le 23\rho_n/4$. Using the notation $\ell_{\bth}$ for any of the functions
$\ell^{>}_{\bth}$ or $\ell^{<}_{\bth}$, we point out that
\begin{equation}\label{symmetry:eq}
\begin{cases}
e_{\bth}(\bxi) = e_{-\bth}(\bxi + \bth), \ \ell_{\bth}(\bxi)
= \ell_{-\bth}(\bxi + \bth),\\[0.2cm]
\varphi_{\bth}(\bxi) =  \varphi_{-\bth}(\bxi + \bth),\ \
\z_{\bth}(\bxi) = \z_{-\bth}(\bxi + \bth).
\end{cases}
\end{equation}
Note that the above functions satisfy the estimates
\begin{equation}\label{varphi:eq}
\begin{cases}
|\BD^{\bs}_{\bxi} e_{\bth}(\bxi)|
+ |\BD^{\bs}_{\bxi}\ell_{\bth}(\bxi)|\ll \rho_n^{-|\mathbf s|},\\[0.2cm]
|\BD^{\bs}_{\bxi}\varphi_{\bth}(\bxi)| + |\BD^{\bs}_{\bxi}
\z_{\bth}(\bxi)| \ll \rho_n^{-\beta |\bs|}.
\end{cases}
\end{equation}
Using the above cut-off functions, for any symbol $b\in\BS_{\a }(\b)$
we introduce five new symbols $b^{\downarrow}, b^o,
b^{\ssharp}, b^{\natural}, b^{\flat}$ in the following way:
\begin{gather}
b^{\ssharp}(\bx, \bxi; \rho_n) =
\sum_{\bth\in\Bth'} \hat b(\bth, \bxi)
\ell^{>}_{\bth}(\bxi)
 e^{i\bth \bx},\label{sharp:eq}\\
b^{\natural}(\bx, \bxi; \rho_n) =
\sum_{\bth\in\Bth'} \hat b(\bth, \bxi) \varphi_{\bth}(\bxi)
e_{\bth}(\bxi)  e^{i\bth \bx},
\label{natural:eq}\\
b^{\flat}(\bx, \bxi; \rho_n) =
\sum_{\bth\in\Bth'} \hat b(\bth, \bxi)\z_{\bth}(\bxi)
e_{\bth}(\bxi)
  e^{i\bth \bx},\label{flat:eq}\\
b^{\downarrow}(\bx, \bxi; \rho_n) =
\sum_{\bth\in\Bth'} \hat b(\bth, \bxi) \ell^{<}_{\bth}(\bxi)
  e^{i\bth \bx},\label{downarrow:eq}\\
b^o(\bx, \bxi; \rho_n) = b^o(\bxi; \rho_n) =
\hat b(0, \bxi).\label{o:eq}
\end{gather}
The superscripts here are chosen to mean correspondingly: `large energy', `non-resonant',
`resonant', `small energy' and $0$-th Fourier coefficient. The
corresponding operators are denoted by
\begin{equation*}
\begin{split}
B^{\ssharp} = \op(b^{\ssharp}),\ B^{\natural} = \op(b^{\natural}),\\
B^{\flat} = \op(b^{\flat}),\ B^{\downarrow} = \op(b^{\downarrow}),\
B^o = \op (b^o).
\end{split}
\end{equation*}
By definitions \eqref{eta:eq}, \eqref{el:eq} and \eqref{phizeta:eq}
\begin{equation*}
b = b^o + b^{\downarrow}+ b^{\flat} + b^{\natural} + b^{\ssharp}.
\end{equation*}
The role of each of these operator is easy to explain. Note that on the support of the functions
$\hat b^{\natural}(\bth, \ \cdot\ ; \rho_n)$ and $\hat b^{\flat}(\bth, \
\cdot\ ; \rho_n)$ we have
\begin{equation}\label{ds:eq}
|\bth|\le \rho_n^{0+},\ \frac{1}{4}\rho_n \le |\bxi+\bth/2|\le
\frac{23}{4}\rho_n, \ \frac{1}{4}\rho_n - \frac{1}{2}\rho_n^{0+}\le
|\bxi| \le \frac{23}{4}\rho_n + \frac{1}{2}\rho_n^{0+}.
\end{equation}
On the support of $b^{\downarrow}(\bth, \ \cdot\ ; \rho_n)$ we have
\begin{equation}\label{supportell<:eq}
\biggl|\bxi+\frac{\bth}{2}\biggr|\le \frac{1}{2}\rho_n,\ |\bxi|\le
\frac{1}{2}\rho_n + \frac{1}{2}\rho_n^{0+}.
\end{equation}
On the support of $b^{\ssharp}(\bth, \ \cdot\ ; \rho_n)$ we have
\begin{equation}\label{supportell>:eq}
\biggl|\bxi+\frac{\bth}{2}\biggr|\ge \frac{11}{2}\rho_n,\ |\bxi|\ge
\frac{11}{2}\rho_n - \frac{1}{2}\rho_n^{0+}.
\end{equation}
The introduced symbols play a central role in the proof of Lemma
\ref{main_lem}. As we have seen in Section 6, due
to \eqref{supportell<:eq} and \eqref{supportell>:eq} the symbols
$b^\downarrow$  and $b^\ssharp$ make only a negligible contribution
to the spectrum of the operator $H$ near the point $\l =
\rho^{2},\ \rho\in I_n$. The only significant components of $b$ are the symbols
$b^\natural, b^{\flat}$ and $b^o$. The symbol $b^o$ will remain as
it is, and the symbol $b^\natural$ will be transformed in the next
Section to another symbol, independent of $\bx$.

We will often combine $B^{\flat}$, $B^{\ssharp}$ and $B^{\downarrow}$: for instance $B^{\flat, \ssharp} = B^{\flat} +
B^{\ssharp}$, $B^{\flat, \ssharp, \downarrow} = B^{\flat, \ssharp} +
B^{\downarrow}$. A similar convention applies to the symbols. Under
the condition $b\in\BS_{\a}(\b)$ the above symbols belong to the same
class $\BS_{\a}(\b)$ and the following bounds hold:
\begin{equation}\label{subord:eq}
\1 b^{\flat}\1^{(\a)}_{l, s} + \1 b^{\natural}\1^{(\a)}_{l, s} + \1
b^{\ssharp}\1^{(\a)}_{l, s} + \1 b^{o}\1^{(\a)}_{l, s} + \1
b^{\downarrow}\1^{(\a)}_{l, s} \ll
\1 b\1^{(\a)}_{l, s}.
\end{equation}
Indeed, let us check this  for the symbol $b^{\natural}$, for
instance. According to \eqref{ds:eq} and \eqref{varphi:eq}, on the
support of the function $\hat b^{\natural}(\bth, \ \cdot\ ; \rho_n)$
we have
\begin{gather*}
|\BD^{\bs}\varphi_{\bth}(\bxi)|\ll \rho_n^{-\b|\bs|}
\ll  \lu\bxi\ru^{-|\bs|\b},\\[0.2cm]
|\BD^{\bs}\ell^{>}_{\bth}(\bxi)| + |\BD^{\bs}\ell^{<}_{\bth}(\bxi)|
+ |\BD^{\bs}e_{\bth}(\bxi)|\ll  \rho_n^{-|\bs|}\ll \lu\bxi\ru^{-|\bs|\b}.
\end{gather*}
This immediately leads to the bound of the form \eqref{subord:eq}
for the symbol $b^{\natural}$.

The introduced operations also preserve symmetry. Indeed, let us
calculate using \eqref{symmetry:eq}:
\begin{align*}
\overline{\hat b^{\flat}(-\bth, \bxi + \bth)} = & \ \overline{\hat
b(-\bth, \bxi + \bth)} \z_{-\bth}(\bxi+\bth)
e_{-\bth}(\bxi+\bth)\\
= &\ \hat b(\bth, \bxi) \z_{\bth}(\bxi) e_{\bth}(\bxi) =
\hat b^{\flat}(\bth, \bxi).
\end{align*}
Therefore, by \eqref{selfadj:eq} the operator
$B^{\flat}$ is symmetric if so is $\op(b)$. The proof is similar for the
rest of the operators introduced above.

Let us list some other elementary properties of the introduced
operators. In the lemma below we use the projection $\CP(\CC),
\CC\subset\R$ whose definition was given in Section 6.

\begin{lem}\label{smallorthog:lem}
Let $b\in \BS_{\a}(\b)$ with
some $\a\in\R$. Then the following hold:
\begin{itemize}
\item[(i)]
The operator $B^{\downarrow}$ is bounded and
\begin{equation*}
\|B^{\downarrow}\|\ll \1 b \1^{(\a)}_{0, 0} \rho_n^{\b\max(\a,
0)}.
\end{equation*}
Moreover,
\begin{equation*}
\bigl(I - \CP (B(2\rho_n/3 )\bigr) B^{\downarrow} =
B^{\downarrow} \bigl(I - \CP (B(2\rho_n/3)\bigr)  = 0.
\end{equation*}

\item[(ii)] The operator $B^\flat$ satisfies the following relations
\begin{align}\label{bflatorthog:eq}
\CP(B(\rho_n/8)) B^{\flat} =  &\ B^{\flat} \CP(B(\rho_n/8)) \notag\\[0.2cm]
= &\ \bigl(I - \CP(B(6\rho_n)\bigr) B^{\flat} =  B^{\flat} \bigl(I
- \CP(B(6\rho_n))\bigr) = 0,
\end{align}
and similar relations hold for the operator $B^{\natural}$ as well.

Moreover, for any $\g \in\R$ one has $b^{\natural}, b^{\flat}
\in\BS_{\g}$ and
\begin{equation}\label{nat:eq}
\1 b^{\natural}\1^{(\g)}_{l, s} + \1 b^{\flat}\1^{(\g)}_{l, s} \ll
\rho_n^{\b(\a - \g)}\1 b\1^{(\a)}_{l, s},
\end{equation}
for all $l$ and $s$, with
an implied constant 
independent of $b$ and $n\ge 1$. In particular, the operators
$B^{\natural}, B^{\flat}$ are bounded and
\begin{equation*}
\| B^{\natural} \| + \|B^{\flat}\| \ll \rho_n^{\b \a } \1 b
\1^{(\a)}_{0, 0}.
\end{equation*}

\item[(iii)]
\begin{equation*}
\CP\bigl(B(5\rho_n)\bigr)B^{\ssharp} =
B^{\ssharp}\CP\bigl(B(5\rho_n)\bigr) = 0.
\end{equation*}

\end{itemize}
\end{lem}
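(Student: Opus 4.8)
The plan is to prove all three parts directly from the definitions of the cut-off symbols, following the pattern of the analogous statements in \cite{Sob} and \cite{ParSob}. The ingredients I would use are: the action formula \eqref{eq:actionbe}, which lets me check every algebraic identity below on the exponentials $\be_\bnu$ alone (quasi-periodic functions being dense in $B_2(\R^d)$, and $\plainS(\R^d)$ in $L_2(\R^d)$); the $\bxi$-support inclusions \eqref{supportell<:eq}, \eqref{ds:eq}, \eqref{supportell>:eq} for $b^\downarrow$, $b^\flat$ (and $b^\natural$), $b^\ssharp$, which already incorporate $|\bth|\le\rho_n^{0+}\ll\rho_n$ for $\bth\in\Bth'$; the cut-off derivative bounds \eqref{varphi:eq}; the subordination bound \eqref{subord:eq}; and Proposition \ref{bound:prop}.

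\emph{Orthogonality relations.} By \eqref{eq:actionbe}, $\op(c)\be_\bnu$ is a finite linear combination of exponentials $\be_{\bnu+\bth}$ over the $\bth$ for which $\hat c(\bth,\cdot)$ is non-zero at $\bxi=\bnu$. Taking $c=b^\downarrow$, \eqref{supportell<:eq} shows that a non-zero term forces both $|\bnu|<2\rho_n/3$ and $|\bnu+\bth|<2\rho_n/3$ (for $\rho_n$ large, since $\rho_n^{0+}\ll\rho_n$); the first gives $B^\downarrow\bigl(I-\CP(B(2\rho_n/3))\bigr)=0$ and the second gives $\bigl(I-\CP(B(2\rho_n/3))\bigr)B^\downarrow=0$. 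Taking $c=b^\flat$ (and verbatim $c=b^\natural$, which \eqref{ds:eq} also covers), a non-zero term forces $\rho_n/8<|\bnu|<6\rho_n$ and $\rho_n/8<|\bnu+\bth|<6\rho_n$; the condition on $\bnu$ yields $B^\flat\CP(B(\rho_n/8))=B^\flat\bigl(I-\CP(B(6\rho_n))\bigr)=0$ and the condition on $\bnu+\bth$ yields $\CP(B(\rho_n/8))B^\flat=\bigl(I-\CP(B(6\rho_n))\bigr)B^\flat=0$, which together make up \eqref{bflatorthog:eq}. Taking $c=b^\ssharp$, \eqref{supportell>:eq} forces $|\bnu|>5\rho_n$ and $|\bnu+\bth|>5\rho_n$ on every non-zero term, which is exactly part (iii).

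\emph{Norm bounds.} For part (i), by \eqref{subord:eq} $b^\downarrow\in\BS_\a$, and since its $\bxi$-support lies in $\{\lu\bxi\ru\ll\rho_n\}$ I would bound, after summing over $\bth$, the factor $\lu\bxi\ru^{-\a\b}|\hat b(\bth,\bxi)|$ by $\1 b\1^{(\a)}_{0,0}$ and the surviving $\lu\bxi\ru^{\a\b}$ by $\sup_{\lu\bxi\ru\ll\rho_n}\lu\bxi\ru^{\a\b}\ll\rho_n^{\b\max(\a,0)}$; this gives $\1 b^\downarrow\1^{(0)}_{0,0}\ll\rho_n^{\b\max(\a,0)}\1 b\1^{(\a)}_{0,0}$, whence the bound on $\|B^\downarrow\|$ by Proposition \ref{bound:prop}. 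For \eqref{nat:eq} I would repeat the computation behind \eqref{subord:eq}: on the $\bxi$-support of $b^\natural$, $b^\flat$ one has $\lu\bxi\ru\asymp\rho_n$ by \eqref{ds:eq}, so $\lu\bxi\ru^{(-\g+|\bs|)\b}=\lu\bxi\ru^{(-\a+|\bs|)\b}\,\lu\bxi\ru^{(\a-\g)\b}$ with $\lu\bxi\ru^{(\a-\g)\b}\asymp\rho_n^{(\a-\g)\b}$; distributing $\BD^\bs_\bxi$ by the Leibniz rule between $\hat b(\bth,\bxi)$ — each of whose derivatives is controlled against the weight $\lu\bxi\ru^{(-\a+|\bs|)\b}$ by $\1 b\1^{(\a)}_{l,\cdot}$ — and the cut-off product $\varphi_\bth e_\bth$, resp. $\z_\bth e_\bth$ — each of whose derivatives contributes $\lu\bxi\ru^{-\b}$ on the support, by \eqref{varphi:eq} — yields $\1 b^\natural\1^{(\g)}_{l,s}+\1 b^\flat\1^{(\g)}_{l,s}\ll\rho_n^{\b(\a-\g)}\1 b\1^{(\a)}_{l,s}$. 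Taking $\g=0$ and applying Proposition \ref{bound:prop} then bounds $\|B^\natural\|$ and $\|B^\flat\|$.

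\emph{Main obstacle.} I expect no genuine obstacle: the whole lemma is bookkeeping of the cut-off supports and of the powers of $\lu\bxi\ru$. The one step requiring a little care is the Leibniz estimate behind \eqref{nat:eq}, where one must check that the powers of $\lu\bxi\ru$ picked up from the derivatives of $\hat b$ and of the cut-offs recombine to precisely $\lu\bxi\ru^{(\a-\g)\b}$ on the support; this is, however, the same computation already sketched for \eqref{subord:eq} in the text, and the analogous statements are proved in \cite{Sob}.
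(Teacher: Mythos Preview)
Your proposal is correct and follows essentially the same approach as the paper: the orthogonality relations are read off from the $\bxi$-support inclusions \eqref{supportell<:eq}, \eqref{ds:eq}, \eqref{supportell>:eq} via the action on exponentials, and the norm bounds come from converting the weight $\lu\bxi\ru^{(-\g+|\bs|)\b}$ to $\lu\bxi\ru^{(-\a+|\bs|)\b}$ on the support (where $\lu\bxi\ru\asymp\rho_n$) together with Proposition~\ref{bound:prop}. The only cosmetic difference is that for \eqref{nat:eq} the paper routes through \eqref{subord:eq} (which already packages the Leibniz estimate) rather than redoing the Leibniz rule explicitly as you do.
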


\begin{proof}
\underline{Proof of (i).} It follows from \eqref{1b1:eq} and \eqref{supportell<:eq}  that
\begin{equation}\label{flatdecay:eq}
\sum_{\bth}\sup_{\bxi}\,|\hat b^{\downarrow} (\bth, \bxi; \rho_n)| \ll \rho_n^{\b\max(\a, 0)}\sum_{\bth}\sup_{\bxi}\,\lu\bxi\ru^{-\b\a}|\hat b (\bth, \bxi; \rho_n)|=\1 b \1^{(\a)}_{0, 0}\rho_n^{\b\max(\a, 0)}.
\end{equation}
By Proposition \ref{bound:prop} this implies the sought bound for
the norm $\|B^{\downarrow}\|$.

In view of \eqref{supportell<:eq}, the second part of statement (i)
follows from \eqref{downarrow:eq}.

\underline{Proof of (ii).}
 Relations \eqref{bflatorthog:eq} follow from definitions
\eqref{flat:eq} and \eqref{natural:eq} in view of \eqref{ds:eq}.

Furthermore, by  \eqref{ds:eq} and \eqref{subord:eq},
\begin{equation*}
\begin{split}
&\sum_{\bth}\lu\bth\ru^{l}\sup_{\bxi}\,\lu\bxi\ru^{(-\g+s)\b}|\BD^{\bs}_{\bxi} \hat b^{\natural}(\bth, \bxi; \rho_n)|
\ll \rho_n^{\b(\a-\g)}\sum_{\bth}\lu\bth\ru^{l}\sup_{\bxi}\,\lu\bxi\ru^{(-\a+s)\b}|\BD^{\bs}_{\bxi} \hat b^{\natural}(\bth, \bxi; \rho_n)| \cr
&\leq\1 b^{\natural}
\1^{(\a)}_{l, s}\rho_n^{\b(\a - \g)}\ll\1 b
\1^{(\a)}_{l, s}\rho_n^{\b(\a - \g)}.
\end{split}
\end{equation*}
This means that $b^{\natural}\in\BS_\g$ for any $\g\in\R$
and \eqref{nat:eq} holds for $b^{\natural}$. The bound for the norm follow from
\eqref{nat:eq} with $\g = 0$, and Proposition \ref{bound:prop}. The proof for $b^{\flat}$ is analogous.

\underline{Proof of (iii)} is similar to (i). The required result
follows from \eqref{supportell>:eq}.

\end{proof}

\section{Gauge transform and the symbol of the resulting operator}

\subsection{Preparation}
Our strategy will be to find a unitary operator which reduces $H =
H_0+\op(b)$, $H_0:=-\Delta$, to another PDO, whose symbol, essentially,
depends only on $\bxi$ (notice that now we have started to distinguish
between the potential $b$ and the operator of multiplication by it $\op(b)$). More precisely,
we want to find operators $H_1$ and $H_2$ with the properties discussed in Section 6.
The unitary operator
will be constructed in the form $U = e^{i\Psi}$ with a suitable
bounded self-adjoint quasi-periodic PDO $\Psi$. This is why we
sometimes call it a `gauge transform'. It is useful to
consider $e^{i\Psi}$ as an element of the group
\begin{equation*}
U(t) = \exp\{ i \Psi t\},\ \ \forall t\in\R.
\end{equation*}
\textbf{We assume that the operator $\ad(H_0, \Psi)$ is bounded, so
that $U(t) D(H_0) = D(H_0)$}. This assumption will be justified
later on. Let us express the operator
\begin{equation*}
A_t := U(-t)H U(t)
\end{equation*}
via its (weak) derivative with respect to $t$:
\begin{equation*}
A_t = H +  \int_0^t U(-t') \ad(H; \Psi) U(t') dt'.
\end{equation*}
By induction it is easy to show that
\begin{gather}
A_1 = H + \sum_{j=1}^{\tilde k} \frac{1}{j!}
\ad^j(H; \Psi) +  R^{(1)}_{{\tilde k}+1},\label{decomp:eq}\\
R^{(1)}_{{\tilde k}+1} :=  \int_0^1 d t_1 \int_0^{t_1} d t_2\dots
\int_0^{t_{\tilde k}} U(-t_{{\tilde k}+1}) \ad^{{\tilde k}+1}(H; \Psi) U(t_{{\tilde k}+1})
dt_{{\tilde k}+1}.\notag
\end{gather}
The
operator $\Psi$ is sought in the form
\begin{equation}\label{psik:eq}
\Psi = \sum_{j=1}^{\tilde k} \Psi_j,\ \Psi_j = \op(\psi_j),
\end{equation}
with symbols $\psi_j$ from some suitable class $\BS_{\s_j}$
to be specified later on. Substitute this formula in
\eqref{decomp:eq} and rewrite, regrouping the terms:
\begin{gather}
A_1 = H_0 + \op(b) + \sum_{j=1}^{\tilde k} \frac{1}{j!} \sum_{l=j}^{\tilde k} \sum_{k_1+
k_2+\dots + k_j = l}
\ad(H; \Psi_{k_1}, \Psi_{k_2}, \dots, \Psi_{k_j})\notag\\
 +
 R^{(1)}_{{\tilde k}+1} + R^{(2)}_{{\tilde k}+1},\notag\\
R^{(2)}_{{\tilde k}+1}: = \sum_{j=1}^{\tilde k} \frac{1}{j!}\sum_{k_1+ k_2+\dots + k_j
\ge  {\tilde k}+1} \ad(H; \Psi_{k_1}, \Psi_{k_2}, \dots, \Psi_{k_j}).
\label{rtilde:eq}
\end{gather}
Changing this expression yet again produces
\begin{gather*}
A_1 = H_0 + \op(b) + \sum_{l=1}^{\tilde k} \ad(H_0; \Psi_l) + \sum_{j=2}^{\tilde k}
\frac{1}{j!} \sum_{l=j}^{\tilde k} \sum_{k_1+ k_2+\dots + k_j = l}
\ad(H_0; \Psi_{k_1}, \Psi_{k_2}, \dots, \Psi_{k_j})\\
+ \sum_{j=1}^{\tilde k} \frac{1}{j!} \sum_{l=j}^{\tilde k} \sum_{k_1+ k_2+\dots + k_j
= l} \ad(\op(b); \Psi_{k_1}, \Psi_{k_2}, \dots, \Psi_{k_j}) +
R^{(1)}_{{\tilde k}+1} + R^{(2)}_{{\tilde k}+1}.
\end{gather*}
Next, we switch the summation signs and decrease $l$ by one in the
second summation:
\begin{gather*}
A_1 = H_0 + \op(b) + \sum_{l=1}^{\tilde k} \ad(H_0; \Psi_l) + \sum_{l=2}^{\tilde k}
\sum_{j=2}^l \frac{1}{j!}\sum_{k_1+ k_2+\dots + k_j = l}
\ad(H_0; \Psi_{k_1}, \Psi_{k_2}, \dots, \Psi_{k_j})\\
+\sum_{l=2}^{{\tilde k}+1} \sum_{j=1}^{l-1} \frac{1}{j!}\sum_{k_1+ k_2+\dots
+ k_j = l-1} \ad(\op(b); \Psi_{k_1}, \Psi_{k_2}, \dots, \Psi_{k_j}) +
R^{(1)}_{{\tilde k}+1} + R^{(2)}_{{\tilde k}+1}.
\end{gather*}
Now we introduce the notation
\begin{gather}
B_1 := \op(b),\notag\\
B_l := \sum_{j=1}^{l-1} \frac{1}{j!}
 \sum_{k_1+ k_2+\dots + k_j = l-1}
\ad(\op(b); \Psi_{k_1}, \Psi_{k_2}, \dots, \Psi_{k_j}),
\ l\ge 2,\label{bl:eq}\\
T_l := \sum_{j=2}^l \frac{1}{j!} \sum_{k_1+ k_2+\dots + k_j = l}
\ad(H_0; \Psi_{k_1}, \Psi_{k_2}, \dots, \Psi_{k_j}),\  l\ge 2.
\label{tl:eq}
\end{gather}
We emphasise that the operators $B_l$ and $T_l$ depend only on
$\Psi_1, \Psi_2, \dots, \Psi_{l-1}$. Let us make one more
rearrangement:
\begin{gather}
A_1 =  H_0 + \op(b) + \sum_{l=1}^{\tilde k} \ad(H_0, \Psi_l) + \sum_{l=2}^{\tilde k} B_l
+   \sum_{l=2}^{{\tilde k}} T_{l} + R_{{\tilde k}+1},\notag\\
R_{{\tilde k}+1} =  B_{{\tilde k}+1} + R^{(1)}_{{\tilde k}+1} + R^{(2)}_{{\tilde k}+1}.\label{r:eq}
\end{gather}
Now we can specify our algorithm for finding $\Psi_j$'s. The symbols
$\psi_j$ will be found from the following system of commutator
equations:
\begin{gather}
\ad(H_0; \Psi_1) + B_1^{\natural} = 0,\label{psi1:eq}\\
\ad(H_0; \Psi_l) + B_l^{\natural} + T_l^{\natural} = 0,\ l\ge
2,\label{psil:eq}
\end{gather}
and hence
\begin{equation}\label{lm:eq}
\begin{cases}
A_1 = H_0 + Y^{(o)}_{\tilde k} + Y_{\tilde k}^{\flat}
+ Y_{{\tilde k}}^{\downarrow, \ssharp} + R_{{\tilde k}+1},\\[0.3cm]
Y_{\tilde k} =  \sum_{l=1}^{{\tilde k}} B_l + \sum_{l=2}^{{\tilde k}}
T_l.
\end{cases}
\end{equation}
Below we denote by $y_{\tilde k}$ the symbol of the PDO $Y_{\tilde k}$. Recall that by
Lemma \ref{smallorthog:lem}(ii), the operators $B_l^{\natural},
T_l^{\natural}$ are bounded, and therefore, in view of
\eqref{psi1:eq}, \eqref{psil:eq}, so is  the commutator $\ad(H_0;
\Psi)$. This justifies the assumption made in the beginning of the
formal calculations in this section.

\subsection{Commutator equations}

Put
$$
\tilde{\chi}_{\bth}(\bxi):=e_{\bth}(\bxi)\varphi_{\bth}(\bxi)(|\bxi+\bth|^2-|\bxi|^2)^{-1}=\frac{e_{\bth}(\bxi)\varphi_{\bth}(\bxi)}
{2\lu\bth,\bxi+\frac{\bth}{2}\ru}
$$
when $\bth\not={\bf 0}$, and $\tilde{\chi}_{\bf 0}(\bxi)=0$.
We have
\begin{lem} \label{commut:lem}
Let $A = \op(a)$ be a symmetric PDO with $a\in\BS_{\om }$. Then the
PDO $\Psi$ with the Fourier coefficients of the symbol $\psi(\bx,
\bxi)$ given by
\begin{equation}\label{psihat:eq}
\hat\psi(\bth, \bxi) = i\,{\hat a}(\bth, \bxi)\tilde{\chi}_{\bth}(\bxi)
\end{equation}
solves the equation
\begin{equation}\label{adb:eq}
\ad(H_0; \Psi) + \op(a^{\natural})= 0.
\end{equation}
Moreover, the operator $\Psi$ is bounded and self-adjoint, its
symbol $\psi$ belongs to $\BS_{ \g}$ with any $\g \in\R$ and the
following bound holds:
\begin{equation}\label{psitau:eq}
\1 \psi\1^{(\g)}_{l, s} \ll \rho_n^{\b(\om-\g-1)}r(\rho_n)^{-1}\1 a\1^{(\om)}_{l-1, s}\ll \rho_n^{\b(\om-\g-1)+0+}\1 a\1^{(\om)}_{l-1, s}.
\end{equation}
\end{lem}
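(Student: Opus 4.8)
The plan is to check the three assertions of the lemma in turn: that $\Psi$ defined by \eqref{psihat:eq} solves \eqref{adb:eq}, that $\Psi$ is self-adjoint, and that it satisfies the estimate \eqref{psitau:eq}. Recall that here $\hat\Bth=\Bth$ is finite, so every series below is a finite sum.

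\emph{The commutator equation.} First I would compute the symbol of $\ad(H_0;\Psi)$. Since $H_0=\op(|\bxi|^2)$, applying \eqref{eq:actionbe} gives $H_0\op(\psi)\be_{\bnu}-\op(\psi)H_0\be_{\bnu}=\sum_{\bth}(|\bnu+\bth|^2-|\bnu|^2)\hat\psi(\bth,\bnu)\be_{\bnu+\bth}$ (one can equally read this off from \eqref{comm:eq}), so the Fourier coefficients of the symbol of $\ad(H_0;\Psi)$ are $i(|\bxi+\bth|^2-|\bxi|^2)\hat\psi(\bth,\bxi)=2i\lu\bth,\bxi+\bth/2\ru\hat\psi(\bth,\bxi)$. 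Substituting \eqref{psihat:eq} together with the definition of $\tilde\chi_{\bth}$, the factor $2\lu\bth,\bxi+\bth/2\ru$ cancels the denominator of $\tilde\chi_{\bth}$ and one is left with $-\hat a(\bth,\bxi)e_{\bth}(\bxi)\varphi_{\bth}(\bxi)=-\hat a^{\natural}(\bth,\bxi)$, which is exactly \eqref{adb:eq}. The only point needing a word is that this cancellation is legitimate: by \eqref{phizeta:eq} and \eqref{eta:eq}, $\varphi_{\bth}(\bxi)\ne0$ forces $|\lu\bth,\bxi+\bth/2\ru|>\tfrac14\rho_n^{\b}|\bth|>0$, so the denominator of $\tilde\chi_{\bth}$ does not vanish on $\supp(e_{\bth}\varphi_{\bth})$; off that set both sides are zero and $\tilde\chi_{\bth}$ is declared $0$, so $\tilde\chi_{\bth}\in\plainC\infty$ and no small divisors enter.

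\emph{Self-adjointness.} By \eqref{selfadj:eq} it suffices to verify $\hat\psi(\bth,\bxi)=\overline{\hat\psi(-\bth,\bxi+\bth)}$. Using the symmetry relations \eqref{symmetry:eq} for $e_{\bth}$ and $\varphi_{\bth}$ together with $2\lu-\bth,(\bxi+\bth)+(-\bth)/2\ru=-2\lu\bth,\bxi+\bth/2\ru$, one obtains $\tilde\chi_{-\bth}(\bxi+\bth)=-\tilde\chi_{\bth}(\bxi)$, and $\tilde\chi_{\bth}$ is real-valued; combining this with the symmetry $\overline{\hat a(-\bth,\bxi+\bth)}=\hat a(\bth,\bxi)$ of $a$ and the explicit factor $i$ in \eqref{psihat:eq} gives the required relation. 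Since $\Psi$ will be shown to be bounded, it is then self-adjoint.

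\emph{The symbol estimate.} The heart of the matter is a pointwise bound for the derivatives of $\tilde\chi_{\bth}$ on its support. There $\lu\bxi\ru\asymp\rho_n$ by \eqref{ds:eq} and $|2\lu\bth,\bxi+\bth/2\ru|>\tfrac12\rho_n^{\b}|\bth|$; since $\bxi\mapsto 2\lu\bth,\bxi+\bth/2\ru$ is affine with gradient $2\bth$, its reciprocal satisfies $|\BD_{\bxi}^{\bs}(2\lu\bth,\bxi+\bth/2\ru)^{-1}|\ll|\bth|^{|\bs|}(\rho_n^{\b}|\bth|)^{-|\bs|-1}=|\bth|^{-1}\rho_n^{-\b(|\bs|+1)}$, while $|\BD_{\bxi}^{\bs}(e_{\bth}\varphi_{\bth})|\ll\rho_n^{-\b|\bs|}$ by \eqref{varphi:eq}; hence Leibniz gives $|\BD_{\bxi}^{\bs}\tilde\chi_{\bth}(\bxi)|\ll|\bth|^{-1}\rho_n^{-\b(|\bs|+1)}$ on $\supp\tilde\chi_{\bth}$. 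Now expand $\BD_{\bxi}^{\bs}\hat\psi(\bth,\bxi)=i\,\BD_{\bxi}^{\bs}\bigl(\hat a(\bth,\bxi)\tilde\chi_{\bth}(\bxi)\bigr)$ by Leibniz, multiply by $\lu\bxi\ru^{(-\g+|\bs|)\b}$, and on $\supp\tilde\chi_{\bth}$ replace $\rho_n$ by $\lu\bxi\ru$: if $\bs'\le\bs$ is the order of the derivative hitting $\hat a$, the $\tilde\chi_{\bth}$-factor contributes $|\bth|^{-1}\lu\bxi\ru^{-\b(|\bs|-|\bs'|+1)}$, so the weights collapse to $\lu\bxi\ru^{(\om-\g-1)\b}\,\lu\bxi\ru^{(-\om+|\bs'|)\b}|\BD_{\bxi}^{\bs'}\hat a(\bth,\bxi)|\,|\bth|^{-1}$, with $\lu\bxi\ru^{(\om-\g-1)\b}\asymp\rho_n^{(\om-\g-1)\b}$ there (which, since $e_{\bth}$ localizes $\lu\bxi\ru$ to a band around $\rho_n$, is also why $\psi\in\BS_{\g}$ for every $\g$). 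Finally $|\bth|\ge r(\rho_n)<1$ for large $\rho_n$ forces $\lu\bth\ru^{l}|\bth|^{-1}\ll r(\rho_n)^{-1}\lu\bth\ru^{l-1}$, so summing over $\bth$ and taking $\sup_{\bxi}$ yields $\1\psi\1^{(\g)}_{l,s}\ll\rho_n^{\b(\om-\g-1)}r(\rho_n)^{-1}\1 a\1^{(\om)}_{l-1,s}$; the bound $r(\rho_n)^{-1}\le\rho_n^{0+}$ (from \eqref{eq:condC2}) gives the second inequality in \eqref{psitau:eq}, and with $\g=l=s=0$ the right-hand side is finite, so Proposition \ref{bound:prop} makes $\Psi$ bounded, completing the self-adjointness claim. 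The argument has no deep obstacle and runs parallel to \cite{Sob}, \cite{Sob1}; the two places requiring care are the observation that $\varphi_{\bth}$ excises the small-divisor set (so the division defining $\tilde\chi_{\bth}$ is harmless) and the exact bookkeeping of the $\lu\bxi\ru$- and $\lu\bth\ru$-weights, which must yield the gain $\rho_n^{-\b}$ (the ``$+1$'' in the reciprocal estimate), the lowering of the $\bth$-index from $l$ to $l-1$, and the factor $r(\rho_n)^{-1}$ that absorbs $|\bth|^{-1}$.
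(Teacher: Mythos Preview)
Your proof is correct and follows exactly the approach the paper has in mind; the paper itself omits the proof of this lemma, implicitly deferring (as stated at the start of Section~8) to the analogous computations in \cite{Sob}, \cite{Sob1}, \cite{ParSob}, and your argument is precisely that computation carried out in the present notation. One cosmetic point: your inequality $\lu\bth\ru^{l}|\bth|^{-1}\ll r(\rho_n)^{-1}\lu\bth\ru^{l-1}$ relies on $r(\rho_n)\le 1$; strictly speaking the factor should be $\max(1,r(\rho_n)^{-1})$, but since this is still $\rho_n^{0+}$ the second inequality in \eqref{psitau:eq} (which is all that is used downstream) is unaffected.
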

Using Propositions~\ref{bound:prop},\ref{product:prop},\ref{commut0:prop}, Lemma~\ref{commut:lem}, and repeating arguments from the proof of Lemma 4.2 from \cite{ParSob}, we obtain the following estimates for the symbols introduced above:
\bel\label{estimateskm:lem}
Let $b\in\BS_0(\beta)$ be a symmetric symbol. Then $\psi_j,\, b_j,\,t_j\in\BS_\gamma(\beta)$ for any $\gamma\in\R$ and
\bee\label{estpsi}
\1\psi_j\1^{(\gamma)}_{l,s}\leq C_j\rho_n^{\beta(1-\gamma-2j)}r(\rho_n)^{-j}\left(\1 b\1^{(0)}_{l_j,s_j}\right)^j,\ \ j\geq 1;
\ene
\bee\label{estbt}
\1 b_j\1^{(\gamma)}_{l,s}+\1 t_j\1^{(\gamma)}_{l,s}\leq C_j\rho_n^{\beta(2-\gamma-2j)}r(\rho_n)^{-j+1}\left(\1 b\1^{(0)}_{l_j,s_j}\right)^j,\ \ j\geq 2.
\ene
Here $C_j,\,l_j,\,s_j$ depend only on $j,l,s$ and $\gamma$. Moreover, assuming $\rho_0$ is large enough (depending on $l,s,\gamma,b$ and ${\tilde k}$) we get
\bee\label{estpsitotal}
\1\psi\1^{(\gamma)}_{l,s}\ll\rho_n^{-\beta(1+\gamma)}r(\rho_n)^{-1}\1 b\1^{(0)}_{l,s};
\ene
\bee\label{esty}
\1 y_{\tilde k}\1^{(0)}_{l,s}\leq 2\1 b\1^{(0)}_{l,s};
\ene
\bee\label{esterror}
\| R_{{\tilde k}+1}\|\ll \rho_n^{-2\beta{\tilde k}}r(\rho_n)^{-{\tilde k}}\left(\1 b\1^{(0)}_{l_{{\tilde k}+1},s_{{\tilde k}+1}}\right)^{{\tilde k}+1}.
\ene
\enl
Now, we take
\bee\label{eq:kM}
{\tilde k}>(M+(d-2))/\beta
\ene
and assume that $k$ is large enough so that $r(\rho_n)^{-1}\ll\rho_n^{0+}\ll\rho_n^{\beta}$. Then
$$
\| R_{{\tilde k}+1}\|\ll\rho_n^{-M+(2-d)}
$$
and we can disregard $R_{{\tilde k}+1}$ due to Corollary~\ref{cor:H1H2}. More precisely, let $W=W_{\tilde k}$ be the operator with symbol
\bee\label{eq:newy}
w_{\tilde k}(\bx,\bxi):=y_{\tilde k}(\bx,\bxi)-y_{\tilde k}^{\natural}(\bx,\bxi),\ \ \hbox{i.e.}\ \
\hat w_{\tilde k}(\bth,\bxi)=\hat y_{\tilde k}(\bth,\bxi)(1-e_{\bth}(\bxi)\varphi_{\bth}(\bxi)).
\ene
We put $H_1:=A_1$ and $H_2:=-\Delta+W$. Then $||H_1-H_2||\ll\rho_n^{-M+(2-d)}$ and, moreover, the symbol $w$ satisfies condition \eqref{eq:b3}. This means that all the constructions of Section 6
are valid, and all we need to do is to compute $\vol(G_{\lambda})$.

\subsection{Computing the symbol of the operator after gauge transform}
The following lemma provides us with more explicit form of the
symbol ${y}_{\tilde k}$.
\bel\label{lem:symbol} We have $\hat{y}_{\tilde k}(\bth,\bxi)=0$ for $\bth\not\in\Bth_{\tilde k}$. Otherwise,
\begin{equation}\label{symbol}
\begin{split}
&\hat{y}_{\tilde k}(\bth,\bxi)=\hat{b}(\bth)+\sum\limits_{s=1}^{{\tilde k}-1}\sum  C_s(\bth,\bxi)\hat{b}(\bth_{s+1})\prod\limits_{j=1}^s \hat{b}(\bth_j)\tilde{\chi}_{\bth_j'}(\bxi+\bphi_j')\cr
&=\hat{b}(\bth)+\sum\limits_{s=1}^{{\tilde k}-1}\sum
C_s(\bth,\bxi)\hat{b}(\bth_{s+1})\prod\limits_{j=1}^s \hat{b}(\bth_j)\frac{e_{\bth_j'}(\bxi+\bphi_j')\varphi_{\bth_j'}(\bxi+\bphi_j')}
{2\lu\bth_j',\bxi+\bphi_j'+\frac{\bth_j'}{2}\ru},
\end{split}
\end{equation}
where the second sums are taken over all $\bth_j\in\Bth$,
$\bth_j',\bphi_j'\in\Bth_{s+1}$ and
\begin{equation}\label{indsconst}
C_s(\bth,\bxi)=\sum\limits_{p=1}^s\sum\limits_{\bth_j '',\bphi_j
''\in\Bth_{s+1}\ (1\leq j\leq p)}C_s^{(p)}(\bth) \prod\limits_{j=1}^p
e_{\bth_j ''}(\bxi+\bphi_j '')\varphi_{\bth_j ''}(\bxi+\bphi_j '').
\end{equation}
Here $C_s^{(p)}(\bth)$ depend on $s,\ p$ and all vectors
$\bth,\bth_j,\bth_j',\bphi_j',\bth_j '',\bphi_j ''$. At the
same time, coefficients $C_s^{(p)}(\bth)$ can be bounded uniformly
by a constant which depends on $s$ only. We apply the convention that $0/0=0$.
\enl
\begin{proof}
We will prove the lemma by induction. Namely, let $\ell\geq2$. We claim that:

1) For any $m=1,\dots,\ell-1$, $\hat{\psi}_m(\bth,\bxi)=0$ for $\bth\not\in\Bth_m$. Otherwise,
\begin{equation}\label{inds1}
\hat{\psi}_m(\bth,\bxi)=\sum C_m'(\bth,\bxi)\prod\limits_{j=1}^m \hat{b}(\bth_j)\tilde{\chi}_{\bth_j'}(\bxi+\bphi_j'),
\end{equation}
where the sum is taken over all $\bth_j\in\Bth$,
$\bth_j',\bphi_j'\in\Bth_m$ and $C_m'(\bth,\bxi)$ admit representation similar to \eqref{indsconst}.

2) For any $s=1,\dots,\ell-1$ and any $k_1,\dots,k_p$ $(p\geq1)$ such that $k_1+\dots+k_p=s$, $\widehat{ad(\op(b);\Psi_{k_1},\dots,\Psi_{k_p})}(\bth,\bxi)=0$ for $\bth\not\in\Bth_{s+1}$. Otherwise,
\begin{equation}\label{inds2}
\widehat{ad(\op(b);\Psi_{k_1},\dots,\Psi_{k_p})}(\bth,\bxi)=\sum C_s''(\bth,\bxi)\hat{b}(\bth_{s+1})\prod\limits_{j=1}^s \hat{b}(\bth_j)\tilde{\chi}_{\bth_j'}(\bxi+\bphi_j'),
\end{equation}
where the sum is taken over all $\bth_j\in\Bth$,
$\bth_j',\bphi_j'\in\Bth_{s+1}$ and $C_s''(\bth,\bxi)$
admit representation similar to \eqref{indsconst}.

3) For any $s=2,\dots,\ell$ and any $k_1,\dots,k_p$ $(p\geq2)$ such that $k_1+\dots+k_p=s$,
$\widehat{ad(H_0;\Psi_{k_1},\dots,\Psi_{k_p})}(\bth,\bxi)=0$ for $\bth\not\in\Bth_{s}$. Otherwise,
\begin{equation}\label{inds3}
\widehat{ad(H_0;\Psi_{k_1},\dots,\Psi_{k_p})}(\bth,\bxi)=\sum C_{s}'''(\bth,\bxi)\hat{b}(\bth_{s})\prod\limits_{j=1}^{s-1} \hat{b}(\bth_j)\tilde{\chi}_{\bth_j'}(\bxi+\bphi_j'),
\end{equation}
where the sum is taken over all $\bth_j\in\Bth$,
$\bth_j',\bphi_j'\in\Bth_{s}$ and $C_{s}'''(\bth,\bxi)$ admit
representation similar to \eqref{indsconst}.

For $\ell=2$ statements 1)--3) can be easily checked. Indeed,
\begin{equation}\label{ind1}
\hat{\psi}_1(\bth,\bxi)=i\hat{b}(\bth)\tilde{\chi}_{\bth}(\bxi),
\end{equation}
\begin{equation}\label{ind3}
\widehat{ad(\op(b);\Psi_1)}(\bth,\bxi)=\sum\limits_{\bchi+\bphi=\bth}\left(\hat{b}(\bchi)\hat{b}(\bphi)\tilde{\chi}_{\bphi}(\bxi+\bchi)-
\hat{b}(\bchi)\hat{b}(\bphi)\tilde{\chi}_{\bphi}(\bxi)\right),
\end{equation}
\begin{equation}\label{ind2}
\widehat{ad(H_0;\Psi_1,\Psi_1)}(\bth,\bxi)=-\sum\limits_{\bchi+\bphi=\bth}\left(
\hat{b}^{{\natural}}(\bchi)\hat{b}(\bphi)\tilde{\chi}_{\bphi}(\bxi+\bchi)-
\hat{b}^{{\natural}}(\bchi)\hat{b}(\bphi)\tilde{\chi}_{\bphi}(\bxi)\right).
\end{equation}
Now, we complete the induction in several steps.

Step 1. First of all, notice that due to \eqref{bl:eq}, \eqref{tl:eq}, for any $m=2,\dots,\ell$ symbol of $B_m$ admits representation of the form \eqref{inds2} with $s=m-1$, and symbol of $T_m$ admits representation of the form \eqref{inds3} with $s=m$. Then it follows from Lemma~\ref{commut:lem} and \eqref{psil:eq} that $\Psi_\ell$ admits representation of
the form \eqref{inds1}.

Step 2. Proof of \eqref{inds2} with $s=\ell$. Let $k_1+\dots+k_p=\ell$. If $p\geq2$ then
$$
ad(\op(b);\Psi_{k_1},\dots,\Psi_{k_p})=ad(ad(\op(b);\Psi_{k_1},\dots,\Psi_{k_{p-1}});\Psi_{k_p}).
$$
Since $k_1+\dots+k_{p-1}\leq\ell-1$ and $k_p\leq\ell-1$ we can apply
\eqref{inds1} and \eqref{inds2}. Combined with \eqref{comm:eq} it gives
representation of the form \eqref{inds2}. If $p=1$ then $ad(\op(b);\Psi_\ell)$ satisfies
\eqref{inds2} because of \eqref{comm:eq} and step 1.

Step 3. Proof of \eqref{inds3} with $s=\ell+1$. Let $k_1+\dots+k_p=\ell+1$, $p\geq2$. If $p\geq3$ then (cf. step 2)
$$
ad(H_0;\Psi_{k_1},\dots,\Psi_{k_p})=ad(ad(H_0;\Psi_{k_1},\dots,\Psi_{k_{p-1}});\Psi_{k_p}).
$$
Since $k_1+\dots+k_{p-1}\leq\ell$, $p-1\geq2$ and $k_p\leq\ell-1$ we can apply
\eqref{inds1} and \eqref{inds3}. Together with \eqref{comm:eq} it gives
representation of the form \eqref{inds3}. If $p=2$ then (see \eqref{psil:eq})
$$
ad(H_0;\Psi_{k_1},\Psi_{k_2})=ad(ad(H_0;\Psi_{k_1});\Psi_{k_2})=-ad(B_{k_1}^{{\natural}}+T_{k_1}^{{\natural}};\Psi_{k_2}).
$$
Since $k_1\leq\ell$ and $k_2\leq\ell$, the representation of the form \eqref{inds3} follows from \eqref{comm:eq} and step 1. (Formally exceptional case $k_1=1,\ k_2=\ell$ can be treated separately in the same way using \eqref{psi1:eq} instead of \eqref{psil:eq}.)

Induction is complete.

Now, \eqref{inds2}, \eqref{inds3} and \eqref{bl:eq}, \eqref{tl:eq}, \eqref{lm:eq} prove the lemma.
\end{proof}


%


\section{Contribution from various resonance regions}

\subsection{Summing the contributions from individual eigenvalues}
Let us fix a subspace $\GV\in\CV_m$, $m<d$, and a component $\Bxi_p$ of the resonance
region $\Bxi(\GV)$.
Our aim is to compute the contribution to the density of states from each
component $\Bxi_p$. This means that we define $\hat A^+(\Bxi_p):=\hat A^+\cap\Bxi(\GV)_p$
and $\hat A^-(\Bxi_p):=\hat A^-\cap\Bxi(\GV)_p$ and try to compute
\bee\label{eq:n3}
\vol\hat A^+(\Bxi_p)-\vol\hat A^-(\Bxi_p).
\ene
Since formulas \eqref{eq:46} and \eqref{eq:CARd}
obviously imply that
\bee\label{eq:n2}
\vol(G_{\la})=w_d\rho^d+\sum_{m=0}^{d-1}\sum_{\GV\in \CV_m}\sum_{p}\bigl(\vol\hat A^+(\Bxi_p)-\vol\hat A^-(\Bxi_p)\bigr),
\ene
if we manage to compute \eqref{eq:n3} (or at least prove that this expression admits a complete
asymptotic expansion in $\rho$), Lemma \ref{main_lem} would be proved. Thus, we fix $\GV$ 
and, moreover,
we fix a component $\Bxi(\GV)_p$ of the resonance region. Recall that $K=d-m-1$.

Note that if $\bxi\in\Bxi_p$, then
we also have that $\BUps(\bxi)\subset \Bxi_p$. We denote
\bees
H_2(\bxi):=\CP(\BUps(\bxi))H_2\CP(\BUps(\bxi))
\enes
as an operator acting in $\GH_{\bxi}:=\CP(\BUps(\bxi))\GH$ (recall that $\GH_{\bxi}$ is an invariant subspace of $H_2$ acting in $B_2(\R^d)$).
Suppose now that two points $\bxi$ and $\boldeta$ have the
same coordinates $X$ and $\tilde\Phi$ and different coordinates $r$. Then $\bxi\in\Bxi_p$ implies $\boldeta\in\Bxi_p$ and
$\BUps(\boldeta)=\BUps(\bxi)+(\boldeta-\bxi)$. This shows that two spaces $\GH_{\bxi}$
and $\GH_{\boldeta}$ have the same dimension and, moreover,
there is a natural isometry
$F_{\bxi,\boldeta}:\GH_{\bxi}\to\GH_{\boldeta}$ given by $F:\be_{\bnu}\mapsto\be_{\bnu+(\boldeta-\bxi)}$,
$\bnu\in\BUps(\bxi)$. This isometry allows us to `compare' operators acting in $\GH_{\bxi}$ and
$\GH_{\boldeta}$. Thus, abusing slightly our notation, we can assume that $H_2(\bxi)$ and $H_2(\boldeta)$
act in the same (finite dimensional) Hilbert space $\GH(X,\tilde\Phi)$.
We will fix the values $(X,\tilde\Phi)$ and study how these
operators depend on $r$. Thus, we denote by $H_2(r)=H_2(r;X,\tilde\Phi)$ the operator
$H_2(\bxi)$ with $\bxi=(X,r,\tilde\Phi)$, acting in $\GH(X,\tilde\Phi)$.

As we have seen from the previous sections,
the symbol of the operator $H_2$ satisfies
\bee\label{eq:nn1}
h_2(\bx,\bxi)=|\bxi|^2+{w}_{\tilde k}(\bx,\bxi)=r^2+2r\lu\ba,\bn(\boldeta)\ru+|\ba|^2+{w}_{\tilde k}(\bx,\bxi)+|X|^2,
\ene
where
the Fourier coefficients of $w_{\tilde k}$ satisfy \eqref{esty}, \eqref{eq:newy}, \eqref{symbol}, \eqref{indsconst} and we denote, as usual,
$\boldeta=\bxi_{\GV^{\perp}}-\ba$. This immediately implies that
the operator $H_2(r)$ is monotonically increasing in $r$; in particular, all its eigenvalues
$\la_j(H_2(r))$ are increasing in $r$. Thus, the function $g(\bxi)$ (defined in Section 6) is an increasing function
of $r(\bxi)$ if we fix other coordinates of $\bxi$, so the equation
\bee\label{eq:tau}
g(\bxi)=\rho^2
\ene
has a unique solution if we fix the values $(X,\tilde\Phi)$; we
denote the $r$-coordinate of this solution by $\tau=\tau(\rho)=\tau(\rho;X,\tilde\Phi)$,
so that
\bee\label{eq:tau1}
g(\bxi(X,\tau,\tilde\Phi)=\rho^2.
\ene
By $\tau_0=\tau_0(\rho)=\tau_0(\rho;X,\tilde\Phi)$
we denote the value of $\tau$ for the unperturbed operator, i.e. $\tau_0$ is a unique
solution of the equation
\bee
|\bxi(X,\tau_0,\tilde\Phi)|=\rho.
\ene
Obviously, we can write down a precise analytic expression for $\tau_0$
(and we have done this in \cite{ParSht} in the two-dimensional case) and show that it allows an expansion in powers of
$\rho$ and $\ln\rho$, but we will not need it.
The definition of the sets $\hat A^{\pm}$ implies that the intersection
\bee
\hat A^+\cap\{\bxi(X,r,\tilde\Phi),\ r\in\R_+\}
\ene
consists of points with $r$-coordinate belonging to the interval $[\tau_0(\rho),\tau(\rho)]$
(where we assume the interval to be empty if $\tau_0>\tau$). Similarly, the intersection
\bee
\hat A^-\cap\{\bxi(X,r,\tilde\Phi),\ r\in\R_+\}
\ene
consists of points with $r$-coordinate belonging to the interval $[\tau(\rho),\tau_0(\rho)]$.
Therefore,
\bee
\hat A^+(\Bxi_p)=\{\bxi=\bxi(X,r,\tilde\Phi),\ X\in\Omega(\GV),\tilde\Phi\in M_p,\ r\in[\tau_0(\rho;X,\tilde\Phi),\tau(\rho;X,\tilde\Phi)]\}
\ene
and
\bee
\hat A^-(\Bxi_p)=\{\bxi=\bxi(X,r,\tilde\Phi),\ X\in\Omega(\GV),\tilde\Phi\in M_p,\ r\in[\tau(\rho;X,\tilde\Phi),\tau_0(\rho;X,\tilde\Phi)]\}.
\ene
This implies that
\bee\label{eq:n4}
\bes
&\vol\hat A^+(\Bxi_p)-\vol\hat A^-(\Bxi_p)=
\int_{\Omega(\GV)}dX\int_{M_p}d\tilde\Phi\int_{\tau_0(\rho;X,\tilde\Phi)}^{\tau(\rho;X,\tilde\Phi)}r^{K}dr\\
&=(K+1)^{-1}\int_{M_p}d\tilde\Phi\int_{\Omega(\GV)}dX(\tau(\rho;X,\tilde\Phi)^{K+1}-\tau_0(\rho;X,\tilde\Phi)^{K+1}).
\end{split}
\ene

Obviously, it is enough to compute the part of \eqref{eq:n4} containing $\tau$, since the second
part (containing $\tau_0$) can be computed analogously. We start by considering
\bee\label{eq:n5}
\int_{\Omega(\GV)}\tau(\rho;X,\tilde\Phi)^{K+1}dX.
\ene
First of all, we notice that if $\bxi,\boldeta\in\BXi(\GV)$ are equivalent points then, according to Lemma \ref{lem:Upsilon},
all vectors $\bth_j$ from the definition \ref{reachability:defn} of equivalence belong to $\GV$. This naturally leads to
the definition of equivalence for projections $\bxi_{\GV}$ and $\boldeta_{\GV}$. Namely, we say that two points $\bnu$ and
$\bmu$ from $\Omega(\GV)$ are $\GV$-equivalent (and write $\bnu\leftrightarrow_{\GV}\bmu$) if $\bnu$ and $\bmu$ are equivalent
in the sense of Definition \ref{reachability:defn} with additional requirement that all $\bth_j\in\GV$. Then
$\bxi\leftrightarrow\boldeta$ implies $\bxi_{\GV}\leftrightarrow_{\GV}\boldeta_{\GV}$. For $\bnu\in\Omega(\GV)$ we denote by
$\BUps_{\GV}(\bnu)$ the class of equivalence of $\bnu$ generated by $\leftrightarrow_{\GV}$. Then $\BUps_{\GV}(\bxi_{\GV})$
is a projection of $\BUps(\bxi)$ to $\GV$ and is, therefore, finite.

Denote by $\CM_{\GV}$ the quotient space $\CM_{\GV}:=\Omega(\GV)/\leftrightarrow_{\GV}$.
Since $\BUps_{\GV}(\bnu)$ is a finite set for each $\bnu\in\Omega(\GV)$,
there is a natural measure on $\CM_{\GV}$ generated
by the Lebesgue measure on $\Omega(\GV)$. Therefore, we can re-write \eqref{eq:n5} as
\bee\label{eq:n6}
\int_{\CM_{\GV}}\sum_{X\in\BUps_{\GV}(\bnu)}\tau(\rho;X,\tilde\Phi)^{K+1}d\bnu
\ene
and try to compute
\bee\label{eq:n7}
\sum_{X\in\BUps_{\GV}(\bnu)}\tau(\rho;X,\tilde\Phi)^{K+1}.
\ene

Let us denote by $S=S(r)$ the operator with symbol $2r\lu\ba,\bn(\boldeta)\ru+|\ba|^2+{w}_{\tilde k}(\bx,\bxi)+|X|^2$
acting in $\GH(X,\tilde\Phi)$, so that
$H_2(r)=r^2 I+S(r)$.
\ber\label{analytic}
We always assume that $\bxi\in\CA$, so that $0.7\rho_n\le|\bxi|\le 5\rho_n$ and all functions $e_{\bth}(\bxi+\bphi)$ from \eqref{eq:newy}--\eqref{indsconst} are equal to $1$.
Note that if $\bth\in\Bth_{\tilde k}$, $\bphi\in\Bth_{\tilde k}$, and $\bth\not\in\GV$, then (see Lemma \ref{lem:products} and \eqref{phizeta:eq})
$\varphi_{\bth}(\bxi+\bphi)=1$. This means that all cut-off functions from \eqref{eq:newy}--\eqref{indsconst} are equal
to $1$ unless $\bth\in\GV$. If, on the other hand, $\bth\in\GV$, then $\varphi_{\bth}(\bxi+\bphi)$
depends only on the projection $\bxi_{\GV}$ and thus is a function only of the coordinates $X$. Thus, equations \eqref{eq:newy}--\eqref{indsconst} show that $H_2(r)$ depends on $r$ analytically, so we can and will consider the family
$H_2(z)$ with complex values of the parameter $z$.
\enr

Formulas \eqref{eq:newy}, \eqref{symbol} imply
\bee\label{eq:newS1}
||S(r)||\ll\rho_n^{1+\al_{d}+0+},\ \ ||S'(r)||\ll \rho_n^{\al_{d}+0+},
\ene
and
\bee\label{eq:newS2}
||\frac{d^l}{dr^l}S(r)||\ll \rho_n^{-l},\ \ l\ge 2.
\ene
Let $\gamma:\ \{|z-\rho|=\rho_n/8\}$ be a circle in the complex plane going in the positive direction.
Then for $\rho\in I_n$ all $\tau(\rho;X,\tilde\Phi)$ lie inside $\gamma$. It is not hard to see that estimates \eqref{eq:newS1} and \eqref{eq:newS2} hold inside and on $\gamma$ (indeed, formulas \eqref{eq:newy}--\eqref{indsconst} give matrix elements of $S(z)$ in an orthonormal
basis even for complex $z$).

A version of the Jacobi's formula states
that for any differentiable invertible matrix-valued function $F(z)$ we have
$$
\tr[F'(z)F^{-1}(z)]=(\det[F(z)])'(\det[F(z)])^{-1}
$$
(it can be proved, for example, using the expansion of the determinant along rows and the induction in the size of $F$)
Let $\#(S,\gamma)$ be the total number of zeros (counting multiplicity) of $\det[S(z)+z^2I-\rho^2I]$ inside $\gamma$. We have
\bee\label{eq:residues0}
\bes
&\#(S,\gamma)=\frac{1}{2\pi i}\oint_\gamma (\det[S(z)+z^2I-\rho^2I])'(\det[S(z)+z^2I-\rho^2I])^{-1}dz\\
&=\frac{1}{2\pi i}\oint_\gamma \tr[(2zI+S'(z))(S(z)+z^2I-\rho^2I)^{-1}]dz=\tr[(1+O(\rho_n^{\alpha_{d}-1+0+}))I],
\end{split}
\ene
where $I=I_{\BUps_{\GV}(\bnu)}$.
Since $\card{\BUps_{\GV}(\bnu)}\ll\rho_n^{(d-1)\alpha_{d-1}+0+}$ by Lemma \ref{lem:finiteBUps} and $d\al_d<1$, we conclude that
there are precisely  $\card{\BUps_{\GV}(\bnu)}$ zeros (counting multiplicities) of $\det[S(z)+z^2I-\rho^2I]$ inside $\gamma$,
and thus the points
$\tau(\rho;X,\tilde\Phi)$ are the only zeros of $\det[S(z)+z^2I-\rho^2I]$ inside $\gamma$.

Then we have by the residue theorem:
\bee\label{eq:residues}
\bes
&\sum_{X\in\BUps_{\GV}(\bnu)}\tau(\rho;X,\tilde\Phi)^{K+1}\\
&=\frac{1}{2\pi i}\oint_\gamma z^{K+1}(\det[S(z)+z^2I-\rho^2I])'(\det[S(z)+z^2I-\rho^2I])^{-1}dz\\
&=\frac{1}{2\pi i}\oint_\gamma \tr[z^{K+1} (2zI+S'(z))(S(z)+z^2I-\rho^2I)^{-1}]dz\\
&=\frac{1}{2\pi i}\oint_\gamma \tr[(2z^{K+2}I+z^{K+1}S'(z))(z^2-\rho^2)^{-1}\sum_{l=0}^\infty (-1)^lS^l(z)(z^2-\rho^2)^{-l}]dz\\
&=\frac{1}{2\pi i}\sum_{l=0}^\infty(-1)^l\oint_\gamma \tr[ (2z^{K+2}I+z^{K+1}S'(z))S^l(z)(z-\rho)^{-(l+1)}(z+\rho)^{-(l+1)}]dz\\
&=\sum_{l=0}^\infty\frac{(-1)^l}{l!}\tr\frac{d^l}{dr^l}[(2r^{K+2}I+r^{K+1} S'(r))S^l(r)(r+\rho)^{-(l+1)}]\bigm|_{r=\rho}.
\end{split}
\ene

Formula \eqref{eq:n4} shows that in order to compute the contribution to the density of states
from $\Bxi(\GV)_p$, we need to integrate the RHS of \eqref{eq:residues} against $dX$ (or rather $d\bnu$) and $d\tilde \Phi$. We are going to integrate against $d\tilde \Phi$ first. We will prove that this integral is a convergent
series of products of powers of $\rho$ and $\ln\rho$. The coefficients in front of all terms will be bounded functions
of $X$, so afterwards we will just integrate these coefficients to obtain the desired asymptotic expansion.


Let us discuss how the RHS of \eqref{eq:residues} depends on the coordinates $X$ and $\tilde\Phi$ (or rather $\Phi$). Equations
\eqref{eq:newy}--\eqref{indsconst}, Lemma \ref{lem:products} and Remark \ref{analytic} show that the RHS of \eqref{eq:residues} is a sum of
terms of the following form:
\bee\label{eq:terms}
C\rho^{p}f_1(X)f_2(\Phi)f_3(X;\rho;\Phi).
\ene
Here, $f_1$ is a uniformly bounded function of $X$ coordinates only. It consists of contributions from
the cut-off functions $\varphi_{\bth}$ with $\bth\in\GV$ and from the terms in \eqref{symbol}, \eqref{indsconst} corresponding
to $\bth'_j, \bth_j''\in\GV$. The function $f_2(\Phi)$ is a product of powers of $\{\sin\Phi_q\}$. This
function 
comes from differentiating
\eqref{eq:innerproduct1} with respect to $r$. Finally, $f_3$ is of the following form:
\bee\label{eq:f3}
f_3(X;\rho;\Phi)=\prod_{t=1}^T
(l_t
+\rho\sum_{q} b_q^t\sin(\Phi_q))^{-k_t
}.
\ene
This function corresponds to the negative powers of inner products $\lu\bxi,\bth_t\ru$ given by
Lemma \ref{lem:products}, part (ii).
Here, $\{b_q^t\}$ are coefficients in the decomposition $(\bth_t)_{\GV^{\perp}}=\sum_q b_q^t\tilde \bmu_q$;
recall that these numbers are all of the same sign and satisfy \eqref{eq:n10}. Without loss of
generality we will assume that all $b_q^t$ are non-negative. 
The number $l_t=l(b_1^t,\dots,b_{K+1}^t):=\lu X,(\bth_t)_{\GV}\ru+L_{m+1}\sum_{q}b_q^t$ satisfies
$\rho_n^{\al_{m+1}}\rho_n^{0-}\ll l_t\ll  \rho_n^{\al_{m+1}}\rho_n^{0+}$, since our
assumptions imply $|\lu X,\bth_{\GV}\ru|\ll\rho_n^{\al_m}$. This number depends on $X$,
but not on $\Phi$ or $\rho$. The number $k_t=k(b_1^t,\dots,b_{K+1}^t)$ is positive, integer, and independent of $\bxi$. Our next objective is to compute the integrals
of \eqref{eq:terms} over the domain $\{\tilde\Phi\in M_p\}$ and prove that these integrals enjoy
asymptotic behaviour \eqref{eq:main_lem1} with uniformly bounded coefficients (as functions of $X$).
The calculations will be rather messy technically, although the main ideas of computing them are not
too difficult.


\subsection{Computing the model integral}
Before computing the integral of \eqref{eq:terms}, we will deal with a simpler integral
\begin{equation}\label{eq:in1}
J_K:=\int_{0}^{\gamma}\int_0^{\hat\Phi_K}\dots\int_{0}^{\hat\Phi_2}\frac{\hat\Phi_1^{n_1}\dots \hat\Phi_K^{n_K}\,d\hat\Phi_1\dots d\hat\Phi_K}{\prod_{t=1}^T (l_t+\rho\sum_{j=1}^K b_j^t \hat\Phi_j)^{k_t}
(c_t+\sum_{j=1}^{K} \tilde b_j^t\hat\Phi_j)^{k'_t}
}
\end{equation}
and then we will discuss how to reduce our initial integral to \eqref{eq:in1}. Here, $n_j,k_t,k'_t\in(\N\cup\{0\})$,
$\gamma\le 1$,
$\rho_n^\beta\ll\rho_n^{\al_1+0-}\ll l_t\ll\rho_n^{\al_d+0+}\ll\rho_n^{1/2}$,
$0<c_t\ll\rho_n^{1/2}$,
$\rho_n^{-\delta_0}\ll b_j^t\ll \rho_n^{\delta_0}$, and $\rho_n^{-\delta_0}\ll \tilde b_j^t\ll \rho_n^{\delta_0}$, where $\delta_0>0$ is sufficiently small (for the sake of definiteness, we put  $\delta_0:=\frac{1}{3^d 3000}$; obviously, we assume that these inequalities hold only for non-zero
values of $b_j^t$ and $\tilde b_j^t$).
We introduce the following notation:
\bee\label{eq:notation}
P:=\sum_j n_j, \ \ \ \ \ \ Q:=\sum_t k_t,\ \ \ \ \ \
Q':=\sum_t k'_t
\ene
and we sometimes will denote the integral \eqref{eq:in1} as $J_K(P,Q,Q')$.
We will also need the auxiliary positive numbers $p_j,\ q_j,\
j=0,\dots,d$, defined by
$$
q_d=\frac{1}{3^d 300},\ \ \ p_j=q_j+\frac{1}{3^d 300},\ \ \ q_{j-1}=q_j+p_j+\frac{1}{3^d 300}=2p_j.
$$
Obviously, $p_0<1/100$. 
\bel\label{lem:integral}
Assume that $c_t\gg \rho_n^{-q_K}$. Then we have:
\begin{equation}\label{eq:JK}
J_K(P,Q,Q')=
\sum_{q=0}^K(\ln({\rho}))^q\sum_{p=0}^\infty
e(p,q;P,Q,Q'){\rho}^{-p},
\ene
where
\bee\label{eq:estimatee}
|e(p,q;P,Q,Q')|\ll\rho_n^{(2/3-p_K)p}\rho_n^{-Q\beta}2^{Q'}\prod_{t=1}^T c_t^{-k_t'}.
\end{equation}
These estimates are uniform in the following regions of variables:
\bee\label{eq:range}
\bes
&0\leq\gamma\le 1,\,
\rho_n^{\beta}\ll l_t\ll\rho_n^{1/2},\,
\rho_n^{-q_K}\ll c_t\ll\rho_n^{1/2},\\
&
\rho_n^{-\delta_0}\ll b_j^t\ll \rho_n^{\delta_0},\, \rho_n^{-\delta_0}\ll \tilde b_j^t\ll \rho_n^{\delta_0},\, \rho_n^{2/3-q_K}<{\rho}.
\end{split}
\ene
\enl
\ber
The estimates \eqref{eq:estimatee} are more natural than they may look. Indeed, each time we run the
inductive argument in the proof (i.e. each time we increase $K$), we apply the geometric series expansion, which results in a slight worsening of the
estimates. This accounts for the need to have $p_K$ in the exponent of $\rho_n$.
\enr
\bep
The proof will go by induction in $K$. The base of induction ($K=0$) is trivial (and the
case $K=1$ has been discussed in \cite{ParSht}). Suppose, we have proved this statement for
$K=S-1$, and let us prove it for $K=S$.

Step I. First, we consider the area where $\hat\Phi_S\geq\rho_n^{-p_S}$. We
do not change terms in the denominator of ${J}_K$ where $b_S^t=0$.
If $b_S^t\not=0$ then we proceed with the following transformations:
\begin{equation}\label{transone}
\begin{split}
&({\rho}\sum_j b_j^t \hat\Phi_j+l_t)^{-k_t}=({\rho}\sum_j b_j^t
\hat\Phi_j)^{-k_t}\left(1+\frac{l_t}{{\rho}\sum_j b_j^t
\hat\Phi_j}\right)^{-k_t}=\cr & ({\rho}\sum_j b_j^t
\hat\Phi_j)^{-k_t}\sum_{m=0}^\infty {{m+k_t-1}\choose
m}\left(\frac{-l_t}{{\rho}\sum_j b_j^t
\hat\Phi_j}\right)^m=\sum_{m=0}^\infty
C_m(k_t)\left(\frac{1}{{\rho}\sum_j b_j^t \hat\Phi_j}\right)^{m+k_t},
\end{split}
\end{equation}
where constants $C_m(k_t)$ satisfy the estimate
\bee\label{eq:estimateC}
|C_m(k_t)|\leq \rho_n^{m/2} {{m+k_t-1}\choose m}\le\rho_n^{m/2}2^{m+k_t-1}.
\ene
Now, we can move powers of ${\rho}$ out of the integral and denote $\tilde{c}_t:=b_S^t
\hat\Phi_S$. Obviously, $\tilde{c}_t$ satisfies \eqref{eq:range} with
$K=S-1$. For terms which do not contain ${\rho}$ we just denote
$\hat{c}_t:=c_t+\tilde b_S^t \hat\Phi_S$. Then, we can apply the induction
assumption for $K=S-1$. Corresponding coefficients will depend on
$\hat\Phi_S$ uniformly. 
As a result, we obtain 
the following expression as the contribution to $J_K$ from the region $\{\hat\Phi_S\geq\rho_n^{-p_S}\}$
(we denote $m:=m_1+\dots+m_T$ and assume for simplicity
that
$b_K^t\not=0$ for all $t$): 
\begin{equation}\label{eq:JK1}
\sum_{m_1=0}^{\infty}\dots\sum_{m_T=0}^\infty C_{m_1}(k_1)\dots C_{m_T}(k_T)\rho^{-m
-Q}
J_{S-1}(P,0,m
+Q+Q').
\ene
Of course, each time we write $J_{S-1}(P,0,m+Q+Q')$, it denotes a different integral of the form
\eqref{eq:in1}, but by the assumption of induction all of them satisfy  
\begin{equation}\label{eq:JK1n}
J_{S-1}(P,0,Q+m+Q')=
\sum_{q=0}^{S-1}(\ln({\rho}))^q\sum_{p=0}^\infty
\check e(p,q;P,0,Q+m+Q'){\rho}^{-p}
\ene
with
\bee\label{eq:estimatee1n}
|\check e(p,q;P,0,Q+m+Q')|\ll\rho_n^{(2/3-p_{S-1})p}\rho_n^{(Q+m)q_0}2^{Q+m+Q'}\prod_{t=1}^T c_t^{-k_t'}
\end{equation}
(when we use \eqref{eq:estimatee} as the induction hypothesis,
we replace $\tilde{c}_t$ and $\hat{c}_t$  by the corresponding lower bounds
$\rho_n^{-q_0}$ and $c_t$).

Contribution from this region of integration into coefficients $e(p,q)$ of the integral
$J_S$
can therefore be
estimated from above by the following expression:
\begin{equation}
\begin{split}
&\sum\limits_{m=0}^{p-Q}
\rho_n^{(2/3-p_{S-1})(p-m-Q)}\rho_n^{(Q+m)q_{0}}\rho_n^{m/2}2^{m+Q-T}2^{Q+m+Q'}\left(\prod_{t=1}^T c_t^{-k_t'}\right)
\sum\limits_{m_1+\dots+m_T=m,\ m_j\geq0}1\leq\\
&\rho_n^{(2/3-p_S)p}\rho_n^{-Q\beta}2^{Q'}\left(\prod_{t=1}^T c_t^{-k_t'}\right)\times\\ &\rho_n^{Q\beta+Qq_{0}-(2/3-p_{S-1})Q}2^{2Q-T}\sum\limits_{m=0}^\infty
\rho_n^{mq_0+m/2-(2/3-p_{S-1})m}2^{2m}2^{m+T-1}\ll\\
&\rho_n^{(2/3-p_S)p}\rho_n^{-Q\beta}2^{Q'}\prod_{t=1}^T c_t^{-k_t'}.
\end{split}
\end{equation}
Notice that at this step we have $S-1$ as the
largest power of $\ln({\rho})$.

Step II. From now we are in the area $\hat\Phi_S\leq\rho_n^{-p_S}$. Then we
can transform all terms
not containing ${\rho}$ in the denominator of ${J}_K$:
\begin{equation}\label{transtwo}
(\sum_j \tilde b_j^t \hat\Phi_j+c_t)^{-k'_t}=(c_t)^{-k'_t}\left(1+\frac{\sum_j
\tilde b_j^t \hat\Phi_j}{c_t}\right)^{-k'_t}=\sum_{m=0}^\infty C'_m(k'_t)\left(\sum_j
\tilde b_j^t \hat\Phi_j\right)^{m},
\end{equation}
where
\begin{equation}\label{constraz1}
|C'_m(k'_t)|\leq c_t^{-k_t'-m}2^{m+k'_t-1}\leq\rho_n^{q_S m}c_t^{-k_t'}2^{m+k'_t-1}.
\end{equation}
Then, only terms with ${\rho}$ are left in the denominator. We
change variables $x_j:=\hat\Phi_j{\rho}$ and obtain the integral
\bee\label{eq:SM1}
\rho^{-P-S}\int_{0}^{{\rho}\rho_n^{-p_S}}\int_0^{x_S}\dots\int_{0}^{x_2}\frac{x_1^{n_1}\dots
x_S^{n_S}\,dx_1\dots dx_S} {\prod_{t} (l_t+\sum_{j=1}^S b_j^t
x_j)^{k_t}}\prod_t\sum_{m_t=0}^\infty \rho^{-m_t}C'_{m_t}(k'_t)\left(\sum_{j=1}^S
\tilde b_j^t x_j\right)^{m_t}.
\ene
First, we consider the integral along the region where $0\leq
x_S\leq\rho_n^{2/3-q_{S-1}}$. Obviously, the corresponding contribution to $J_S$ can be
computed as
$$
{\rho}^{-P-S}\sum\limits_{m=0}^\infty{\rho}^{-m}{\tilde C}_m,
$$
where
$$
|{\tilde C}_m|\leq
\rho_n^{(2/3-q_{S-1})(P+S+m)}\rho_n^{\delta_0
m}S^{m}\rho_n^{-Q\beta}
2^{m+T-1}2^{m+Q'-T}\rho_n^{q_S m}\prod_{t=1}^T c_t^{-k_t'}.
$$
Thus, if we put $m=p-P-S$ we obtain the following estimate for the
coefficient in front of ${\rho}^{-p}$ (notice that
$p_S<q_{S-1}-q_S-2\delta_0$ for $S\geq1$):
$$
\rho_n^{(2/3-p_S)p}\rho_n^{-Q\beta}\rho_n^{-(P+S)q_S}2^{Q'}\prod_{t=1}^T c_t^{-k_t'}.
$$

Step III. The case when $x_S\in
(\rho_n^{2/3-q_{S-1}},{\rho}\rho_n^{-p_S})$. Once again if
$b_S^t=0$ then we leave such terms unchanged. If $b_S^t\not=0$ and
thus $b_S^t\geq\rho_n^{-\delta_0}$, we perform the following transform to \eqref{eq:SM1} (cf. \eqref{transone}):
\begin{equation}\label{transthree}
(l_t+\sum\limits_{j=1}^{S}b_j^t x_j)^{-k_t}=\sum_{m=0}^\infty
C_m(k_t)\left(\frac{1}{\sum_j b_j^t x_j}\right)^{m+k_t}
\end{equation}
and introduce new variables $z_j:=x_j/x_S,\ j=1,\dots,S-1$. Thus, we reduce the problem to the integrals of the following form:
$$
\int_{\rho_n^{2/3-q_{S-1}}}^{{\rho}\rho_n^{-p_S}}\int_0^{1}\int_0^{z_{S-1}}\dots\int_{0}^{z_2}\frac{z_1^{\tilde{n}_1}\dots
z_{S-1}^{\tilde{n}_{S-1}}x_S^{\tilde{n}_S}\,dz_1\dots dz_{S-1}dx_S} {\prod_t
(l_t+x_S\sum_{j=1}^{S-1} b_j^t z_j)^{\tilde{k}_t}(b_S^t
x_S+x_S\sum_{j=1}^{S-1} b_j^t z_j)^{\tilde{k}'_t}}.
$$
Now we can remove $x_S$ from the second bracket in the denominator and then
apply the induction assumption for the internal $S-1$ integrals (i.e. the integrals against
$dz_1\dots dz_{S-1}$ with
$c_t:=b_S^t$ and ${\rho}:=x_S$. This induction assumption guarantees that these internal integrals
can be expressed as a series in powers of $x_S$ and $\ln x_S$, with the biggest power of $\ln x_S$ being $S-1$.
Then, we multiply this expansion by
a (possibly negative) power of $x_S$ and integrate the product against $dx_S$. As a result, we obtain a decomposition
\eqref{eq:JK} of $J_S$, with the biggest power of $\ln\rho$ being equal to $S$. The estimate of the contribution of Step III to the coefficients $e(p,q)$ is similar (but rather more tedious) to the estimates in the first two steps, and we will
skip it.
\enp

\subsection{Reduction to the model integral}
Now we will discuss how to deal with our initial integral
\begin{equation}\label{eq:in2}
\hat J_K:=\int\limits_{M_p}\frac{(\sin\Phi_1)^{n_1}\dots (\sin\Phi_K)^{n_K}(\sin\Phi_{K+1})^{n_{K+1}}\,d\tilde\Phi}{\prod_{t=1}^T (l_t+\rho\sum_{j=1}^{K+1} b_j^t \sin\Phi_j)^{k_t}
}.
\end{equation}
The main problem with reducing the integral along $M_p$ (or even along
$M_p\cap\{\Phi_1\le\dots\le\Phi_{K}\le\Phi_{K+1}\}$) to the model integral \eqref{eq:in1} is the limits of integration: the upper limit of integration against $d\Phi_K$ is not a constant (since the collection of points where $\Phi_K=\Phi_{K+1}$
has variable coordinate $\Phi_K$). In order to rectify this, we define
\bee\label{eq:hatM}
\hat M:=\{\sin\Phi_1\le\dots\le\sin\Phi_{s-1}\le\rho_n^{-p_d},\  \rho_n^{-p_d}\le\sin\Phi_{q}, \ q=s,\dots,K+1\}.
\ene
It is clear that $M_p$ can be represented as a union of several domains of this type. Lemma \ref{lem:anglebelow} shows that
we always have at least one `large' variable in $\hat M$, i.e. $s\le K+1$.
We also introduce the `spherical' coordinates
in the $(\Phi_{s},\dots,\Phi_{K+1})$-subspace: we put
\bee\label{eq:spherical}
\bes
\sin\Phi_j&=\hat\Phi_j,\ \ \ j=1,\dots,s-1,\\
\sin\Phi_{s}-\rho_n^{-p_d}&=\hat r\cos\hat\Phi_{s},\\
\sin\Phi_{s+1}-\rho_n^{-p_d}&=\hat r\sin\hat\Phi_{s}\cos\hat\Phi_{s+1},\\
&\dots\\
\sin\Phi_{K}-\rho_n^{-p_d}&=\hat r\sin\hat\Phi_{s}\sin\hat\Phi_{s+1}\dots\sin\hat\Phi_{K-1}\cos\hat\Phi_{K},\\
\sin\Phi_{K+1}-\rho_n^{-p_d}&=\hat r\sin\hat\Phi_{s}\sin\hat\Phi_{s+1}\dots\sin\hat\Phi_{K-1}\sin\hat\Phi_{K},
\end{split}
\ene
so that $(\hat r,\hat\Phi_{1},\dots,\hat\Phi_{K})$ are the new coordinates. Since only $K$ of the coordinates
$\Phi_{j}$ were independent, we can consider the new variables $(\hat\Phi_{1},\dots,\hat\Phi_{K})$
as independent (and $\hat r$ as a function of the independent variables). We remind that (see Lemma~\ref{lem:anglebelow}) we always have $\max_j\sin\Phi_{j}\gg\rho_n^{0-}$ and thus
\bee\label{hatrest}
\rho_n^{0-}\ll\hat r\ll 1.
\ene

The point in introducing these variables is that the limits
of integration over $\hat M$ become simple:
\bee
\int_{0}^{\pi/2}d\hat\Phi_K\int_{0}^{\pi/2}d\hat\Phi_{K-1}\int_{0}^{\pi/2}d\hat\Phi_{s}\int_{0}^{\rho_n^{-p_d}}
d\hat\Phi_{s-1}\int_{0}^{\hat\Phi_{s-1}}d\hat\Phi_{s-2}\dots\int_{0}^{\hat\Phi_2}d\hat\Phi_1.
\ene

When we insert the values of $\sin\Phi_q$, $q=1,\dots,K+1$ given by \eqref{eq:spherical} into \eqref{eq:odin}, we obtain the
quadratic equation for finding $\hat r$:
\bee\label{eq:quadraticr}
\hat A_s{\hat r}^2+2\hat B_s\hat r+(\hat C_s-1)=0,
\ene
where
\bee
\bes
\hat A_s&=\sum_j(a_{js}\cos\hat\Phi_{s}+a_{j\,s+1}\sin\hat\Phi_{s}\cos\hat\Phi_{s+1}+\dots\\
&+a_{j\,K+1}\sin\hat\Phi_{s}\sin\hat\Phi_{s+1}\dots\sin\hat\Phi_{K-1}\sin\hat\Phi_{K})^2>0,
\end{split}
\ene
\bee
\bes
\hat B_s&=\sum_j(\sum_{q=1}^{s-1}a_{jq}\hat\Phi_q+\rho_n^{-p_d}\sum_{q=s}^{K+1}a_{jq})
(a_{js}\cos\hat\Phi_{s}+a_{j\,s+1}\sin\hat\Phi_{s}\cos\hat\Phi_{s+1}+\dots\\
&+a_{j\,K+1}\sin\hat\Phi_{s}\sin\hat\Phi_{s+1}\dots\sin\hat\Phi_{K-1}\sin\hat\Phi_{K}),
\end{split}
\ene
and
\bee
\hat C_s=\sum_j(\sum_{q=1}^{s-1}a_{jq}\hat\Phi_q+\rho_n^{-p_d}\sum_{q=s}^{K+1}a_{jq})^2>0.
\ene
Therefore, we have
\bee\label{eq:quadratic11}
\hat r=\frac{-\hat B_s\pm\sqrt{\hat B_s^2-\hat A_s\hat C_s+\hat A_s}}{\hat A_s}.
\ene
Note that Cauchy-Schwarz inequality implies $\hat A_s\hat C_s\ge \hat B_s^2$. Since equation \eqref{eq:quadraticr} obviously has at least one real solution, we have $0\le \hat A_s+\hat B_s^2-\hat A_s\hat C_s\le \hat A_s$.
Next, due to Lemma~\ref{lem:Al} we have
\bee\label{hatBsCs}
|\hat B_s|\ll\rho_n^{-p_d+0+},\ \ \ |\hat C_s|\ll\rho_n^{-2p_d+0+}.
\ene
Then using \eqref{hatrest}, \eqref{eq:quadraticr}, and \eqref{hatBsCs} we get
\bee\label{hatAs}
\rho_n^{0+}\gg\frac{2}{{\hat r}^2}\geq\hat A_s\geq \frac{1}{2{\hat r}^2}\gg 1.
\ene
Since $\hat r$ is positive we obviously have
$$
\hat r=\frac{-\hat B_s+\sqrt{\hat B_s^2-\hat A_s\hat C_s+\hat A_s}}{\hat A_s}=\frac{-\hat B_s+\sqrt{\hat A_s}
\sqrt{1-(\hat C_s-\hat B_s^2{\hat A_s}^{-1})}}{\hat A_s},
$$
and thus $\hat r$ is {\it analytic} with respect to $\hat B_s,\ \hat C_s$, i.e. with respect to all $\hat\Phi_j,\ j=1,\dots,s-1$, {\it uniformly} in $\hat{\Phi}_l,\ l=s,\dots,K$, inside $\hat M$. It is easy to see from \eqref{eq:etajdash}, \eqref{eq:surfaceelement} and \eqref{eq:spherical} that the same is true for the Jacobian $\frac{\partial(\tilde\Phi)}{\partial(\hat\Phi_{1},\dots,\hat\Phi_{K})}$.

We also notice that, if we denote
\bee\label{hatr}
\hat r_0:=\hat r-{\hat A_s}^{-1/2},
\ene
then $\hat r_0$ satisfies the same analyticity properties as $\hat r$ and $\hat r_0=O(\rho_n^{-p_d+0+})$.

Thus, we arrive at the integrals of the following form:
\begin{equation}\label{eq:in3}
\begin{split}
&\int_{0}^{\pi/2}d\hat\Phi_K\int_{0}^{\pi/2}d\hat\Phi_{K-1}\dots\int_{0}^{\pi/2}d\hat\Phi_{s}\times\cr
&\int_{0}^{\rho_n^{-p_d}}
d\hat\Phi_{s-1}\int_{0}^{\hat\Phi_{s-1}}d\hat\Phi_{s-2}\dots\int_{0}^{\hat\Phi_2}d\hat\Phi_1
\frac{F(\hat\Phi_{s},\dots,\hat\Phi_K)\hat\Phi_1^{n_1}\dots\hat\Phi_{s-1}^{n_{s-1}}}
{\prod_{t=1}^T (l_t+\rho S(\hat\Phi_{1},\dots,\hat\Phi_K))^{k_t}}.
\end{split}
\end{equation}
Here the function $F(\hat\Phi_{s},\dots,\hat\Phi_K)$ is uniformly bounded with respect to $\hat\Phi_{s},\dots,\hat\Phi_K$ in $\hat M$ and
\bee\label{Sid}
S=S(\hat\Phi_{1},\dots,\hat\Phi_K)
:=\sum_{j=1}^{s-1} b_j^t \hat\Phi_j+\sum_{j=s}^{K+1} b_j^t\rho_n^{-p_d}+({\hat A_s}^{-1/2}+\hat r_0) \tilde{F}(\hat\Phi_{s},\dots,\hat\Phi_K),
\ene
where
$$
\tilde{F}:=b_{s}^t\cos\hat\Phi_{s}+b_{s+1}^t\sin\hat\Phi_{s}\cos\hat\Phi_{s+1}+\dots
+b_{K+1}^t\sin\hat\Phi_{s}\sin\hat\Phi_{s+1}\dots\sin\hat\Phi_{K-1}\sin\hat\Phi_{K}.
$$

Now we apply the construction from the STEP I of the proof of Lemma~\ref{lem:integral}.
We do not change terms $(l_t+\rho S)$ with $b_j^t=0$ for all $j=s,\dots,K+1$ (such terms are equal to $(l_t+\rho\sum_{j=1}^{s-1} b_j^t\hat\Phi_j)$). Otherwise, we write (cf. \eqref{transone})
$$
(l_t+{\rho}S)^{-k_t}=\sum_{m=0}^\infty
C_m(k_t)\left(\frac{1}{{\rho}S}\right)^{m+k_t}.
$$
It remains to notice that
$$
S^{-1}=(\sum_{j=1}^{s-1} b_j^t \hat\Phi_j+\sum_{j=s}^{K+1} b_j^t\rho_n^{-p_d}+{\hat A_s}^{-1/2} \tilde{F})^{-1}
\left(1+\frac{\hat r_0\tilde{F}}{\sum_{j=1}^{s-1} b_j^t \hat\Phi_j+\sum_{j=s}^{K+1} b_j^t\rho_n^{-p_d}+{\hat A_s}^{-1/2} \tilde{F}}\right)^{-1},
$$
and decompose the last expression using geometric progression. We remind (see \eqref{hatAs} and \eqref{hatr}) that
$\hat r_0{\hat A_s}^{1/2}\ll\rho_n^{-p_d+0+}$. Since $\hat r_0$ is analytic in $\hat\Phi_1,\dots,\hat\Phi_{s-1}$, we end up with the model integrals $J_{s-1}$ (with $c_t:=\sum_{j=s}^{K+1} b_j^t\rho_n^{-p_d}+{\hat A_s}^{-1/2} \tilde{F}$) uniformly depending on the parameters $\hat\Phi_s,\dots\hat\Phi_{K+1}$. Summing this and using Lemma \ref{lem:integral}, we have proved the following result.

\bel\label{lem:integral1}
We have:
\begin{equation}\label{eq:JK3}
\hat J_K=
\sum_{q=0}^K(\ln({\rho}))^q\sum_{p=0}^\infty
e(p,q){\rho}^{-p},
\ene
where
\bee\label{eq:estimatee1}
|e(p,q)|\ll\rho_n^{(2/3-p_K)p}\rho_n^{-Q\beta}.
\end{equation}
These estimates are uniform in the following regions of variables:
\bee\label{eq:range1}
\rho_n^{\beta}\ll l_t\ll\rho_n^{1/2},\ \ \rho_n^{-\delta_0}\ll b_j^t\ll \rho_n^{\delta_0},\ \ \rho_n^{2/3-q_K}<{\rho}.
\ene
\enl

Now Lemma \ref{lem:integral1}, Remark \ref{analytic}, and equation \eqref{eq:residues} show that the integral \eqref{eq:n4} admits decomposition of the form \eqref{eq:JK3} for $0.7\rho_n<\rho<5\rho_n$. This, together with equations \eqref{eq:n4}, \eqref{eq:n2},
\eqref{eq:densityh3}, Corollary \ref{cor:H1H2} and the observation that the number of different quasi-lattice subspaces $\GV$
is $\le\rho_n^{0+}$, completes the proof of Lemma \ref{main_lem} and, thus, of our main theorem in the case of all domains $M_p$ being simplexes. It remains to discuss
how to reduce the case of general region $\Xi(\GV)_p$ to the case of a simplex.

\section{Integration in non-simplex domains}

Now let us consider the case when the number $J_p$ of defining hyperplanes is bigger than $K+1$. Recall that we have
\bee\label{eq:Bxip11}
\Bxi(\GV)_p=\{\bxi\in\R^d,\ \bxi_{\GV}\in\Omega(\GV)\ \ \&\ \ \lu\bxi_{\GV^{\perp}},\tilde\bmu_j(p)\ru > L_{m+1},\ j=1,\dots,J_p\}.
\ene
In this section, we give only a brief description of how to prove the main statements, since the complete proof would be too long and tedious.
The complexity of the proof is mostly caused by the fact that we need to describe a procedure working in all dimensions. If one is interested
only in the cases $d=2$ or $d=3$, the proofs become substantially simpler (indeed, the case $d=2$ has already been proved, since
then we have $K=1$ or $K=0$, and any polyhedron in dimensions $0$ or $1$ is obviously a simplex).

Consider first the case when $\Bxi(\GV)_p$ is a `cone', i.e. there is a point $\ba=\ba(p)\in\GV^\perp$
such that $\lu\ba,\tilde\bmu_j(p)\ru = L_{m+1}$, $j=1,\dots,J_p$ (such point, if it exists, is always unique, and the existence of it is automatic in the
simplex case, i.e. when $J_p=K+1$). Then we can introduce the coordinates $(r,\tilde\Phi)$ by formulas \eqref{eq:r} and
\eqref{eq:tildePhi} with $\tilde\Phi\in M_p$, where $M_p$ is still given by \eqref{eq:Mp}.
\bel\label{lem:pc}
Suppose, $\bth\in\Bth_{\tilde k}$. Then we can write $\bth_{\GV^{\perp}}=\sum_{q}\tilde b_q\tilde\bmu_q(p)$, where either all $\tilde b_q$ are
non-positive, or all of them are non-negative (but such a decomposition is not necessarily unique).
\enl
\bep
Our assumptions imply that for each $\boldeta\in \tilde\Bxi(\GV)_p$ (where $\tilde\Bxi(\GV)_p$ is defined by \eqref{eq:Bxiptilde}) we have $\lu\boldeta,\bth_{\GV^{\perp}}\ru\ne 0$. Assume for
definiteness that $\lu\boldeta,\bth_{\GV^{\perp}}\ru> 0$. This property can be reformulated like this: whenever $\bz\in\GV^{\perp}$
is a vector with $\lu\bz,\bth_{\GV^{\perp}}\ru< 0$, there is at least one vector $\tilde\bmu_j(p)$ such that
$\lu\bz,\tilde\bmu_j(p)\ru< 0$. This, in turn, is equivalent to saying that whenever $\bz\in\GV^{\perp}$
is a vector with $\lu\bz,\bth_{\GV^{\perp}}\ru> 0$, there is at least one vector $\tilde\bmu_j(p)$ such that
$\lu\bz,\tilde\bmu_j(p)\ru> 0$. Consider the set $S:=\{\bz=\sum_{q}\tilde b_q\tilde\bmu_q(p), \tilde b_q\ge 0
\}$. We need
to prove that $\bth_{\GV^{\perp}}\in S$. Suppose not. Let $\bz_0$ be a nearest to $\bth_{\GV^{\perp}}$ point from $S$. Then
$\lu\bth_{\GV^{\perp}}-\bz_0,\bth_{\GV^{\perp}}\ru>0$, because otherwise the point
$|\bth_{\GV^{\perp}}|^2\lu\bz_0,\bth_{\GV^{\perp}}\ru^{-1}\bz_0\in S$ is closer to $ \bth_{\GV^{\perp}}$ than $\bz_0$.
Thus, there is a value of $j$, say $j=1$, so that
$\lu\bth_{\GV^{\perp}}-\bz_0,\tilde\bmu_1(p)\ru>0$. But then for sufficiently small $\epsilon>0$ the vector
$\bz_0+\epsilon\tilde\bmu_1(p)\in S$
is closer to $\bth_{\GV^{\perp}}$ than $\bz_0$. This contradiction proves our lemma.
\enp

This result shows that there is a big similarity between cases of the cone and the simplex. The only difference from the simplex case is that the number of sides of $M_p$ is now greater than $K+1$. This however means that
if we introduce the angular coordinates $\Phi$ in the same way as in Section 7, we will have difficulties trying to get rid of several of them.
Instead, we will follow a different strategy: we will cut the spherical polyhedron $M_p$ into several simplexes:
\bee\label{eq:qMp}
M_p=\sqcup_q {}^q M_p
\ene
 and then perform
integration over each simplex ${}^q M_p$ in the same way as we did in sections 7--10. The only thing we need to make sure is that
the lengths of all sides (edges) of  ${}^q M_p$, as well as all non-zero angles between two sides of any dimensions of
${}^q M_p$, is $\gg\rho_n^{0-}$ (let us call this {\it angles and sides property}). We, however, know what that angles and sides property
holds for the original polyhedron $M_p$ because of Lemma
\ref{lem:newangles} (strictly speaking, Lemma \ref{lem:newangles} was proved for simplexes, rather than for cones, but the proof is the same). Thus, the only problem we face is how to cut a polyhedron $M_p$ into simplexes ${}^q M_p$ without drastically decreasing sides or angles.
We do it by induction in $K$. For $K=1$ the statement is obvious (each $1$-dimensional connected polyhedron is a simplex, i.e. an interval). Assume that
$K$ is arbitrary. Also assume for simplicity that $M_p$ is not a spherical, but a Euclidean polyhedron (we can achieve this by
projecting $M_p$ onto any hyperplane tangent to it; obviously, this projection keeps the angles and sides property invariant).

Step I. We find a simplex $\hat M_p\subset M_p$ satisfying the angles and sides property. To do this, we consider a ball centered at any vertex
$\bv$ of $M_p$ of radius $\gg\rho_n^{0-}$, but sufficiently small so that the intersection of this ball with $M_p$ is a cone. The intersection of the boundary of this ball with $M_p$ is a polyhedron $N_p$ of dimension $K-1$. Running the induction argument, we can find a $(K-1)$-dimensional simplex $\hat N_p\subset N_p$
satisfying the angles and sides property. Now we define $\hat M_p$ as a convex hull of $\bv$ and vertexes of $\hat N_p$. A straightforward  geometrical
argument implies that $\hat M_p$ satisfies the angles and sides property. In particular, the volume of $\hat M_p$ is
$\gg\rho_n^{0-}$.

Step II. We find a point $\boldeta^*\in M_p$ such that the distance from $\boldeta^*$ to each of the $(K-1)$-dimensional sides of $M_p$ is
$\gg\rho_n^{0-}$ (the distance to the side is the length of the perpendicular dropped to the hyperplane containing this side). This point can be chosen to be the center of gravity of $\hat M_p$. Indeed, the distance from $\boldeta^*$ to a $(K-1)$-dimensional side of $M_p$ is the average
of the distances from the vertexes of $\hat M_p$ to this side. Thus, we need to show that the distance from at least one of the vertexes of $\hat M_p$ to this side is $\gg\rho_n^{0-}$. But if this were not the case, then the breadth of  $\hat M_p$ in the direction orthogonal to that side were
$\ll\rho_n^{0-}$, and so the volume of $\hat M_p$ were $\ll\rho_n^{0-}$, which would contradict estimates from Step I.

Step III. Now we use the inductive assumption and cut each $(K-1)$-dimensional side of $M_p$ into simplexes. Taking convex hulls of $\boldeta^*$
with these simplexes, we obtain the required decomposition of $M_p$ into simplexes ${}^q M_p$. It is a geometric exercise to check that
such constructed simplexes ${}^q M_p$ satisfy the angles and sides property.

Let us denote by ${}^q\Bxi(\GV)_p$ the infinite cone with the vertex $\ba$ and a cross-section ${}^q M_p$, i.e.
\bee\label{eq:qXip}
{}^q\Bxi(\GV)_p:=\ba+\{\bxi\in\R^d, \bxi_{\GV}\in\Omega(\GV)\ \& \ \bn(\bxi_{\GV^{\perp}})\in {}^q M_p\}.
\ene
Then we obviously have
\bee\label{eq:qXip1}
\Bxi(\GV)_p=\sqcup_q {}^q \Bxi(\GV)_p.
\ene
Now let us discuss how to perform the integration over ${}^q \Bxi(\GV)_p$. Let us fix $q$ and $p$ and denote by
$\bnu_1,\dots,\bnu_{K+1}$ the interior unit normal vectors to the faces of ${}^q \Bxi(\GV)_p$.
We denote, as before, $\Phi_q:=\frac{\pi}{2}-\phi(\bxi_{\GV^\perp}-\ba,\bnu_q(p))$, $q=1,\dots,K+1$.
In order to perform the integration, we need to check that Lemma \ref{lem:products} is still valid in the cone case. So, let $\bth\in\Bth_{\tilde k}$. Applying Lemma \ref{lem:pc}, we deduce that
\bee\label{eq:n111}
\bth_{\GV^{\perp}}=\sum_{q}\tilde b_q\tilde\bmu_q(p),
\ene
where either all $\tilde b_q$ are
non-positive, or all of them are non-negative; assume for definiteness that all of them are non-negative. We also have (applying, for example, the same lemma) that each vector $\tilde \bmu_q(p)$ admits a decomposition
$\tilde \bmu_q(p)=\sum_{l=1}^{K+1}\hat b_{ql}\bnu_l$ with all coefficients $\hat b_{ql}$ being non-negative. Now we have (denoting, as usual,
$\boldeta:=\bxi_{\GV^{\perp}}-\ba$ and putting $b_l:=\sum_{q=1}^{J_p}\tilde b_q\hat b_{ql}\ge 0$, $l=1,\dots,K+1$):
\bee\label{eq:innerproduct3n}
\bes
&\lu\bxi,\bth\ru=\lu X,\bth_{\GV}\ru+\lu\ba,\bth_{\GV^{\perp}}\ru+\lu\boldeta,\bth_{\GV^{\perp}}\ru\\
&=\lu X,\bth_{\GV}\ru+\sum_{q=1}^{J_p} \tilde b_q\lu\ba,\tilde\bmu_q\ru+\sum_{l=1}^{K+1} b_l\lu\boldeta,\bnu_l\ru\\
&=\lu X,\bth_{\GV}\ru+\sum_{q} \tilde b_qL_{m+1}+r\sum_l b_l\sin\Phi_l.
\end{split}
\ene
The estimates $\rho_n^{0-}\le |b_q| \le \rho_n^{0+}$ can be proved in the same way as Lemma \ref{lem:coefficients}. Finally,
\eqref{eq:n111} implies that $\sum_{q} \tilde b_q\gg\rho_n^{0-}$ for any $\bth\not\in\GV$. Multiplying \eqref{eq:n111} by $\ba$, we deduce that $\sum_{q} \tilde b_q\ll\rho_n^{0+}$.

This finishes the proof for the cone case. Now let us discuss the general case.

Let $\ba$ be any point inside $\Bxi(\GV)_p$ such that
for all $j$ we have $L_{m+1}\le\lu\ba,\tilde\bmu_j(p)\ru \ll L_{m+1}\rho_n^{0+}$. For each $l=0,\dots,J_p$, we define
\bee
\bes
\Bxi(\GV)^l_p:=&\{\bxi\in\R^d,\ \bxi_{\GV}\in\Omega(\GV)\ \& \\
&\lu\bxi_{\GV^{\perp}},\tilde\bmu_j(p)\ru > \lu\ba,\tilde\bmu_j(p)\ru,\ j=1,\dots,l, \ \&\\
&\lu\bxi_{\GV^{\perp}},\tilde\bmu_j(p)\ru > L_{m+1},\ j=l+1,\dots,J_p
\}
\end{split}
\ene
and
\bee
\check \Bxi(\GV)^l_p:=\Bxi(\GV)^l_p\setminus \Bxi(\GV)^{l+1}_p.
\ene
Then we obviously have
\bee
\Bxi(\GV)_p=\Bxi(\GV)^{J_p}_p\sqcup(\sqcup_{l=0}^{J_p-1} \check \Bxi(\GV)^l_p )
\ene
(as usual, modulo boundary points). The domain $\Bxi(\GV)^{J_p}_p$ is a cone, so we already know how to deal with it. Now let us consider
$\check \Bxi(\GV)^l_p$ for some $l$. Let us introduce a coordinate $t=t(\bxi):=\lu\bxi,\tilde\bmu_{l+1}(p)\ru$. Then for
$\bxi\in\check \Bxi(\GV)^l_p$ we have $t\in[L_{m+1},\lu\ba,\tilde\bmu_{l+1}(p)\ru]$. For each $t\in[L_{m+1},\lu\ba,\tilde\bmu_{l+1}(p)\ru]$
we denote $O^l(t):=\{\bxi\in\check \Bxi(\GV)^l_p,\ \lu\bxi,\tilde\bmu_{l+1}(p)\ru=t\}$. It is easy to see that the domain
$O^l(t)$ is of the same type as the domain  $\Bxi(\GV)_p$, i.e. it can be written as
\bee\label{eq:Bxipn}
\{\bxi\in\R^d,\ \bxi_{\GV}\in\Omega(\GV), \lu\bxi,\tilde\bmu_{l+1}(p)\ru=t \ \ \&\ \ \forall j\ne l+1\  \ \lu\bxi_{\GV^{\perp}},\tilde\bnu_j\ru > s_j
\},
\ene
where $\tilde\bnu_j$ are, of course, normalized projections of some $\tilde\bmu_j(p)$ onto the plane orthogonal to $\tilde\bmu_{l+1}(p)$.
Our aim is to compute
\bee\label{eq:n3new}
\vol\hat A^+\cap\check \Bxi(\GV)^l-\vol\hat A^-\cap\check \Bxi(\GV)^l
\ene
(or at least prove that this expression admits an asymptotic expansion in powers of $\rho$ and $\ln\rho$). But \eqref{eq:n3new}
is obviously equal to
\bee\label{eq:n3new1}
\int_{L_{m+1}}^{\lu\ba,\tilde\bmu_{l+1}(p)\ru}\bigl(\vol(\hat A^+\cap O^l(t))-\vol(\hat A^-\cap O^l(t))\bigr)dt.
\ene
If $\dim O^l(t)=1$, or, more generally, if $O^l(t)$ is a simplex, then we can perform integration over $O^l(t)$ as described above,
since the formula \eqref{eq:innerproduct3n} would still be valid, in the sense that
\bee
\lu\bxi,\bth\ru=C(X,t)+r\sum_l b_l\sin\Phi_l,
\ene
where $L_{m+1}\rho^{0-}_n\ll C(X,t)\ll L_{m+1}\rho^{0+}_n$ and the coordinates $(r,\{\Phi_l\})$ are the shifted polar coordinates in  $O^l(t)$.
Results of Section 10 show that for each fixed $t$ the expression $\bigl(\vol(\hat A^+\cap O^l(t))-\vol(\hat A^-\cap O^l(t))\bigr)$ admits
the asymptotic expansion in powers of $\rho$ and $\ln\rho$, with coefficients being uniformly bounded in $t$ (and $X$). Now it remains to integrate this expansion against $dt$ (and $dX$). If $O^l(t)$ is a cone, we cut it onto simplexes as described earlier in this section,
and then integrate over each simplex separately. Finally, if $O^l(t)$ is not a cone, we continue the process of reducing dimension
until the dimension of $O^l(t)$ becomes equal one.

\end{document}